\numberwithin{equation}{section}
\numberwithin{figure}{section}
\numberwithin{table}{section}
 \theoremstyle{definition}
  \theoremstyle{definition}
  \theoremstyle{plain}
  \newtheorem{lemma}{\protect\lemmaname}
  \theoremstyle{plain}
  \newtheorem{proposition}{\protect\propositionname}
  \theoremstyle{remark}
  \theoremstyle{plain}
  \newtheorem{corollary}{\protect\corollaryname}
\numberwithin{equation}{section}
\numberwithin{figure}{section}
\numberwithin{table}{section}
\theoremstyle{plain}
\renewenvironment{proof}[1][\proofname]{\par
  \pushQED{\qed}%
  \normalfont \topsep6\p@\@plus6\p@\relax
  \list{}{%
    \settowidth{\leftmargin}{\itshape\proofname:\hskip\labelsep}%
    \setlength{\labelwidth}{0pt}%
    \setlength{\itemindent}{-\leftmargin}%
  }%
  \item[\hskip\labelsep\itshape#1\@addpunct{:}]\ignorespaces
}{%
  \popQED\endlist\@endpefalse
}
\date{} 
  \providecommand{\definitionname}{Definition}
  \providecommand{\examplename}{Example}
  \providecommand{\lemmaname}{Lemma}
  \providecommand{\propositionname}{Proposition}
  \providecommand{\remarkname}{Remark}
\providecommand{\corollaryname}{Corollary}
\begin{document}

\title{Tropical limit and a micro-macro correspondence in statistical physics}

\author{M. Angelelli}
\maketitle
\begin{center}
Department of Mathematics and Physics 
\par\end{center}

\begin{center}
``Ennio De Giorgi'', University of Salento and sezione INFN, 
\par\end{center}

\begin{center}
Lecce 73100, Italy 
\par\end{center}

\begin{abstract}

Tropical mathematics is used to establish a correspondence between certain microscopic and macroscopic objects in statistical models. Tropical algebra gives a common framework for macrosystems (subsets) and their elementary
constituent (elements) that is well-behaved with respect to composition. This kind of connection is studied with maps that preserve a monoid structure. The approach highlights an underlying order relation that is explored through the concepts of filter and ideal. Main attention is paid to asymmetry and duality between $\max$- and $\min$-criteria. Physical implementations are presented through simple examples in thermodynamics and non-equilibrium physics. The phenomenon of ultrametricity, the notion of tropical equilibrium and the role of ground energy in non-equilibrium models are discussed. Tropical symmetry, i.e. idempotence, is investigated.  

\end{abstract}

\vspace{2pc}
\noindent{\it Keywords}: Tropical limit, filter, monoid, ultrametric, non-equilibrium.


\section{Introduction}

The distinction between macroscopic and microscopic representations of phenomena is one of the fundamental problems in physics. This subject has generated many profound questions and techniques whose relevance goes beyond statistical physics. For example, the transition between the molecular dynamics and the human (thermodynamic) length scales is still investigated \cite{Bartlett1980,Posch1997,Crooks1999} and gave rise to the notion of statistical entropy, which has proved to be a fundamental tool in many areas of modern sciences \cite{Gray2011,Holzinger2014,Kahraman2016}. 

More broadly, the ``micro/macro'' paradigm involves many situations where different descriptions of a system and their relative complexity have concrete effects. This is also a practical issue, since complex systems are now pervasive in many branches of science \cite{Nicolis2012} and a deeper understanding of (dis-)similarities between elementary and emergent phenomena is a key point. In the cases when a collective behaviour is not reducible to its individual constituents, one can recognize complexity in the composition of the elementary entities. Hence, it is possible that a change in the composition rules affects the relative complexity. This approach can be used to highlight analogies between the micro- and the macro-sectors, rather than their differences. 

In the present work, we follow this path and concentrate on a correspondence between micro- and macro-physics starting from associative rules. In particular, we argue that tropical algebra \cite{MS2015} provides one with a common framework to deal with both these descriptions. Tropical limit is usually derived from the real or complex setting by means of the change of variable
\begin{equation}
|x|\mapsto\exp\left(\frac{X}{\varepsilon}\right)\label{eq: tropical change variable}
\end{equation}
that induces a tropical algebra in the limit $\varepsilon\rightarrow0$
\begin{equation}
X\oplus Y:=\lim_{\varepsilon\rightarrow0^{+}}\varepsilon\cdot\ln\left(e^{\frac{X}{\varepsilon}}+e^{\frac{Y}{\varepsilon}}\right).\label{eq: standard tropical limit sum}
\end{equation}
Elementary entities $X$ are combined through $\oplus$, which is a ``shadow'' of the usual composition for variables ${\displaystyle \exp\left(\frac{X}{\varepsilon}\right)}$ with exponential complexity in $X$. This is a hint on advantages of tropical limit: it forgets part of the information in a system in order to highlight an underlying structure, which is often more practical to manage and still non-trivial. 

Tropical limit in statistical physics was discussed in \cite{AK2015}. There, it has been argued that the Boltzmann constant
$k_{B}$ is a proper parameter to highlight a combinatorial skeleton
of some statistical models. In several ordinary cases, the tropical
and low-temperature limits coincide \cite{Sturmfels2004,Kapranov2011,Marcolli2014}.
More generally, phenomena such as exponential degenerations
\cite{Pauling1935,Diep2005,GutzowSchemler2009} and negative
and limiting temperatures \cite{Ramsey1956,Rumer1960} can be easily included in the limit $k_{B}\rightarrow 0$ and make it non-trivial. 
The analysis suggests that the tropical limit of such statistical models preserves
an associated relational structure, namely an order relation. Algebraic
methods for ordered sets and logic have recently been developed, see
e.g. \cite{GuidoToto2008}, and also tropical mathematics has benefited
from these tools \cite{LS2014}. 

Our aim is to deepen these concepts in order to establish
a correspondence between certain micro- and macro-systems in accordance with
tropical composition, hence the order structure. More specifically, we look at monoids, that are defined by a set $\Lambda$ and a composition $\oplus$ (associative operation) on $\Lambda$ with a distinguished element $\infty$ that is neutral for $\oplus$. The assumption of this simple algebraic structure emphasises the role of $\infty$, which is an extremum for the associated order relation. Cases when it is the only extremum are relevant in terms of symmetry breaking between dual orders. As we will see, this last point draws attention to the issue of the ground energy in general statistical models. 

For every tropical structure (\emph{idempotent monoid}) $(\Lambda,\oplus,\infty)$, one also gets
another tropical structure, that is the set $\mathcal{P}(\Lambda)$
of the subsets of $\Lambda$. The tropical operation on $\mathcal{P}(\Lambda)$
is the set-theoretic join $\cup$ (respectively, intersection $\cap$)
and the neutral element is $\emptyset$ (respectively, $\Lambda$).
If the elements of $\Lambda$ represent physical microscopic systems, subsets
of $\Lambda$ are ``macrosystems''. In this perspective,
we address the question of structure-preserving connections (i.e., monoid
homomorphisms) between a tropical system $\Lambda$ and the associated
power set $\mathcal{P}(\Lambda)$. 

For this purpose, the concept of \emph{filter} will be pivotal. Filters,
especially ultrafilters, play a fundamental part in mathematical logic
\cite{Mendelson2015}. Since basic rules of classical logic are
strictly related to probability axioms \cite{Jaynes2003}, it is
not surprising that a different way to deal with probabilities needs
a different logic. In our physical framework, this difference
between classical and ``tropical'' logic can be simply expressed
as a broken symmetry between disjunctions (joins, $\sup$, existential
quantifier $\exists$) and conjunctions (intersections, $\inf$, universal
quantifier $\forall$). This symmetry is characteristic of classical probability and Boolean algebras. For example, the principle of inclusion-exclusion 
\begin{equation}
p(A\cup B)=p(A)+p(B)-p(A\cap B)\label{eq: PIE}
\end{equation}
relates the probability of the disjunction of events $A$ and $B$
to the probabilities of the events themselves and of the conjunction.
The tropical approach leads one to get information on conjunction
or disjunction, depending on its presentation via the tropical sum $\oplus=\min$ or
$\oplus=\max$, but not both of them simultaneously. This broken
symmetry is algebraically translated in the absence of a subtraction and
this affects basic counting principles such as (\ref{eq: PIE}). On
the order-theoretical side, this process affects order-reversing dualities
(e.g. set complements) that preserve algebraic properties. In particular,
the dual of the neutral element is not an element of the algebra itself,
for instance $-\infty$ is an element of the $\max$-plus algebra $(\mathbb{R}\cup\{-\infty\},\max,-\infty)$ 
but $+\infty$ is not. 

This language can be effective in applications, with special attention to non-equilibrium \cite{Langer1969,Newman1980}, metastable \cite{Newman1997}, and disordered systems \cite{Derrida1983}. A prominent role is played by spin glasses \cite{MPV1987},
whose study has prompted many theoretical and computational techniques 
\cite{BinderYoung1986}, with applications in quantum field theories \cite{RyuTakayanagi2006}, message passing \cite{Mezard2014} and artificial intelligence \cite{Mezard2002,Mezard2005}. 
A major breakthrough in the study of spin glasses was done by Parisi
\cite{Parisi1980} with the Replica Symmetry Breaking ansatz. This
technique opened the way to the description of interesting phenomena
for a class of spin glasses. One of these is derived
from the overlap distribution among different replicas and is called
ultrametricity. An ultrametric $d$ on a certain space is a metric
which satisfies a stronger version of the triangle inequality 
\begin{equation}
d(\alpha,\beta)\leq\max\{d(\alpha,\gamma),d(\gamma,\beta)\}.\label{eq: ultrametric}
\end{equation}
The occurrence in (\ref{eq: ultrametric}) of a tropical addition
$\max$ instead of the standard real one $+$ and exponential degeneration
in spin glasses \cite{Sobotta1986} are hints of a link between this
kind of models and tropical algebraic structures.

We will show that filters give a rigorous but flexible background
to investigate tropical statistical systems and their physical aspects. First, ultrametricity comes into play, since filters and ultrametrics are in a relation
that resembles the one between topology and metric. Then, they allow to identify the type of objects involved in such a ``micro/macro'' correspondence. In addition, they provide one with a simple criterion to investigate non-equilibrium
and metastability, namely the invariance of the spectrum under translation that relates to different choices for the ground (zero point) energy. This issue is related to tropicalization processes of real variables. Filters also give a picture of the physical process to approach the tropical limit. Indeed, if the definitions in \cite{AK2015} are assumed, then tropical statistical
physics is derived from the double scaling limit for Boltzmann constant
$k_{B}\rightarrow0$ and Avogadro number $N_{A}\rightarrow\infty$
keeping fixed the product $R=k_{B}\cdot N_{A}$. Some consequences
of the limit for the reference cardinality $N_{A}$ in statistics
can be addressed with the filter language. For instance, this shows how enumeration
changes in the tropical limit. We will refer to this procedure as
a dequantification for probability weights. Other puzzling
concepts, such as the $n\rightarrow0$ limit for the dimension of
the replica overlap matrix in spin glass theory, could take a concrete shape
in this setting.

It should be remarked that we do not wish to deal with measure-theoretic
properties of specific models. Many interesting results in this direction
have been achieved during last decades, with particular regard to $p$-adic
models \cite{Grossman1989} and Ruelle probability cascade \cite{Ruelle1987}.
Here we focus on the unifying role that tropical limit plays in explaining
apparent contrasts between macro- and micro-physics. 

The paper is organized as follows. In Section \ref{sec: Notation} we remind few notions on monoids and order theory in order to make this paper self-contained. In Section \ref{sec: Tropical limit and the role of Boltzmann constant} we deal with some consequences of its physical identification with the limit $k_{B}\rightarrow0$. The discussion suggests what types of algebraic connections can be drawn in the tropical setting. This is formalized in Section \ref{sec: Filters and ultrametricity}, where filters and ideals are introduced and a simple relation with ultrametric is observed. In Section \ref{sec: Monoid homomorphisms and a set/element correspondence}
we investigate links between tropical structures via monoid homomorphisms and provide a characterization of linearly ordered tropical structure in this perspective. In Section \ref{sec: Perturbative tropical limit}
we encode the information on partition functions in a suitable
form for a ``perturbative'' study of tropical limit, since usual analytic expansions may fail in the tropical (non-analytic) limit. Subsequent sections apply these tools to concrete physical issues. In Section \ref{subsec: Duality and non-equilibrium} we use the filter language to introduce tropical equilibrium and discuss a basic non-equilibrium model with particular focus on the choice of the ground energy. This issue is deepened in Section \ref{subsec: Tropical probability and tropical symmetry} with the introduction of global
or local tropicalizations and the study of global or local tropical actions. In Section \ref{sec: Global and local tropical symmetry} the issue of tropicalizations of more variables, as a whole or one at a time, is addressed. Idempotence is achieved as a global or a local tropical symmetry. Its implications in counting processes lead to tropical probability, that is explored in terms of global (Subsection \ref{subsec: Global tropical symmetry and statistical amoebas}) or local (Subsection \ref{subsec: Local tropical symmetry and the dequantification procedure })
tropical symmetry. Finally, we draw conclusions and discuss future perspectives in Section \ref{sec: Conclusions}.  

\section{\label{sec: Notation} Notation and definitions}

Tropical geometry (see e.g. \cite{MS2015} for an introduction) is
a recent branch of algebraic geometry originating from earlier studies
in computer science, optimization and mathematical physics \cite{Cohen1999,Inoue2012,LS2014}.
It is based on usual algebraic concepts, like polynomials, ideals
and varieties, translated in the setting of an idempotent semiring.
This means that all algebraic expressions involve operations on a
semiring instead of a field $\mathbb{K}$ (usually
$\mathbb{R}$ or $\mathbb{C}$, or finite fields $\mathbb{F}_{q}$).
The basic example of a tropical structure is the $\max$-plus semiring $\mathbb{R}_{\max}$: it is defined as a $5$-uple $(\mathbb{R}\cup\{-\infty\},\oplus,\odot,-\infty,0)$
where $\mathbb{R}_{\max}:=\mathbb{R}\cup\{-\infty\}$, $a\oplus b:=\max\{a,b\}$
is the tropical addition, $a\odot b:=a+b$ is the tropical multiplication,
 $-\infty$ and $0$ are the neutral elements of
$\oplus$ and $\odot$ respectively. A common process to derive the $\max$-plus semiring from $\mathbb{R}$ or $\mathbb{C}$ comes through the limit
$\varepsilon\rightarrow0^{+}$ for the following family of ring operations
\begin{eqnarray}
x\oplus_{\varepsilon}y & := & \varepsilon\cdot\ln\left(e^{\frac{x}{\varepsilon}}+e^{\frac{x}{\varepsilon}}\right),\nonumber \\
x\odot_{\varepsilon}y & := &\varepsilon\cdot\ln\left(e^{\frac{x}{\varepsilon}}\cdot e^{\frac{x}{\varepsilon}}\right)=x+y.
\label{eq: tropical limit}
\end{eqnarray}
The peculiarity of the $\max$-plus semiring is that $\oplus$ is
idempotent, i.e. $x\oplus x=x$ for all $x$ in $\mathbb{R}_{\max}$.
So there exists a multiplicative inverse for each element in $\mathbb{R}=\mathbb{R}_{\max}\backslash\{-\infty\}$,
but there is no additive inverse for elements in $\mathbb{R}$. In
fact, $x\oplus y=-\infty$ implies $-\infty=x\oplus y=(x\oplus x)\oplus y=x\oplus(x\oplus y)=x\oplus(-\infty)=x$.
This means that there is no subtraction. Consequently, the extension of several useful mathematical tools, e.g. differentiation, to the tropical setting requires particular attention. This purpose has generated many additional techniques, such as ultradiscrete methods (see e.g. \cite{Grammaticos1997,Inoue2012}).  

An action of $\mathbb{R}_{\max}$ on a statistical model has relevant effects. In particular, the inclusion-exclusion formula (\ref{eq: PIE}) should be reconsidered since it involves subtraction. Even the equivalent expression $p(A\cup B)+p(A\cap B)=p(A)\cup p(B)$ in not informative in the tropical language, since $p(A\cap B)\leq p(A\cup B)$,
so $p(A\cup B)\oplus p(A\cap B)=p(A\cup B)$. 

For our purposes, it is worth looking at a more general \textit{tropical semiring}, that is a $5$-uple $(\Lambda,\oplus,\odot,\infty,0)$ where $\Lambda$ is a set, the addition $\oplus$ and multiplication $\odot$ are associative and commutative binary operations
on $\Lambda$ with identities and the distributivity property holds. $\infty$ is the neutral element for $\oplus$ and $0$ is the neutral element for $\odot$. All the elements of $\Lambda\backslash\{\infty\}$ are multiplicatively
invertible and $\oplus$ is idempotent, i.e. $a\oplus a=a$ for all $a\in\Lambda$. 

The \textit{tropical monoid} $(\Lambda,\oplus,\infty)$ is a sub-structure associated to a tropical semiring that forgets $\odot$ and $0$. A \textit{monoid
homomorphism} is a map $\psi:\,(\Lambda_{1},\oplus_{1},\infty_{1})\longrightarrow(\Lambda_{2},\oplus_{2},\infty_{2})$
such that $\psi(\infty_{1})=\infty_{2}$ and $\psi(x\oplus_{1}y)=\psi(x)\oplus_{2}\psi(y)$
for all $x,y\in\Lambda_{1}$. An \textit{erasing element} of a tropical
monoid $\Lambda$ is an element $\top\in\Lambda$ such that $a\oplus\top=\top$
for all $a\in\Lambda$. If $\Lambda$ has no erasing element, then
we call it a \textit{grounded monoid}. If one also looks at the multiplication, an erasing element $\top$ satisfies $a\odot\top\oplus b=a\odot(\top\oplus a^{-1}\odot b)=a\odot\top$
for all $a,b\in\Lambda$. By uniqueness of the erasing element, one has $a\cdot\top=\top$.

When no ambiguity arises, $\Lambda$ will denote both a tropical algebra
and the underlying set with a slight abuse of notation. Well-known
examples of tropical monoids with $\oplus=\max$ are $\mathbb{N}$,
$\mathbb{Q}^{\vee}:=\mathbb{Q}\cup\{-\infty\}$ and similarly $\mathbb{R}^{\vee}$,
while $\mathbb{R}^{\wedge}:=\mathbb{R}\cup\{+\infty\}$ involves $\oplus=\min$. 

From idempotence and associativity, it follows that the tropical addition
$\oplus$ induces an order on $\Lambda$, that is 
\begin{equation}
x\preceq y\Leftrightarrow x\oplus y=y\label{eq: order induced by tropical addition}
\end{equation}
and $\infty$ is the minimum element of $\Lambda$ with respect
to this order. We adopt the following notation for ordered sets: a \textit{partially ordered set} (or \textit{poset}) is a pair $(\Lambda,\preceq)$
where $\preceq$ is a binary reflexive, antisymmetric and transitive
relation on $\Lambda$. A poset is \textit{totally ordered} if any two elements $x,y\in\Lambda$ are comparable, i.e. $x\preceq y$ or $y\preceq x$ for all $x,y\in\Lambda$. The \textit{minimum} (respectively, \textit{maximum}) of a poset,
if it exists, is the unique element $\bot$ (respectively, $\top$),
such that $\bot\preceq x$ (respectively, $x\preceq\top$) for all
$x\in\Lambda$. The \textit{infimum} of a subset $\lambda\subseteq\Lambda$, if it
exists, is the unique element $\inf(\lambda)\in\Lambda$ such that
i.) $\inf(\lambda)\preceq x$ for all $x\in\lambda$, and ii.) if
$y\in\Lambda$ is such that $y\preceq x$ holds for all $x\in\lambda$,
then $y\preceq\inf(\lambda)$. The \textit{supremum} of a subset $\lambda\subseteq\Lambda$
is the infimum of $\lambda$ in the poset $(\Lambda,\succeq)$ given by the inverse
relation of $\preceq$. A \textit{join-semilattice} (respectively, \textit{meet-semilattice}) is a poset $(\Lambda,\preceq)$ such that each pair of elements of
$\Lambda$, or equivalently each finite subset of $\Lambda$, has
a supremum (respectively, an infimum). A \textit{lattice} is simultaneously
a join- and a meet-semilattice. A \textit{complete lattice} is a lattice $(\Lambda,\preceq)$ where arbitrary (even infinite) suprema and infima exist. 

With previous definitions, we will call \textit{grounded poset} a
poset $(\Lambda,\preceq,\bot)$ with a minimum $\bot$ and without
maximum. Note that a grounded tropical monoid can not be a complete lattice.
Indeed, $\displaystyle \sup\Lambda$ does not exists, since it
would be an erasing element.

Now we can introduce filters as follows. Given a poset  $(\Lambda,\preceq)$,
a\textit{ filter} on $\Lambda$ is a collection $\mathcal{F}\subseteq\Lambda$
of elements of $\Lambda$, such that the following properties hold: 
\begin{enumerate}
\item $\mathcal{F}\neq\emptyset$; 
\item for all $x,y\in\mathcal{F}$, there exists $z\in\mathcal{F}$ such
that $z\preceq x$ and $z\preceq y$ ($\mathcal{F}$ is downward directed); 
\item for all $x\in\mathcal{F}$, if $x\preceq y$ then $y\in\mathcal{F}$
($\mathcal{F}$ is upward closed). 
\end{enumerate}
If the following additional property holds 
\begin{equation}
A\in\mathcal{F}\Leftrightarrow A^{c}\notin\mathcal{F},\quad A\in\mathcal{P}(\Omega),\label{eq: ultrafilter property}
\end{equation}
one talks of $\mathcal{F}$ as an \textit{ultrafilter} on $\Omega$.
It is equivalent to the property that, for every filter $\mathcal{V}$
such that $\mathcal{F\subseteq V}$, one has $\mathcal{F=V}$, e.g.
$\mathcal{F}$ is a maximal filter. A filter such that $\bot\notin\mathcal{F}$,
i.e. $\mathcal{F}\neq\Lambda$, is said \textit{proper}. Note that
if $(\Lambda,\preceq)$ is a lattice, then property 2. in the definition of filters can be restated using property 3. as $\inf\{x,y\}\in\mathcal{F}$
for all $x,y,\in\mathcal{F}$. A filter is called a \textit{principal
filter} if it is of the form 
\begin{equation}
\mathcal{F}_{x}:=\left\{ y\in\Lambda:\,x\preceq y\right\} \label{eq: principal filter}
\end{equation}
for some $x\in\Lambda$. If $\mathcal{B}\subseteq\Lambda$ verifies
properties 1. and 2. and $\bot\notin\mathcal{B}$, then $\mathcal{B}$
is a \textit{proper filter base}. One can extend a filter base $\mathcal{B}$
to $\mathcal{F}:=\left\{ C\in\Lambda:\,\exists B\in\mathcal{B},\,B\preceq C\right\} $,
which is a filter. 

The dual notion of a filter is an \textit{ideal} on $\Lambda$, that is a collection $\mathcal{I}\subseteq\Lambda$
of elements of $\Lambda$ such that: 
\begin{enumerate}
\item $\mathcal{I}\neq\emptyset$; 
\item for all $x,y\in\mathcal{I}$, there exists $z\in\mathcal{I}$ such
that $x\preceq z$ and $y\preceq z$ ($\mathcal{I}$ is upward directed); 
\item for all $x\in\mathcal{I}$, if $y\preceq x$ then $y\in\mathcal{I}$
($\mathcal{I}$ is downward closed). 
\end{enumerate}
An ideal of the form 
\begin{equation}
\mathcal{I}_{y}=\left\{ x\in\Lambda:\,x\preceq y\right\} .\label{eq: principal ideal}
\end{equation}
is called a \textit{principal ideal}. If $\mathcal{B}\subseteq\Lambda$
verifies the properties 1. and 2. and $\top\notin\mathcal{B}$, then $\mathcal{B}$
is a \textit{proper ideal base}. One can extend an ideal base $\mathcal{B}$
to $\mathcal{I}:=\left\{ C\in\Lambda:\,\exists B\in\mathcal{B},\,C\preceq B\right\} $,
which is an ideal. 

In particular, for any tropical semiring one has $a\oplus(a\oplus b)=(a\oplus a)\oplus b=a\oplus b$, then $a\preceq a\oplus b$. In the same way, $b\preceq a\oplus b$.
If $z\in\Lambda$ is such that $a\preceq z$ and $b\preceq z$ then
$a\oplus z=z$ and $b\oplus z=z$, thus $(a\oplus b)\oplus z=a\oplus(b\oplus z)=a\oplus z=z$. So $a\oplus b\preceq z$. This corresponds to the fact that 
\begin{equation}
a\oplus b=\sup\{a,b\}\label{eq: sum sup}
\end{equation}
and $\Lambda$ is a join-semilattice. 

\section{\label{sec: Tropical limit and the role of Boltzmann constant} Tropical
limit and the role of Boltzmann constant} 

The physical parameter which controls the occurrence of real or tropical
features is the Boltzmann constant \cite{AK2015}. More to the point, given a physical statistical system specified by a partition function and an associated free energy, its \emph{tropical limit} is defined as the simultaneous limit for $k_{B}$ and Avogadro number $N_{A}$ such that their product is kept constant, i.e. 

\begin{equation}
k_{B}\rightarrow0^{+}, \qquad
N_{A}\rightarrow\infty: \qquad 
k_{B}\cdot N_{A}=:R \mbox{ is constant}.
\label{eq: tropical double limit}
\end{equation}

A deeper understanding of effects of the limit $k_{B}\rightarrow0$ is necessary since $k_{B}$ is the fundamental unit that connects the microscopic and macroscopic
worlds. As a first check, it should be noted that this is physically reasonable since $k_{B}\ll 1$ and $N_{A}\gg 1$, while the universal gas constant $R\sim\mathcal{O}(1)$ is the order of unity, so it is a macroscopically distinguishable quantity.

One can also draw an analogy with the $\hbar\rightarrow0$ limit
when one has a quantum theory and a consistent procedure, driven by
$\hbar$, that shut off quantum effects. This goes beyond a formal analogy since the limit $k_{B}\rightarrow 0$ has been discussed in stochastic processes \cite{Lavenda1982}, viewed as the limit of vanishing white noise, and in the analysis of thermodynamic complementarity and fluctuation theory \cite{Schlogl1988, VelazquezAbad2012}. These approaches establish a stronger correspondence between the  semiclassical limit $\hbar\rightarrow 0$ in quantum mechanics and the $k_{B}\rightarrow0$ limit. However, the corresponding ``classical'' theory has been interpreted as a thermodynamic limit, where a large number of particles $N$ are involved. It should be stressed that our approach is conceptually different from
thermodynamic limit(s), which involves the approximation of random
variables with their averages. By contrast, tropical limit is intended to provide a setting where statistical properties can be studied exactly through certain algebraic rules. 

Different statistical models can have the same tropical limit. In this sense, the limit $k_B\rightarrow 0$ provides a tropical classification that leaves control parameters (e.g., temperature $T$) as free variables. At the same time, the action on the reference cardinality $N_A$ in (\ref{eq: tropical double limit}) results in a different counting process and this affects statistics of a generic system through arithmetic. The scaling of $N_A$, as well as the algebraic structure that arises in the process, is not explicit in the deterministic theory obtained in \cite{Schlogl1988, VelazquezAbad2012}, even if some similarities can be drawn. The implications of the tropical limit in arithmetic and its connections with the thermodynamic formalism have been discussed in a different framework, namely number theory, see e.g. \cite{Marcolli2014} and references therein. However, these works introduce the tropical limit as a low temperature limit, i.e. $T\rightarrow 0$. There are both conceptual and practical differences with our approach. First, the limit $T\rightarrow 0 $ reduces the model to a part of the boundary of the space of thermodynamic variables given by the evaluation at $T=0$. As already remarked, we preserve the set of free parameters and, hence, thermodynamic relations, through the limit $k_B\rightarrow 0$. This corresponds to a change in the algebraic structure associated to the variable space. 

Secondly, the scaling $N_A\rightarrow \infty$ and its consequences on the
arithmetic process of counting affect the degenerations of energy levels, so it relates to the phenomenon of exponential degenerations. For instance, one can consider the Boltzmann
formula for the microcanonical ensemble $S_{B}=k_{B}\cdot\ln\Omega$. The number of microstates
$\Omega$ is the cardinality of the set of microstates compatible
with some constraints, i.e. fixed energy and particle number. Non-trivial
situations arise if one assumes that $k_{B}\rightarrow0$ implies
a different way of counting. The cardinality $\Omega$ thereby depends
on $k_{B}$ and exponential degenerations 
\begin{equation}
\Omega(k_{B})\sim\exp\left(\frac{S}{k_{B}}\right)\label{eq: exponential degenerations}
\end{equation}
make the tropical method non-trivial. 

A possible connection can also be drawn to the definition of temperatures
in small systems. At least two proposals have received attention,
that are Boltzmann and Gibbs (or Hertz) temperatures \cite{Frenkel2015},
and there is still much debate on which definition should be adopted
\cite{DunkelHilbert2014,Frenkel2015}. In both cases,
they are deduced from entropy via ${\displaystyle T=\left(\frac{\partial S}{\partial E}\right)^{-1}}$.
One has the Gibbs temperature $T_{G}$ if $S=S_{G}(E):=k_{B}\cdot\ln\Omega(E)$
is Gibbs' entropy, where $\Omega(E)$ is the number (or in general
a measure) of microstates with energy $\tilde{E}\leq E$. The Boltzmann
temperature comes from Boltzmann's entropy ${\displaystyle S_{B}(E)=k_{B}\cdot\ln\omega(E)}$,
where ${\displaystyle \omega(E)=\frac{\partial\Omega(E)}{\partial E}}$
is the number of states with energy $\tilde{E}=E$. $T_{G}$ is positive
while $T_{B}$ can assume negative values, for example in cases of
bounded spectrum. In the limit $k_{B}\rightarrow0$, assuming the
scaling (\ref{eq: exponential degenerations}) and additional hypotheses,
e.g. non-vanishing heat capacity, the two definitions
can coincide. 
When $T_{B}$ equals $T_{G}$ at $k_{B}=0$, only boundary terms in
$\Omega(E)$ are relevant for temperature, since they represent microstates
defining $\omega(E)$. Boltzmann's entropy is assumed real, then $\omega(E)\geq0$
for all $E$ and $\Omega(E)$ increases with $E$. So these boundary
terms are the dominant ones. 

The correspondence between expressions ``$\leq E$'' and ``$=E$''
in previous statements is a focal point in our discussion. Indeed,
we will consider \emph{microsystems} defined by fixed energy and associated
statistical data (e.g., degeneration of the energy level). On the
other hand, we will denote collections of more microsystems as \emph{macrosystems}.
Hence, the previous observation relates microsystems to certain macrosystems
and will be used to realize a tropical correspondence. 

Before we proceed to this issue, it is worth making a few remarks on the meaning of the tropical limit at the fundamental (set-theoretic) level. The change in the counting process also prompts the use of a different notion of cardinality or different notion of sets.
In this regard, it is worth noting that the limit $k_{B}\rightarrow0$
can be interpreted as a fuzzyfication, see \cite{Zadeh1965,Klir1988}
for a detailed introduction about fuzzy mathematics and \cite{Kahraman2016}
for its latest developments. Fuzzy variables for statistical purposes
has been used in theoretical computer science \cite{Yasuda2012}.
For example, the fuzzy $c$-means in the context of clustering (see
e.g. \cite{Miyamoto1997}) have a direct analogue in our statistical
setting, as can be noted identifying $m-1$ with $k_{B}$ in \cite{Miyamoto1997}. 

Boltzmann constant has also a key role in representing information.
For example, if $x$ in (\ref{eq: tropical limit}) is interpreted
as the number of available digits in a certain representation of a
number, then the scaling ${\displaystyle x\rightarrow\frac{x}{k_{B}}}$
describes a generalized change of base. In this aspect, Boltzmann
constant is recognized as a bridge between Gibbs and Shannon entropy
and, more generally, between thermodynamics and information theory
\cite{Landauer1961}. 

In all these perspectives, what is left in the limit $k_{B}\rightarrow0$
is the underlying relational order. We explore some aspects of this relation in the following section.

\section{\label{sec: Filters and ultrametricity} Filters and ultrametricity}

One of the most important geometric aspects of the tropical algebra
is ultrametricity. An ultrametric on a set $\Omega$ is a metric where
the triangle inequality is tropicalized, i.e. 
\begin{equation}
d(x,y)\leq d(x,z)\oplus d(z,y)=\max\left\{ d(x,z),d(z,y)\right\} ,\quad x,y,z\in\Omega.\label{eq: ultrametric, a}
\end{equation}
Ultrametricity is a phenomenon that characterizes many hierarchical
models. Accordingly, it has important applications in statistical
and complex systems. For example, ultrametricity was recognized in
spin glasses \cite{RTV1986} and some models have been proposed for
a better understanding of this pattern, for example by means of the $p$-adic metric
\cite{Koblitz1977,Khrennikov1994,Grossman1989}. We remind that, if $\Omega\equiv\mathbb{Q}$ and $p$ is a prime number, then the $p$-adic norm 
\begin{equation}
||x||_{p}=p^{-a}\Leftrightarrow x=p^{a}\cdot\frac{q}{r},\,\mathrm{gcd}(q,p)=\mathrm{gcd}(r,p)=1,\label{eq: p-adic norm}
\end{equation}
with the assumption $||0||_{p}=0$, induces the $p$-adic ultrametric
$||x-y||_{p}$ on $\mathbb{Q}$. 

For our purposes, the poset $(\mathcal{P}(\Omega),\subseteq)$ of the 
subsets of a set $\Omega$ ordered by inclusion has a cardinal
role. In such a case, the duality between filters and ideals is also
restated as follows. 
\begin{lemma}
\label{lem: ideals/filters power set case} Given a set $\Omega$,
$\mathcal{I}\subseteq\mathcal{P}(\Omega)$ is an ideal if and only
if $\mathcal{F}:=\{\Omega\backslash A:\,A\in\mathcal{I}\}$ is a filter. 
\end{lemma}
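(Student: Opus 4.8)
The plan is to exploit the fact that set-complementation $c\colon\mathcal{P}(\Omega)\to\mathcal{P}(\Omega)$, $c(A):=\Omega\backslash A$, is an order-reversing involution of the poset $(\mathcal{P}(\Omega),\subseteq)$: it is a bijection, it satisfies $c\circ c=\mathrm{id}$, and $A\subseteq B\Leftrightarrow c(B)\subseteq c(A)$, together with the De Morgan identities $c(A\cup B)=c(A)\cap c(B)$ and $c(A\cap B)=c(A)\cup c(B)$. Since $(\mathcal{P}(\Omega),\subseteq)$ is a (complete) lattice, I will also freely use that property~2 in the definitions of filter and ideal can be tested with $A\cap B$ and $A\cup B$ respectively, which makes the bookkeeping cleaner. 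The observation $\mathcal{F}=c(\mathcal{I})$ means $\mathcal{I}=c(c(\mathcal{I}))=c(\mathcal{F})$, so the two collections are interchanged by the single involution $c$, and it suffices to check that $c$ carries the three ideal axioms to the three filter axioms; the converse direction is then the same argument with the roles of $\mathcal{F}$ and $\mathcal{I}$ swapped, using involutivity.

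First I would record the three implications for the forward direction, assuming $\mathcal{I}$ is an ideal. Nonemptiness transfers immediately because $c$ is a bijection, so $\mathcal{F}=c(\mathcal{I})\neq\emptyset$. For downward directedness of $\mathcal{F}$: given $x,y\in\mathcal{F}$, write $x=c(A)$, $y=c(B)$ with $A,B\in\mathcal{I}$; since $\mathcal{I}$ is upward directed there is $z'\in\mathcal{I}$ with $A\subseteq z'$ and $B\subseteq z'$ (one may take $z'=A\cup B$), and then $z:=c(z')\in\mathcal{F}$ satisfies $z\subseteq c(A)=x$ and $z\subseteq c(B)=y$ by order-reversal. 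For upward closedness: if $x\in\mathcal{F}$ with $x=c(A)$, $A\in\mathcal{I}$, and $x\subseteq y$, then $c(y)\subseteq c(x)=A$, so downward closedness of $\mathcal{I}$ gives $c(y)\in\mathcal{I}$, whence $y=c(c(y))\in\mathcal{F}$. This proves $\mathcal{F}$ is a filter.

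For the converse, assume $\mathcal{F}$ is a filter; then $\mathcal{I}=c(\mathcal{F})$ by involutivity, and applying the verbatim dual of the three steps above — interchanging "downward" with "upward" throughout, and $\cap$ with $\cup$ — shows that $\mathcal{I}$ satisfies the ideal axioms. In particular nonemptiness transfers since $c$ is a bijection, upward directedness of $\mathcal{I}$ follows from downward directedness of $\mathcal{F}$ via $z:=c(z')$ with $z'=A\cap B$, and downward closedness of $\mathcal{I}$ follows from upward closedness of $\mathcal{F}$. This closes the equivalence.

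There is no real obstacle here; it is a routine transport of structure along an order-reversing involution. The only points to state carefully are (i) that $c$ is a bijection of $\mathcal{P}(\Omega)$ \emph{onto itself}, which is what lets conditions of the form "for all $A\in\mathcal{P}(\Omega)$" and the passage $y\mapsto c(y)$ go through, and (ii) that $c$ is involutive, which is what makes $\mathcal{F}=c(\mathcal{I})$ equivalent to $\mathcal{I}=c(\mathcal{F})$ and thereby reduces the ``if and only if'' to a single directed computation. If one preferred, the whole statement could even be phrased as: $c$ restricts to a bijection between the set of ideals of $(\mathcal{P}(\Omega),\subseteq)$ and the set of its filters.
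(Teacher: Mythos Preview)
Your proof is correct and is precisely the routine verification the paper has in mind: the paper's own proof simply states that the result is ``a straightforward consequence of the definitions of filter and ideal'' and refers to a standard set-theory text, without writing out any details. Your argument via the order-reversing involution $c(A)=\Omega\backslash A$ is exactly the expected unpacking of that sentence, and in fact supplies considerably more detail than the paper does.
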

\begin{proof} The proof is a straightforward consequence of the definitions of filter and ideal, see also  \cite{HrbacekJech1999}. 
\end{proof}
There is a connection between metric and topology, which is expressed
by the fact that the set $\mathcal{B}=\{S(x_{0},r):,\,x_{0}\in\Omega,\,r\in\mathbb{R}_{+}\}$ of open balls $S(x_{0},r):=\{x\in\Omega:\,d(x,x_{0})<r\}$ for the
metric $d$ is a base for a topology on $\Omega$. In the particular
case of ultrametrics, a similar relation can be found with filters
(or ideals). For notational convenience, let us denote  $\mathbb{R}_{+}^{\wedge}:=\{x\in\mathbb{R}:\,x>0\}\cup\{+\infty\}$ and $d(x_{0},\cdot):=\left\{ d(x_{0},x):\,x\in\Omega\right\}$. 
\begin{proposition}
\label{prop: filters and ultrametrics} Let $d$ be an
ultrametric on a set $\Omega$. Then, the set $\mathcal{B}:=\{S(x_{0},r):\,x_{0}\in\Omega,\,r\in d(x_{0},\cdot)\}$ of ultrametric balls $S(x_{0},r):=\{x\in\Omega:\,d(x,x_{0})\leq r\}$ is an ideal base. Moreover, the resulting ideal is proper if and only if $d(\Omega^{2})$ is grounded as a sublattice of $\mathbb{R}_{>0}$,
i.e. $\max\left\{ d(x,y):\,x,y\in\Omega\right\} $ does not exists.
Vice versa, if $\mathcal{I}$ is an ideal on ${\displaystyle \Omega=\bigcup_{A\in\mathcal{I}}A}$,
then the function 
\begin{equation}
(x,y)\mapsto d(x,y):=\left\{ \begin{array}{cc}
0, & x=y\\
\inf\left\{ \mathfrak{d}(A):\,A\in\mathcal{I},\,\{x,y\}\subseteq A\right\} , & x\neq y
\end{array}\right.\label{eq: ideal to ultrametric}
\end{equation}
is an ultrametric on $\Omega$, for each decreasing function $\mathfrak{d}:\,(\mathcal{I},\subseteq)\longrightarrow(\mathbb{R}_{+}^{\wedge},\leq)$
such that $\inf\mathfrak{d}(\mathcal{I})>0$. 
\end{proposition}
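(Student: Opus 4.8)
The statement splits naturally into two directions, and I would treat them in turn. For the forward direction, I must verify that $\mathcal{B}=\{S(x_0,r):x_0\in\Omega,\ r\in d(x_0,\cdot)\}$ satisfies properties 1 and 2 of an ideal base (plus $\top\notin\mathcal{B}$ when the properness condition holds). Property 1 ($\mathcal{B}\neq\emptyset$) is immediate as long as $\Omega\neq\emptyset$: for any $x_0$ and any $x$, the value $d(x_0,x)$ lies in $d(x_0,\cdot)$, so $S(x_0,d(x_0,x))\in\mathcal{B}$; in particular $S(x_0,0)=\{x_0\}\in\mathcal{B}$ since $0=d(x_0,x_0)$. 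The heart of this direction is property 2 (upward directedness): given $S(x_0,r)$ and $S(x_1,s)$, I must produce a single ball in $\mathcal{B}$ containing both. Here the key lemma is the standard ultrametric fact that \emph{every point of a ball is a centre}: if $y\in S(x_0,r)$ then $S(y,r)=S(x_0,r)$, which follows directly from the strong triangle inequality (\ref{eq: ultrametric, a}). Consequently any two balls are either disjoint or nested. If $S(x_0,r)$ and $S(x_1,s)$ are disjoint I cannot contain both in a ball unless I enlarge the radius — so I would take $t:=\max\{r,s,d(x_0,x_1)\}$, note $t\in d(x_0,\cdot)$ (it is either $r$, which may not be of the form $d(x_0,\cdot)$ directly — careful here — or $s$ or $d(x_0,x_1)$), and check $S(x_0,t)\supseteq S(x_0,r)\cup S(x_1,s)$ using strong triangle inequality. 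The delicate point is that $t$ must actually be an \emph{attained} distance from $x_0$; since $r\in d(x_0,\cdot)$ already means $r=d(x_0,z)$ for some $z$, and by the ``every point is a centre'' fact $d(x_0,x_1)$ and $s=d(x_1,w)$ can be re-expressed as distances from $x_0$ when the balls overlap, but when they are disjoint one shows $d(x_0,w)=\max\{d(x_0,x_1),d(x_1,w)\}$, so $t$ is realized — I expect this bookkeeping to be the main obstacle in the forward direction. Finally, properness: the extended ideal $\mathcal{I}$ contains $\Omega$ iff $\Omega\subseteq S(x_0,r)$ for some ball, i.e. iff $r\geq d(x_0,y)$ for all $y$ with $r$ attained, which is exactly the existence of $\max\{d(x,y):x,y\in\Omega\}$; so the ideal is proper iff $d(\Omega^2)$ is grounded.

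For the converse, I am given an ideal $\mathcal{I}$ on $\Omega=\bigcup_{A\in\mathcal{I}}A$ and a decreasing $\mathfrak{d}:(\mathcal{I},\subseteq)\to(\mathbb{R}_+^\wedge,\leq)$ with $\inf\mathfrak{d}(\mathcal{I})>0$, and I must show the formula (\ref{eq: ideal to ultrametric}) defines an ultrametric. Positivity and $d(x,y)=0\iff x=y$: for $x\neq y$, $d(x,y)=\inf\{\mathfrak{d}(A):\{x,y\}\subseteq A\in\mathcal{I}\}\geq\inf\mathfrak{d}(\mathcal{I})>0$, and the set over which we infimize is nonempty because $x,y\in\Omega=\bigcup\mathcal{I}$ gives $A_x\ni x$, $A_y\ni y$ in $\mathcal{I}$, and by upward directedness some $A\supseteq A_x\cup A_y$ lies in $\mathcal{I}$. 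Symmetry is built into the definition since $\{x,y\}=\{y,x\}$. The strong triangle inequality $d(x,y)\leq\max\{d(x,z),d(z,y)\}$ is the substantive claim: given $\varepsilon>0$, pick $A\in\mathcal{I}$ with $\{x,z\}\subseteq A$ and $\mathfrak{d}(A)<d(x,z)+\varepsilon$, and $B\in\mathcal{I}$ with $\{z,y\}\subseteq B$ and $\mathfrak{d}(B)<d(z,y)+\varepsilon$; by directedness take $C\supseteq A\cup B$ in $\mathcal{I}$, so $\{x,y\}\subseteq C$ and — using that $\mathfrak{d}$ is \emph{decreasing}, i.e. $A\subseteq C\Rightarrow\mathfrak{d}(C)\leq\mathfrak{d}(A)$ — we get $d(x,y)\leq\mathfrak{d}(C)\leq\min\{\mathfrak{d}(A),\mathfrak{d}(B)\}<\min\{d(x,z),d(z,y)\}+\varepsilon\leq\max\{d(x,z),d(z,y)\}+\varepsilon$; let $\varepsilon\to0$. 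One must also handle the degenerate cases where some of $x,y,z$ coincide, but these are immediate. This direction is essentially clean once the roles of ``decreasing'' and ``upward directed'' are lined up correctly.

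In summary, the plan is: (i) recall/prove the ``every point is a centre'' lemma for ultrametrics and deduce the dichotomy that balls are nested or disjoint; (ii) use it to verify $\mathcal{B}$ is an ideal base, with the enlarged-radius construction $t=\max\{r,s,d(x_0,x_1)\}$ handling the disjoint case and the attainedness of $t$ being the technical crux; (iii) identify the existence of a global maximum distance with $\Omega$ entering the ideal, giving the properness characterization; (iv) for the converse, check well-definedness (nonempty infimand via $\bigcup\mathcal{I}=\Omega$ and directedness) and positivity (via $\inf\mathfrak{d}(\mathcal{I})>0$); (v) prove the strong triangle inequality by the $\varepsilon$-argument above, crucially invoking that $\mathfrak{d}$ is decreasing. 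I expect the only genuinely fiddly step to be (ii), ensuring the enlarging radius is an attained value of $d(x_0,\cdot)$ rather than merely a real number.
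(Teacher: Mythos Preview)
Your proposal is correct and follows essentially the same route as the paper: both directions hinge on the enlarged radius $M=\max\{r,s,d(x_0,x_1)\}$ for upward directedness of $\mathcal{B}$, and on combining monotonicity of $\mathfrak{d}$ with upward directedness of $\mathcal{I}$ for the strong triangle inequality (the paper phrases the latter via complete distributivity of $\max$ over $\inf$ rather than your $\varepsilon$-argument, but the content is identical). Your one genuine worry---that $M$ be an attained value in $d(x_0,\cdot)$---is resolved exactly as the paper does it: since $M\geq d(x_0,x_1)$ the strong triangle inequality gives $S(x_0,M)=S(x_1,M)$, so whichever of $r,s,d(x_0,x_1)$ realises the maximum, you may choose the corresponding centre to place the ball in $\mathcal{B}$.
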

\begin{proof} For the sake of clarity, the proof is presented in Appendix \ref{sec: Appendix A}.  \end{proof}

It should be noted that previous proposition does not hold for a general metric. For example, let 
\begin{equation}
\Omega:={\displaystyle \left\{ \frac{1}{n}:\,n\in\mathbb{N}\right\} }\cup \left\{ 3-\frac{1}{m}:\,m\in\mathbb{N}\right\}
\label{eq: no-example general metric}
\end{equation}
with the usual euclidean metric $d_{E}$. Note that $d_{E}(\Omega^{2})$ is grounded since $\sup d_{E}(\Omega^{2})=3\notin d_{E}(\Omega^{2})$. So $1=d_{E}(1,2)$ belongs to both $d_{E}(1,\cdot)$ and $d_{E}(2,\cdot)$. Hence, one has $S\left(1,1\right)\cup S\left(2,1\right)=\Omega$ that is not contained in any ball of the type $S(x_{0},r)$ with $x_{0}\in\Lambda$ and $r\in d_{E}(x_{0},\cdot)$. 

We remark that the function (\ref{eq: ideal to ultrametric}),
$\mathfrak{d}$ is allowed to assume the value $+\infty$. In such
a case, if $\mathfrak{d}(A)=+\infty$ for all $A\in\mathcal{I}$ such
that $\{x,y\}\subseteq A$, then $x$ and $y$ are at infinite distance.
This can be an interesting eventuality, but if one wants to avoid
it one can use the function $\mathfrak{g}:=g\circ\mathfrak{d}$ instead
of $\mathfrak{d}$, where 
\begin{equation}
g(x):=\left(1+(x)^{-1}\right)^{-1}
\label{eq: deinfinitation}
\end{equation} for
all $x\in\mathbb{R}_{+}^{\wedge}$. So $\mathfrak{g}$ is decreasing,
bounded by $1$ from above and strictly positive since $0<\inf\mathfrak{d}(\mathcal{I})\leq\mathfrak{d}(A)$
for all $A\in\mathcal{I}$.  

If one starts with an ultrametric to get an associated ideal, the
corresponding filter (Lemma \ref{lem: ideals/filters power set case})
can be used to recover the original ultrametric. Indeed, one has the
following 
\begin{corollary}
\label{cor: filter to ultrametric and back} If $\mathcal{F}$ is
a filter on $\Omega$ and ${\displaystyle \emptyset=\bigcap_{G\in\mathcal{F}}G}$,
then the function 
\begin{equation}
(x,y)\mapsto D(x,y):=\left\{ \begin{array}{cc}
0, & x=y\\
\inf\left\{ \mathfrak{d}(G):\,G\in\mathcal{F},\,\{x,y\}\cap G=\emptyset\right\} , & x\neq y
\end{array}\right.\label{eq: filter to ultrametric}
\end{equation}
is an ultrametric on $\Omega$, for each increasing function $\mathfrak{d}:\,(\mathcal{F},\subseteq)\longrightarrow(\mathbb{R}_{+}^{\wedge},\leq)$
such that $\inf\mathfrak{d}(\mathcal{F})>0$. Moreover, if $d$ is
an ultrametric on $\Omega$, $\mathcal{I}_{d}$ is the ideal generated
by ultrametric balls, $\mathcal{F}_{d}$ is the corresponding filter,
then $D_{d}=d$ if 
\begin{equation}
\mathfrak{d}(F)=\mathfrak{d}_{d}(F):=\sup\left\{ d(x,y):\,x,y\notin F\right\} .\label{eq: ultradiameter}
\end{equation}
\end{corollary}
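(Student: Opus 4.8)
The plan is to prove Corollary \ref{cor: filter to ultrametric and back} in two movements: first establish that the formula \eqref{eq: filter to ultrametric} yields an ultrametric, and then verify the reconstruction identity $D_d = d$ under the choice \eqref{eq: ultradiameter}. For the first part, the key observation is that the filter $\mathcal{F}$ and the associated ideal $\mathcal{I} := \{\Omega \setminus G : G \in \mathcal{F}\}$ from Lemma \ref{lem: ideals/filters power set case} are in bijection via complementation, and this bijection is order-reversing: $G_1 \subseteq G_2$ iff $\Omega \setminus G_2 \subseteq \Omega \setminus G_1$. Hence an increasing $\mathfrak{d}$ on $(\mathcal{F}, \subseteq)$ corresponds exactly to a decreasing function on $(\mathcal{I}, \subseteq)$, and the condition $\{x,y\} \cap G = \emptyset$ is equivalent to $\{x,y\} \subseteq \Omega \setminus G$. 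So $D(x,y)$ in \eqref{eq: filter to ultrametric} is literally $d(x,y)$ of \eqref{eq: ideal to ultrametric} applied to the ideal $\mathcal{I}$ with the transported decreasing function. The hypothesis $\bigcap_{G \in \mathcal{F}} G = \emptyset$ translates to $\bigcup_{A \in \mathcal{I}} A = \Omega$ by De Morgan, which is precisely the hypothesis needed to invoke the converse direction of Proposition \ref{prop: filters and ultrametrics}. Thus the first half of the corollary follows by citing that proposition; I would write out the complementation dictionary explicitly so the reduction is transparent, and remark that $\inf \mathfrak{d}(\mathcal{F}) > 0$ matches $\inf \mathfrak{d}(\mathcal{I}) > 0$.

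For the reconstruction identity, let $d$ be a given ultrametric. One must unwind three layers: $d \rightsquigarrow \mathcal{I}_d$ (ideal generated by the closed balls $S(x_0,r)$, $r \in d(x_0,\cdot)$), then $\mathcal{I}_d \rightsquigarrow \mathcal{F}_d$ (complements), then $\mathcal{F}_d \rightsquigarrow D_d$ via \eqref{eq: filter to ultrametric} with the weight \eqref{eq: ultradiameter}, where $\mathfrak{d}_d(F) = \sup\{d(x,y) : x,y \notin F\}$. Fix $x \neq y$ and set $\rho := d(x,y)$. I would show $D_d(x,y) = \rho$ by a two-sided inequality. For $D_d(x,y) \le \rho$: exhibit a single set $F \in \mathcal{F}_d$ with $\{x,y\} \cap F = \emptyset$ and $\mathfrak{d}_d(F) = \rho$. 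The natural candidate is $F = \Omega \setminus B$ where $B := S(x,\rho) = \{z : d(x,z) \le \rho\}$, which contains both $x$ and $y$; then $x,y \notin F$, and $\mathfrak{d}_d(F) = \sup\{d(u,v) : u,v \in B\}$, which equals $\rho$ because the ultrametric inequality forces the diameter of a closed ball of radius $\rho$ (that actually realizes a pair at distance $\rho$) to be exactly $\rho$. (Here I use that $\rho \in d(x,\cdot)$, so $B$ really is one of the generating balls and $B \in \mathcal{I}_d$, hence $F \in \mathcal{F}_d$.) For $D_d(x,y) \ge \rho$: take any $G \in \mathcal{F}_d$ with $\{x,y\} \cap G = \emptyset$, i.e. $x,y \in \Omega \setminus G =: A \in \mathcal{I}_d$; then by definition of $\mathfrak{d}_d(G) = \sup\{d(u,v) : u,v \in A\} \ge d(x,y) = \rho$, so every term in the infimum defining $D_d(x,y)$ is at least $\rho$. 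Combining, $D_d(x,y) = \rho = d(x,y)$.

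The main obstacle I anticipate is the care needed in the $\le$ direction, specifically confirming that the diameter of the generating ball $S(x,\rho)$ is exactly $\rho$ rather than something smaller, and that this ball is genuinely a member of $\mathcal{I}_d$ (not merely of the generated ideal via a super-set) so that its complement sits in $\mathcal{F}_d$ and is eligible in the infimum \eqref{eq: filter to ultrametric}. The diameter claim uses the standard ultrametric fact that $z \in S(x,\rho)$ and $d(x,y) = \rho$ imply $d(z,y) \le \max\{d(z,x), d(x,y)\} = \rho$, so all pairwise distances in the ball are $\le \rho$, while the pair $(x,y)$ itself attains $\rho$; I should also handle the edge case where $\mathfrak{d}_d$ might a priori take the value $+\infty$ or where $\sup\{d(u,v):u,v\notin F\}$ is over an empty or singleton set (giving $0$), checking this does not disturb the argument or the requirement $\inf \mathfrak{d}_d(\mathcal{F}_d) > 0$. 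A secondary subtlety is that $\mathfrak{d}_d$ as defined in \eqref{eq: ultradiameter} must be verified to be \emph{increasing} on $(\mathcal{F}_d, \subseteq)$: if $F_1 \subseteq F_2$ then $\{(u,v) : u,v \notin F_2\} \subseteq \{(u,v): u,v \notin F_1\}$, so the sup over the smaller index set is smaller — monotone as required — which I would note in passing. Once these points are nailed down, both halves assemble into the stated corollary, and I would close by observing it shows the filter picture is fully equivalent to the ideal picture of Proposition \ref{prop: filters and ultrametrics} for ultrametric reconstruction.
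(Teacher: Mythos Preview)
Your approach is essentially identical to the paper's: both reduce the first claim to Proposition~\ref{prop: filters and ultrametrics} via the complementation dictionary of Lemma~\ref{lem: ideals/filters power set case}, and both prove $D_d=d$ by the same two-sided inequality, using the witness $F=\Omega\setminus S(x,d(x,y))$ for the upper bound and the trivial estimate $\mathfrak{d}_d(G)\ge d(x,y)$ whenever $x,y\notin G$ for the lower bound.

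There is, however, one genuine slip in your side remark on monotonicity. Your own computation shows that if $F_1\subseteq F_2$ then $\mathfrak{d}_d(F_2)\le\mathfrak{d}_d(F_1)$, i.e.\ $\mathfrak{d}_d$ is \emph{decreasing} on $(\mathcal{F}_d,\subseteq)$, not increasing; the paper states this explicitly in Subsection~\ref{subsec: Some remarks on the conditions in Proposition [filters and ultrametrics}. Consequently the second assertion of the corollary is \emph{not} an instance of the first (whose hypothesis requires $\mathfrak{d}$ increasing), and you should not present the monotonicity check as something that brings $\mathfrak{d}_d$ under that umbrella. Fortunately your proof of $D_d=d$ is a direct two-inequality argument that never invokes the first part, so the error is confined to that parenthetical and the main argument stands; just delete the claim that $\mathfrak{d}_d$ is increasing (and perhaps note, as the paper does, that monotonicity is sufficient but not necessary for the construction to yield an ultrametric).
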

\begin{proof} 
Let $\mathcal{F}$ and $\mathfrak{d}$ be as in the claim and say
$\mathfrak{D}(x,y):=\left\{ \mathfrak{d}(G):\,G\in\mathcal{F},\,\{x,y\}\cap G=\emptyset\right\} $,
$x\neq y$. Then, consider the corresponding ideal $\mathcal{I}$
as in Lemma \ref{lem: ideals/filters power set case} and the function
$\mathfrak{d}_{\mathcal{I}}:\,\mathcal{I}\longrightarrow\mathbb{R}_{+}^{\wedge}$
defined as $\mathfrak{d}_{\mathcal{I}}(A):=\mathfrak{d}(\Omega\backslash A)$.
In particular, $\mathfrak{d}_{\mathcal{I}}$ is decreasing since $\mathfrak{d}$
is increasing, and $\mathfrak{d}(\mathcal{F})=\mathfrak{d}_{\mathcal{I}}(\mathcal{I})$
implies $\inf\mathfrak{d}_{\mathcal{I}}(\mathcal{I})=\inf\mathfrak{d}(\mathcal{F})>0$.
From these data one can produce the mapping $\mathfrak{D}_{\mathcal{I}}$
and the function $d$ as in Proposition \ref{prop: filters and ultrametrics}.
By definitions, one has a correspondence between $G\in\mathcal{F},\,\{x,y\}\cap G=\emptyset$
and $\Omega\backslash G\in\mathcal{I},\,\{x,y\}\subseteq\Omega\backslash G$,
hence $\mathfrak{D}(x,y)=\mathfrak{D}_{\mathcal{I}}(x,y)$ and $D(x,y)=\inf\mathfrak{D}(x,y)=d(x,y)$.
So $D$ is an ultrametric. 

Now let $d$ be an ultrametric on $\Omega$, $\mathcal{I}_{d}$ the
ideal generated by the base of ultrametric balls and $\mathcal{F}_{d}$
the corresponding filter. Let us consider the function $\mathfrak{d}_{d}$
as in (\ref{eq: ultradiameter}). The corresponding ultrametric $D(x,y)$
obtained from (\ref{eq: filter to ultrametric}) coincides with $d$.
In fact, if $x\neq y$ and $\{x,y\}\cap G=\emptyset$, then one has
$d(x,y)\leq\mathfrak{d}_{d}(G)$ thus $d(x,y)\leq D(x,y)$. Moreover,
$\{x,y\}\cap\left(\Omega\backslash S(x,d(x,y))\right)=\emptyset$,
so $D(x,y)\leq\sup\left\{ d(u,v):\,u,v\notin\left(\Omega\backslash S(x,d(x,y))\right)\right\} =\sup\left\{ d(u,v):\,u,v\in S(x,d(x,y))\right\} =d(x,y)$.
Hence, $D(x,y)=d(x,y)$.  
\end{proof}

\subsection{\label{subsec: Some remarks on the conditions in Proposition [filters and ultrametrics} Discussion on the conditions in Proposition \ref{prop: filters and ultrametrics}}

An example of the relation between filters, ideals and ultrametric
is the one of $p$-adic norm (\ref{eq: p-adic norm}). One has $d(\mathbb{Q}^{2})\subseteq\{0\}\cup\{p^{n}:\,n\in\mathbb{Z}\}$
by definition. Vice versa, for all $n\in\mathbb{Z}$, one has $d(p^{-n},0)=||p^{-n}||_{p}=p^{n}$
and $d(0,0)=0$. Thus $d(\mathbb{Q}^{2})=\{0\}\cup\{p^{n}:\,n\in\mathbb{Z}\}$
is grounded and the set of $p$-adic balls generates a proper ideal.
On the other hand, the filter generated by $p$-adic balls and the
function $\mathfrak{d}_{p}(G):=\sup\left\{ ||x-y||_{p}:\,x,y\notin G\right\} $
returns the $p$-adic ultrametric $D_{p}(x,y):=||x-y||_{p}$. 

The condition $\inf\mathfrak{d}(\mathcal{I})>0$ (respectively, $\inf\mathfrak{d}(\mathcal{F})>0$) in Proposition \ref{prop: filters and ultrametrics} (respectively, Corollary \ref{cor: filter to ultrametric and back}) is due to the choice $\max$ for the tropical sum in (\ref{eq: ultrametric, a}) and can be relaxed with a different presentation of $\oplus$, as it will be shown in next section. 

The condition is sufficient to avoid the degeneration of $d$. If this request is not satisfied, degeneration
could occur. For example, let us take $\Omega=\mathbb{N}$ and $\mathcal{I}_{\mathrm{fin}}=\{A\subseteq\mathbb{N}:\,\#A<\infty\}$,
that is the ideal of finite subsets of $\mathbb{N}$. Let ${\displaystyle \mathfrak{d}(A):=1-\sum_{\alpha\in A}\frac{1}{2^{\alpha}}}$,
$A\in\mathcal{I}$. Note that $\mathfrak{d}(\mathcal{I})\subseteq\mathbb{R}_{>0}$
and $\mathfrak{d}$ is decreasing, but sets $A_{n}:=[n]=\{1,\dots,n\}$,
$n\geq3$, satisfy $\{1,2\}\in A_{n}$, $A_{n}\in\mathcal{I}_{\mathrm{fin}}$
and ${\displaystyle \mathfrak{d}(A_{n})=1-\sum_{i=1}^{n}\frac{1}{2^{i}}}\underset{n\rightarrow\infty}{\longrightarrow}0$.
So $d(1,2)=0$. Anyway, this condition is not necessary. For example,
in the $p$-adic case (\ref{eq: p-adic norm}) one has $\mathfrak{d}(\mathbb{Q}\backslash S(x,p^{-n}))=\sup\left\{ ||x-y||_{p}:\,||x-y||_{p}\leq p^{-n}\right\} =p^{-n}\underset{n\rightarrow+\infty}{\longrightarrow}0$.
Thus $\inf\mathfrak{d}(\mathcal{F})=0$, but the resulting $D_{p}(x,y)=||x-y||_{p}$
is not degenerate. 

The monotony condition on $\mathfrak{d}$ is sufficient but not necessary
too. For example, $\mathfrak{d}(F)=\sup\{d(x,y):\,x,y\notin F\}$
in Corollary \ref{cor: filter to ultrametric and back} is not increasing,
in fact it is decreasing. Nevertheless, it generates an ultrametric
since it is derived from an ultrametric. In general, non-monotone
functions do not return ultrametrics. An example is the filter $\mathcal{F}_{\mathrm{fin}}$
associated to $\mathcal{I}_{\mathrm{fin}}$ with any function $\mathfrak{d}:\,\mathcal{F}_{\mathrm{fin}}\longrightarrow\mathbb{R}$
such that $\mathfrak{d}(F)>1+\mathfrak{d}(G)$ for all $F,G\in\mathcal{F}$
with $\{1,2\}\cap F=\emptyset$ and $\{1,2\}\cap G\neq\emptyset$.
Such functions are not increasing. One also finds $d(1,3)\leq\mathfrak{d}(\mathbb{N}\backslash\{1,3\})$
and $d(2,3)\leq\mathfrak{d}(\mathbb{N}\backslash\{2,3\})$, thus $\max\{d(1,3),d(2,3)\}\leq\max\{\mathfrak{d}(\mathbb{N}\backslash\{1,3\}),\mathfrak{d}(\mathbb{N}\backslash\{2,3\})\}\leq d(1,2)-1$.
Hence $\max\{d(1,3),d(2,3)\}<d(1,2)$ and ultrametric triangle inequality
does not hold. 

\subsection{\label{subsec: The case of finite sets} The case of finite sets}

The image $d(\Omega^{2})$ is not grounded in case of finite sets
$\Omega$, so the resulting condition returns the trivial filter (and
ideal) $\mathcal{P}(\Omega)$. Anyway, filters on a finite set $\Omega$
are easily described by the following well-known result (see e.g. \cite{Mendelson2015}). 
\begin{proposition}
\label{prop: (ultra-)filters in finite case} If $\#\Omega<\infty$
and $\mathcal{F}$ is a filter on $\Omega$, then there exists a unique
non-empty subset $\zeta\subseteq\Omega$ such that ${\displaystyle \mathcal{F}=\left\{ \Theta\subseteq\Omega:\,\zeta\subseteq\Omega\right\} }$.
Moreover, $\mathcal{F}$ is an ultrafilter if and only if $\#\zeta=1$. 
\end{proposition}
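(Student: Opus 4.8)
The statement characterizes all filters on a finite set $\Omega$ as principal filters $\mathcal{F}_\zeta=\{\Theta\subseteq\Omega:\zeta\subseteq\Theta\}$ for a unique non-empty $\zeta$, and identifies the ultrafilters as those with $\#\zeta=1$. The natural candidate for $\zeta$ is forced by the filter axioms: set $\zeta:=\bigcap_{A\in\mathcal{F}}A$. Since $\Omega$ is finite and $\mathcal{F}\neq\emptyset$, this intersection is actually a finite intersection of sets in $\mathcal{F}$, so by downward directedness (property 2, restated via property 3 as closure under finite intersections in the lattice $(\mathcal{P}(\Omega),\subseteq)$) we get $\zeta\in\mathcal{F}$. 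In particular $\zeta$ is the $\subseteq$-minimum of $\mathcal{F}$. It is non-empty because $\zeta=\emptyset$ would force $\emptyset\in\mathcal{F}$, contradicting that $\emptyset$ is the bottom element and a proper filter excludes it — here one should note the convention in the excerpt that filters on $\mathcal{P}(\Omega)$ are tacitly proper (otherwise $\mathcal{P}(\Omega)$ itself is the filter with $\zeta=\emptyset$; this edge case should be flagged or excluded explicitly).

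\textbf{Key steps, in order.} First I would observe that $(\mathcal{P}(\Omega),\subseteq)$ is a finite lattice with infimum given by intersection, so property 2 in the definition of filter is equivalent (using property 3) to $A\cap B\in\mathcal{F}$ whenever $A,B\in\mathcal{F}$, and by induction to closure under all finite intersections. Second, using finiteness of $\Omega$ — hence finiteness of $\mathcal{F}\subseteq\mathcal{P}(\Omega)$ — I would conclude $\zeta=\bigcap_{A\in\mathcal{F}}A\in\mathcal{F}$. Third, I would show $\mathcal{F}=\mathcal{F}_\zeta$: the inclusion $\mathcal{F}\subseteq\mathcal{F}_\zeta$ is immediate since $\zeta\subseteq A$ for every $A\in\mathcal{F}$; the reverse inclusion $\mathcal{F}_\zeta\subseteq\mathcal{F}$ follows from upward closure (property 3), since $\zeta\in\mathcal{F}$ and any $\Theta\supseteq\zeta$ is then in $\mathcal{F}$. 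Fourth, uniqueness: if $\mathcal{F}=\mathcal{F}_\zeta=\mathcal{F}_{\zeta'}$ then $\zeta'\in\mathcal{F}_\zeta$ gives $\zeta\subseteq\zeta'$ and symmetrically $\zeta'\subseteq\zeta$, so $\zeta=\zeta'$. Fifth, the ultrafilter clause: a filter $\mathcal{F}_\zeta$ is an ultrafilter iff it is maximal among proper filters; if $\#\zeta\geq 2$, pick $x\in\zeta$ and note $\mathcal{F}_{\zeta\setminus\{x\}}$ is a strictly larger proper filter, so $\mathcal{F}_\zeta$ is not maximal; conversely if $\zeta=\{x\}$, then for any $A\in\mathcal{P}(\Omega)$ either $x\in A$ (so $A\in\mathcal{F}_{\{x\}}$) or $x\in A^c$ (so $A^c\in\mathcal{F}_{\{x\}}$, hence $A\notin\mathcal{F}_{\{x\}}$ by properness), which is exactly the ultrafilter property \eqref{eq: ultrafilter property}.

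\textbf{Main obstacle.} There is no deep difficulty: the whole argument is a finiteness-plus-axiom-chasing exercise, and it is essentially the standard fact that filters on a finite set are principal. The only points demanding care are (i) the interplay between the abstract poset definition of filter given earlier and its concrete restatement in the lattice $\mathcal{P}(\Omega)$ via closure under finite intersections, and (ii) the implicit properness assumption — in the literal wording of the excerpt, $\mathcal{P}(\Omega)$ is a (non-proper) filter for which no non-empty $\zeta$ works, so either one restricts to proper filters or one allows $\zeta=\Omega$-complement conventions; I would state the proposition for proper filters (or equivalently note $\zeta\neq\emptyset$ forces properness and conversely). Everything else — the equality $\mathcal{F}=\mathcal{F}_\zeta$, uniqueness, and the $\#\zeta=1$ characterization of ultrafilters — is routine.
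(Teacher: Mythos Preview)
Your argument is correct and is the standard one. The paper itself does not prove this proposition: it introduces it as ``the following well-known result (see e.g.\ \cite{Mendelson2015})'' and states it without proof, so there is nothing to compare against beyond noting that your construction $\zeta=\bigcap_{A\in\mathcal{F}}A$ and the subsequent axiom-chasing are exactly what any reference would do. Your flag about the implicit properness assumption is well taken (and the displayed condition $\zeta\subseteq\Omega$ in the statement is a typo for $\zeta\subseteq\Theta$, which you silently corrected).
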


Previous proposition means that filters on finite sets are principal. Non-principal
(ultra-)filters appear for infinite sets and play an important role in
mathematical logic through the concept of $\mathcal{F}$-limit \cite{Mendelson2015}.
On the other hand, principal filters satisfies ${\displaystyle \zeta(\mathcal{F})=\bigcap_{H\in\mathcal{F}}H\neq\emptyset}$.
In that case, if $z\in\zeta(\mathcal{F})$, then $\mathfrak{D}(z,y)=\emptyset$
in Corollary \ref{cor: filter to ultrametric and back}, so it is
consistent to assign $d(z,y)=\inf\emptyset:=\sup\mathbb{R}_{+}^{\wedge}=+\infty$.
If one acts via $g$ in (\ref{eq: deinfinitation}),
the resulting ultrametric is $g(d(x,y))\in[0,1]$, that vanishes if
and only if $x=y$ and equals $1$ if and only if $\{x,y\}\cap\zeta(\mathcal{F})\neq\emptyset$. 

The importance of filters on finite sets also lies in its
connections with the statistical amoeba formalism developed in \cite{AK2016b}.
The instability domain for a statistical amoeba is the locus of points
$\boldsymbol{x}$ in the parameter space $\mathbb{R}^n$ where the family 
\begin{equation}
\mathcal{N}_{k}(\boldsymbol{x}):=\left\{ \mathcal{I}\subseteq[N]:\,\#\mathcal{I}=k,\,\mathcal{Z}_{k}(\mathcal{I};\boldsymbol{x})<0\right\} \label{eq: family negative weighted set statistical amoeba}
\end{equation}
 has maximal cardinality, where ${\displaystyle k<\frac{N+1}{2}}$
and 
\begin{equation}
{\displaystyle \mathcal{Z}_{k}(\mathcal{I};\boldsymbol{x}):=-\sum_{\alpha\in\mathcal{I}}e^{f_{\alpha}(\boldsymbol{x})}+\sum_{\beta\notin\mathcal{I}}e^{f_{\beta}(\boldsymbol{x})}}.\label{eq: weight set statistical amoeba}
\end{equation}
These quantities were introduced in order to study real points where the partition function becomes singular and explore the associated metastability. 
In particular, the results in \cite{AK2016b} mean that $\mathcal{N}_{k}(\boldsymbol{x})$
is induced by an ultrafilter if and only if $\boldsymbol{x}$ belongs to the
instability domain $\mathcal{D}_{k-}$.

\section{\label{sec: Monoid homomorphisms and a set/element correspondence}
Monoid homomorphisms and a set/element correspondence}

The combinatorial data that remain in the tropical limit
of a statistical system are based on an order relation, which is captured by the concepts of filters and
ideals. In the tropical perspective, these order relations separate the notions of a 
function and of its expressions. Indeed, let us consider tropical
polynomials, that are algebraic expressions of the form 
\begin{equation}
\bigoplus_{I\in\mathbb{I}}a_{I}\odot\boldsymbol{X}^{\odot I}\label{eq: tropical polynomial}
\end{equation}
where $\mathbb{I}$ is a finite set of multi-indices $I:=(i_{1},\dots,i_{n})$,
$\{a_{I}\}_{I\in\mathbb{I}}$ are coefficients and $\boldsymbol{X}^{\odot I}:=X_{1}^{\odot i_{1}}\odot\dots\odot X_{n}^{\odot i_{N}}$.
The expression (\ref{eq: tropical polynomial}) is equivalent to 
\begin{equation}
{\displaystyle \left(\bigoplus_{I\in\mathbb{I}}a_{I}\odot\boldsymbol{X}^{\odot I}\right)\oplus\underset{{\scriptstyle j\mbox{ times}}}{\underbrace{a_{H}\odot\boldsymbol{X}^{\odot H}}}}\label{eq: tropical polynomial, 2}
\end{equation}
for any $j\in\mathbb{N\cup}\{0\}$ and $a_{H}\in\{a_{I}\}_{I\in\mathbb{I}}$.
Such a redundancy in the description of algebraic functions in many
different but equivalent forms is distinctive of tropical algebra
and can be referred as 	\emph{tropical symmetry}. 

So, if $F(\boldsymbol{x}):\,\Omega\longrightarrow\Lambda$ is a tropical
expression from a set $\Omega$ in a tropical monoid $\Lambda$, then
all the expressions in the set $\mathcal{I}_{F(\boldsymbol{x})}:=\left\{ G(\boldsymbol{x}):\,G(\boldsymbol{x})\preceq F(\boldsymbol{x})\right\} $
satisfy $F(\boldsymbol{x})\oplus\mathcal{I}_{F(\boldsymbol{x})}=\{F(\boldsymbol{x})\}$.
The set $\mathcal{I}_{F(\boldsymbol{x})}$ is a principal ideal for
the poset $\Lambda$. Then, the equivalence of (\ref{eq: tropical polynomial})
and (\ref{eq: tropical polynomial, 2}) can be stated using the language
of filters and ideals. 

For tropical monoids, an erasing element $\top$ is the maximum for
the associated order, so grounded monoids are grounded posets. As already remarked in the Introduction, monoids give a special role to the neutral element $\infty$ of $\oplus$, which is an extremal for the corresponding order. The
fact that $\infty$ is the only extremal element in grounded
posets is fundamental in the symmetry breaking between $\max$ and
$\min$. Lemma \ref{lem: ideals/filters power set case} basically
identifies ideals and filters on the power set $\mathcal{P}(\Omega)$,
which is not a grounded monoid since it has both a minimum $\emptyset$
and a maximum $\Omega$. In case of grounded monoids, the relation
between filters and ideal is more involved and has relevant physical
interpretations connected with the $\max$/$\min$ duality. 

A first implication is that the absence of an erasing element is an
obstruction for the definition of a dual operation $\tilde{\oplus}$
on $\Lambda$ such that $(\Lambda,\tilde{\oplus})$ is a tropical
monoid and the order $\succeq$ induced by $\tilde{\oplus}$ is the
opposite of $\preceq$. For instance, that gives a split between the
$\max$-plus and the $\min$-plus algebras. In applications, it distinguishes
different constraints on physical systems, since the stability of a configuration
is often determined by extremality conditions. This is the case of
a system defined by a Lagrangian $\mathcal{L}:=\frac{1}{2}m||\dot{\boldsymbol{x}}||^{2}-\phi(\boldsymbol{x})$,
local minima for the potential $\phi(\boldsymbol{x})$ are stable
equilibrium points, while local maxima are unstable equilibrium points.
This symmetry breaking leads to an orientation expressed by the order
relation $\preceq$, which is the ``tropical skeleton'' discussed
in the Introduction. 

The $\max$\textbackslash{}$\min$ splitting also concerns the ultrametric
triangle inequality (\ref{eq: ultrametric, a}). Its form with $\oplus=\min$
is 
\begin{equation}
\tilde{d}(x,z)\oplus\tilde{d}(z,y)=\min\left\{ \tilde{d}(x,z),\tilde{d}(z,y)\right\} \leq\tilde{d}(x,y),\quad x,y,z\in\Omega.\label{eq: ultrametric, b}
\end{equation}
Both (\ref{eq: ultrametric, a}) and (\ref{eq: ultrametric, b}) are
summarized in the expression 
\begin{equation}
\tilde{d}(x,z)\oplus\tilde{d}(z,y)\oplus\tilde{d}(x,y)=\tilde{d}(x,z)\oplus\tilde{d}(x,y),\quad x,y,z\in\Omega.\label{eq: ultrametric, algebraic reduction}
\end{equation}
On the other hand, filters and ideals distinguish these forms. The next result is a straightforward consequence of previous observations and stresses the effect of dual presentations (orders) of grounded posets on the geometry of the system (non-degeneracy conditions for the ultrametric). 
\begin{corollary}
\label{cor: conjugate form ultrametric} Let $\mathcal{I}$ be an
ideal on $\Omega$, ${\displaystyle \bigcup_{A\in\mathcal{I}}A=\Omega}$
and $\mathfrak{d}:\,\mathcal{I}\longrightarrow\mathbb{R}_{+}^{\wedge}$
be any increasing function. Then the function 
\begin{equation}
(x,y)\mapsto d(x,y):=\left\{ \begin{array}{cc}
0, & x=y\\
\sup\left\{ \mathfrak{d}(A):\,A\in\mathcal{I},\,\{x,y\}\subseteq A\right\} , & x\neq y
\end{array}\right.\label{eq: ideal to ultrametric, b}
\end{equation}
is an ultrametric in the form (\ref{eq: ultrametric, b}). In particular,
if $d$, $\mathcal{I}_{d}$, $\mathcal{F}_{d}$ and $\mathfrak{d}_{d}$
are as in Corollary \ref{cor: filter to ultrametric and back}, then
$\mathfrak{d}_{d}$ induces the presentations (\ref{eq: ultrametric, a})
and (\ref{eq: ultrametric, b}) for the same ultrametric structure,
when applied to $\mathcal{I}_{d}$ and $\mathcal{F}_{d}$ respectively. 
\end{corollary}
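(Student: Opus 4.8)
The plan is to treat this as a mostly formal dualization of Proposition \ref{prop: filters and ultrametrics}, exploiting that the ultrametric inequality in the form (\ref{eq: ultrametric, b}) is literally (\ref{eq: ultrametric, a}) read ``upside down'', so we just need to verify the three metric axioms for (\ref{eq: ideal to ultrametric, b}) and then invoke Corollary \ref{cor: filter to ultrametric and back} for the last sentence. First I would check well-definedness and positivity: for $x\neq y$, since $\bigcup_{A\in\mathcal{I}}A=\Omega$ and $\mathcal{I}$ is upward directed, the collection $\{A\in\mathcal{I}:\{x,y\}\subseteq A\}$ is nonempty, so the supremum in (\ref{eq: ideal to ultrametric, b}) is taken over a nonempty set; it lies in $\mathbb{R}_{+}^{\wedge}$ and is $>0$ because $\mathfrak{d}$ takes values in $\mathbb{R}_{+}^{\wedge}$. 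Symmetry is immediate from the symmetry of the condition $\{x,y\}\subseteq A$ in $x$ and $y$, and the vanishing-on-the-diagonal property holds by the first branch of the definition.

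The heart of the verification is the strong triangle inequality in the form (\ref{eq: ultrametric, b}), namely $\min\{d(x,z),d(z,y)\}\le d(x,y)$ for distinct $x,y,z$ (the cases where two of the points coincide being trivial). Here I would argue: pick any $A\in\mathcal{I}$ with $\{x,y\}\subseteq A$; I claim $\mathfrak{d}(A)\le\max\{d(x,z),d(z,y)\}$, which after taking the supremum over such $A$ gives $d(x,y)\le\max\{d(x,z),d(z,y)\}$ — but that is the \emph{wrong} direction, so in fact the right move is the reverse. Concretely: let $B\in\mathcal{I}$ realize (or approximate) $d(x,z)$ with $\{x,z\}\subseteq B$, and $C\in\mathcal{I}$ with $\{z,y\}\subseteq C$. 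By upward directedness there is $A'\in\mathcal{I}$ with $B,C\subseteq A'$, hence $\{x,y\}\subseteq\{x,z,y\}\subseteq B\cup C\subseteq A'$, and since $\mathfrak{d}$ is increasing, $\mathfrak{d}(A')\ge\max\{\mathfrak{d}(B),\mathfrak{d}(C)\}$. Taking $B,C$ along sequences approaching the two suprema $d(x,z)$ and $d(z,y)$ yields $d(x,y)\ge\sup\mathfrak{d}(A')\ge\max\{d(x,z),d(z,y)\}\ge\min\{d(x,z),d(z,y)\}$, which is exactly (\ref{eq: ultrametric, b}). The one subtlety is the passage to the limit when a supremum is not attained: if $d(x,z)=+\infty$ the argument still runs since $\mathfrak{d}$ is $\mathbb{R}_{+}^{\wedge}$-valued and $A'\supseteq B$ forces $\mathfrak{d}(A')=+\infty$ too; if the supremum is finite but unattained, a standard $\varepsilon$-argument on a cofinal sequence $B_n$ with $\{x,z\}\subseteq B_n$ and $\mathfrak{d}(B_n)\to d(x,z)$ closes the gap, using that $A'_n:=B_n\vee C_n\in\mathcal{I}$ still contains $\{x,y\}$.

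For the final sentence, I would note that by Corollary \ref{cor: filter to ultrametric and back} the function $\mathfrak{d}_d(F)=\sup\{d(u,v):u,v\notin F\}$ applied to the filter $\mathcal{F}_d$ reconstructs $d$ through (\ref{eq: filter to ultrametric}), giving the presentation (\ref{eq: ultrametric, a}); applying the \emph{same} set function to the dual ideal $\mathcal{I}_d$ via (\ref{eq: ideal to ultrametric, b}) — using $\mathfrak{d}_d(\Omega\setminus G)$, which is increasing in $\Omega\setminus G$ because it is decreasing in $G$ — reconstructs the same $d$ by the computation already done in the proof of Corollary \ref{cor: filter to ultrametric and back} (the key identity $\Omega\setminus S(x,d(x,y))$ is the complement witnessing both bounds). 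I expect the main obstacle to be nothing more than bookkeeping the direction of the inequalities correctly — it is easy to conflate (\ref{eq: ultrametric, a}) and (\ref{eq: ultrametric, b}) and prove the wrong bound — together with handling the unattained-supremum and $+\infty$ cases cleanly; no genuinely new idea beyond the duality already established in Proposition \ref{prop: filters and ultrametrics} and Corollary \ref{cor: filter to ultrametric and back} is needed.
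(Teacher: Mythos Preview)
Your proposal is correct and follows essentially the same approach as the paper: the paper's proof is extremely terse, simply noting symmetry and positivity and then saying that $\min\{d(x,y),d(y,z)\}\le d(x,z)$ follows ``by using the same arguments as in Proposition \ref{prop: filters and ultrametrics}'', and invoking Corollary \ref{cor: filter to ultrametric and back} for the last sentence---exactly the dualization strategy you spell out. One small remark: your argument in fact establishes the stronger bound $d(x,y)\ge\max\{d(x,z),d(z,y)\}$ (which, by permuting $x,y,z$, forces all pairwise distances among distinct points to coincide), so form (\ref{eq: ultrametric, b}) follows \emph{a fortiori}; this is fine, but you may want to present the argument as yielding the $\min$-inequality directly to mirror the paper's phrasing.
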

\begin{proof}
Let $\mathcal{I}$ be an ideal and consider any increasing positive
function $\mathfrak{d}:\,(\mathcal{I},\subseteq)\longrightarrow(\mathbb{R}_{+}^{\wedge},\leq)$.
One can see that $d$ is symmetric and $\mathfrak{D}(x_{1},x_{2})\neq\emptyset$,
so $d(x_{1},x_{2})$ is strictly positive if $x_{1}\neq x_{2}$. One also gets $\min\{d(x,y),d(y,z)\}\leq d(x,z)$
by using the same arguments as in Proposition \ref{prop: filters and ultrametrics}.
From Corollary \ref{cor: filter to ultrametric and back}, $\mathfrak{d}_{d}$
returns the form (\ref{eq: ultrametric, a}) for the ultrametric structure
of $d$ when (\ref{eq: ideal to ultrametric}) is applied to the ideal $\mathcal{I}_{d}$.  
\end{proof}

\subsection{\label{subsec: Duality and the role of linear orders} Duality and the role of linear orders}

We now focus on a class of tropical algebras that offers a simple but
flexible framework for physical applications. So, we consider a
tropical monoid $\Lambda$ and introduce 
\begin{equation}
\begin{array}{cc}
\mathcal{\iota}(y):=\left\{ x\in\Lambda:\,x\prec y\right\} , & \mathcal{I}_{\Lambda}:=\left\{ \mathcal{\iota}(y):\,y\in\Lambda\right\} \\
\phi(y):=\left\{ x\in\Lambda:\,y\preceq x\right\} , & \mathcal{F}_{\Lambda}:=\left\{ \phi(y):\,y\in\Lambda\right\} 
\end{array}.\label{eq: principal filters and dual}
\end{equation}
In principle the order $\preceq$ can be general. In fact, this freedom
allows one to deal with a broader class of orders, including set-theoretic
inclusion $\subseteq$. Anyway, the use of filters of the form (\ref{eq: principal filters and dual}) gives a special role to totally (or linearly) ordered sets. Indeed, one can characterize totally ordered sets through the following simple property. 
\begin{proposition}
\label{prop: homomorphism and linear order} The mapping $\iota:\,(\Lambda,\oplus,\infty)\longrightarrow(\mathcal{I}_{\Lambda},\cup,\emptyset)$
defined in (\ref{eq: principal filters and dual}) is a monoid homomorphisms
if and only if $(\Lambda,\oplus,\infty)$ is totally ordered. The
mapping $\phi:\,(\Lambda,\oplus,\infty)\longrightarrow(\mathcal{F}_{\Lambda},\cap,\Lambda)$
is always a monoid homomorphism.
\end{proposition}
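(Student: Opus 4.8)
The plan is to treat $\phi$ and $\iota$ separately, since the assertion about $\phi$ needs no hypothesis on the order while the one about $\iota$ is exactly where the join-semilattice structure forces linearity.

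\textbf{The map $\phi$.} First I would check $\phi(\infty)=\Lambda$: this is immediate from (\ref{eq: order induced by tropical addition}), since $\infty$ is the minimum and hence $\infty\preceq x$ for every $x\in\Lambda$. Next I would unfold $\phi(x)\cap\phi(y)=\{z\in\Lambda:\ x\preceq z\ \text{and}\ y\preceq z\}$, i.e.\ the set of common upper bounds of $\{x,y\}$. By (\ref{eq: sum sup}) one has $x\oplus y=\sup\{x,y\}$, and the defining property of the supremum says precisely that $z$ is a common upper bound of $x$ and $y$ if and only if $x\oplus y\preceq z$. Hence $\phi(x)\cap\phi(y)=\phi(x\oplus y)$. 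In particular $\mathcal{F}_\Lambda$ is closed under $\cap$ and contains $\Lambda=\phi(\infty)$, so $(\mathcal{F}_\Lambda,\cap,\Lambda)$ is genuinely a monoid and $\phi$ is a homomorphism onto it. This settles the second assertion.

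\textbf{The map $\iota$, sufficiency.} Suppose $(\Lambda,\preceq)$ is totally ordered. Then $\iota(\infty)=\{x:\ x\prec\infty\}=\emptyset$ because $\infty$ is the minimum. Given $x,y$, totality lets me assume without loss of generality $x\preceq y$, so that $x\oplus y=y$. I would then observe $\iota(x)\subseteq\iota(y)$: if $z\prec x$ then $z\preceq y$ by transitivity, and $z\ne y$ (otherwise $z=y$ gives $y\prec x\preceq y$, impossible by antisymmetry), hence $z\prec y$. Consequently $\iota(x)\cup\iota(y)=\iota(y)=\iota(x\oplus y)$; in particular $\mathcal{I}_\Lambda$ is a chain of strict lower sets, closed under $\cup$ and with least element $\emptyset$, so $(\mathcal{I}_\Lambda,\cup,\emptyset)$ is a monoid and $\iota$ a homomorphism onto it.

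\textbf{The map $\iota$, necessity.} For the converse I would argue by contraposition. Assume $\Lambda$ is not totally ordered and fix incomparable $x,y$; put $s:=x\oplus y=\sup\{x,y\}$. Since $x\not\succeq y$, the element $x$ is not an upper bound of $\{x,y\}$, so $s\ne x$; as $x\preceq s$ always holds, this gives $x\prec s$, i.e.\ $x\in\iota(s)$. If $\iota$ were a monoid homomorphism we would have $\iota(s)=\iota(x\oplus y)=\iota(x)\cup\iota(y)$; but $x\notin\iota(x)$ (as $\prec$ is irreflexive), so $x\in\iota(y)$, that is $x\prec y$ — contradicting incomparability. Hence $\iota$ is not a homomorphism, and the equivalence follows.

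\textbf{Where the difficulty sits.} The computation is short; the only genuinely delicate point is the necessity direction, and there the strictness built into $\iota(y)=\{x:\ x\prec y\}$ is doing all the work: it is the combination of $x\notin\iota(x)$ with $x\prec x\oplus y$ for incomparable $x,y$ that lets the construction detect non-comparability. One should also take a little care that $(\mathcal{I}_\Lambda,\cup,\emptyset)$ (resp.\ $(\mathcal{F}_\Lambda,\cap,\Lambda)$) really is a monoid, but in both cases the homomorphism identity $\iota(x\oplus y)=\iota(x)\cup\iota(y)$ (resp.\ $\phi(x\oplus y)=\phi(x)\cap\phi(y)$) established above supplies the needed closure for free.
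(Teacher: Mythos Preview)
Your proof is correct and follows essentially the same approach as the paper's: both verify $\phi(x)\cap\phi(y)=\phi(x\oplus y)$ via the supremum property, handle sufficiency for $\iota$ by reducing to a chain inclusion $\iota(x)\subseteq\iota(y)$, and prove necessity by exhibiting, for incomparable $x,y$, the element $x\in\iota(x\oplus y)\setminus(\iota(x)\cup\iota(y))$. Your contrapositive is slightly more direct than the paper's (which introduces an auxiliary $z$ with $\iota(z)=\iota(x)\cup\iota(y)$ before reaching the same contradiction), but this is a cosmetic difference rather than a distinct route.
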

\begin{proof} $\emptyset=\iota(\infty)$ belongs to $\mathcal{I}_{\Lambda}$ and
it is a neutral element for $\cup$. So $(\mathcal{I}_{\Lambda},\cup,\emptyset)$
is a tropical monoid if and only if, for all $x,y$ in $\Lambda$,
there exists $z\in\Lambda$ such that $\iota(x)\cup\iota(y)=\iota(z)$.
Let us assume that $\Lambda$ is not totally ordered, then there exist
$x,y\in\Lambda$ such that $x\oplus y\notin\{x,y\}$. In particular
$x\neq y$, $x\notin\iota(y)$ and $y\notin\iota(x)$. Let us take
$z\in\Lambda$ such that $\iota(x)\cup\iota(y)=\iota(z)$. From previous
observation one has $\{x,y\}\cap\iota(z)=\emptyset$. So $x\in\iota(x\oplus y)$
but $x\notin\iota(z)=\iota(x)\cup\iota(y)$, hence $\iota(x\oplus y)\neq\iota(x)\cup\iota(y)$.
Thus $\iota$ is not a monoid homomorphism. If instead $\Lambda$
is a linearly ordered set, then it is easy to see that $\iota(a)\cup\iota(b)=\iota(\sup\{a,b\})=\iota(a\oplus b)$,
so $(\mathcal{I}_{\Lambda},\cup,\emptyset)$ is a tropical monoid
and $\iota$ is a monoid homomorphism. 

Now let us consider the tropical monoid $(\mathcal{F}_{\Lambda},\cap,\Lambda)$.
If $z\in\phi(x)\cap\phi(y)$ then $x\preceq z$ and $y\preceq z$,
hence $x\oplus y\preceq z$ and $z\in\phi(x\oplus y)$. Thus $\phi(x)\cap\phi(z)\subseteq\phi(x\oplus y)$.
If $z\in\phi(x\oplus y)$ then $x\preceq x\oplus y\preceq z$ so $z\in\phi(x)$.
Similarly $z\in\phi(y)$, hence $z\in\phi(x)\cap\phi(y)$. This means
that $\phi(x)\cap\phi(y)=\phi(x\oplus y)$. Moreover $\phi(\infty)=\Lambda$,
hence $\phi:\,(\Lambda,\oplus,\infty)\longrightarrow(\mathcal{F}_{\Lambda},\cap,\Lambda)$
is always a monoid homomorphism.  
\end{proof}

Set complementation $c:\,(\mathcal{P}(\Lambda),\cup,\emptyset)\longrightarrow(\mathcal{P}(\Lambda),\cap,\Lambda)$
is a monoid isomorphism. In the case of totally ordered lattices $\Lambda$,
the restriction $\tilde{c}:=c|_{\mathcal{I}(\Lambda)}$ acts as $\tilde{c}(\iota(x))=\phi(x)$
and is a monoid isomorphism that allows to identify $(\mathcal{I}_{\Lambda},\cup,\emptyset)$
and $(\mathcal{F}_{\Lambda},\cap,\Lambda)$ through $\phi=\tilde{c}\circ\iota$.
Thus, linearly ordered sets imply the compatibility of set complementation and algebraic characteristics. 

Proposition \ref{prop: homomorphism and linear order} also underlines a correspondence between elements and (a class of) subsets of a monoid $\Lambda$ that is compatible with tropical composition. As already mentioned in the Introduction, such a set/element correspondence is manifest in the physical identification of the tropical
limit with the scaling $N_{A}\rightarrow\infty$. In this context,
the set/element correspondence suggested in Section \ref{sec: Tropical limit and the role of Boltzmann constant} involves the tropical totally ordered
monoid $\mathbb{N}^{\vee}$ and the bijection between $N_{A}\in\mathbb{N}^{\vee}$
and $\phi(N_{A})=\{n\in\mathbb{N}^{\vee}:\,N_{A}\leq n\}\in\mathcal{F}_{\mathbb{N}}$ in (\ref{eq: principal filters and dual}). 

One can also look at the set of principal ideals 
\begin{equation}
\phi_{c}(y):=\left\{ x\in\Lambda:\,x\preceq y\right\} ,\quad\mathcal{J}_{\Lambda}:=\left\{\phi_{c}(y):\,y\in\Lambda\right\} .\label{eq: principal ideals}
\end{equation}
All posets $\phi_{c}$ have both a minimum $\bot$ and a maximum
$y$, hence none of them is grounded. Clearly, $\phi_{c}$ is never
a monoid homomorphism between $(\Lambda,\oplus,\bot)$ and $(\mathcal{P}(\Lambda),\cup,\emptyset)$
since $\phi_{c}(\bot)=\{\bot\}\neq\emptyset$. Anyway, it induces
a monoid homomorphism $\hat{\phi}_c:=(\Lambda,\oplus,\bot)\longrightarrow\left(\mathcal{P}(\Lambda\backslash\{\bot\}),\cup,\emptyset\right)$
by ``grounding'', i.e. considering $\hat{\phi}_c(y)=\phi_{c}(y)\backslash\{\bot\}$.
This gives the same result as $\phi$ in (\ref{eq: principal filters and dual})
with the opposite order $\succeq$. Proposition \ref{prop: homomorphism and linear order} shows that the slight difference between (grounded) $\prec$ in (\ref{eq: principal filters and dual}) and
(non-grounded) $\preceq$ in (\ref{eq: principal ideals}) brings
to different constraints in order to preserve algebraic properties.

\subsection{\label{subsec: Linear orders via homomorphisms} Linear orders via homomorphisms}

If the poset $\Lambda$ is not totally ordered, then
there could exist $y\in\Lambda$ such that $\phi_{c}(y)$ is not
grounded even if $\Lambda$ be a grounded monoid. In such a case,
there exist $y,\top,\top_{1}\in\Lambda$ such that $\max\phi_{c}(y)=\top\prec\top_{1}$.

In the meantime, non-totally ordered sets give one the freedom to
have several totally ordered substructures (chains) in the same framework.
A way to extract totally ordered sets from posets is given by monoid
homomorphisms from a totally ordered set $\Delta$. So, for a monotone
increasing $\vartheta:\,\Delta\longrightarrow\Lambda$ one can define
the maps 
\begin{equation}
\begin{array}{cc}
\iota_{\vartheta}(a):=\iota(\vartheta(a)), & \quad\phi_{\vartheta}(a):=\phi_c(\vartheta(a))\end{array}.\label{eq: ideals for filter presentation}
\end{equation}

Let us consider the totally ordered tropical monoid $\mathbb{R}_{\max}$
and an increasing map $\vartheta:\,\mathbb{R}^{\vee}\longrightarrow\Lambda$.
Note that one has $\iota_{\vartheta}(a)\cup(\iota_{\vartheta}(a)\cup\iota_{\vartheta}(b))=\iota_{\vartheta}(a)\cup\iota_{\vartheta}(b)$
for all $a,b\in\mathbb{R}^{\vee}$, hence $\iota_{\vartheta}(a)\subseteq\iota_{\vartheta}(a)\cup\iota_{\vartheta}(b)$.
Thus one has $\iota_{\vartheta}(a)\subseteq\iota_{\vartheta}(\max\{a,b\})$
and $\iota_{\vartheta}(b)\subseteq\iota_{\vartheta}(a\oplus b)$, which
implies $\iota_{\vartheta}(a)\cup\iota_{\vartheta}(b)\subseteq\iota_{\vartheta}(\max\{a,b\})$.
But $\iota_{\vartheta}(\max\{a,b\})\in\{\iota_{\vartheta}(a),\iota_{\vartheta}(b)\}$,
then $\iota_{\vartheta}(\max\{a,b\})\subseteq\iota_{\vartheta}(a)\cup\iota_{\vartheta}(b)$.
This means that $\iota_{\vartheta}(a\oplus b)=\iota_{\vartheta}(a)\cup\iota_{\vartheta}(b)$
holds for all increasing functions $\vartheta$ from $\mathbb{R}^{\vee}$
to $\Lambda$. The only condition needed to get a monoid homomorphism
is $\iota_{\vartheta}(-\infty)=\{-\infty\}$. Physically, one can say
that $\iota_{\vartheta}$ is a monoid homomorphism if and only if it
preserves the choice of the vacuum. 

Having remarked the opportunity to extract totally ordered sets from general posets using monoid homomorphisms, a natural question in the light of algebraic correspondence between elements and subsets is the extension of homomorphism from sets to their power sets. At this purpose, it is worth referring to a tropical monoid $(\Lambda,\oplus,\infty)$ as a \textit{almost complete lattice} if the extension 
\begin{equation}
\hat{\Lambda}:=\Lambda\cup\{\top\}\label{eq: almost complete, extension}
\end{equation}
with the relation $a\preceq\top$ for all $a\in\Lambda$ is a join-complete
semilattice, i.e. it admits arbitrary $\sup$ (and sums). 

For example, the
tropical monoid $(\Lambda,\oplus,\bot)=(\mathcal{P}_{\mathrm{fin}}(\mathbb{N}),\cup,\emptyset)$ of finite subsets of $\mathbb{N}$ is grounded since $\mathbb{N}$ is not finite. If $\{A_{n}\}$ is any collection of elements in $\mathcal{P}_{\mathrm{fin}}$ and ${\displaystyle \bigcup_{n}A_{n}}$ is not finite, then $\mathbb{N}=\sup A_{n}$ in $\mathcal{P}_{\mathrm{fin}}(\mathbb{N})\cup\{\mathbb{N}\}$.
So it is also almost complete with the extension $\top=\mathbb{N}$. 

On the other hand, let us take $(\mathbb{R}_{\star}^{\vee}:=\mathbb{R}^{\vee}\backslash\{0\},\sup,-\infty)$. One can consider
an extension $\mathbb{R}_{\star}^{\vee}\cup\{+\infty\}$ as in (\ref{eq: almost complete, extension}).
If there exists ${\displaystyle \omega:=\sup\{a\in\mathbb{R}_{\star}^{\vee}:\,a<0\}\in\mathbb{R}_{\star}^{\vee}\cup\{+\infty\}}$,
then $\omega$ cannot be neither negative, since it would be ${\displaystyle \omega<\frac{\omega}{2}<0}$,
nor positive, since ${\displaystyle 0<\frac{\omega}{2}<\omega}$ would
be an upper bound smaller than the least upper bound. So $\omega$
does not exists. 

Even if the distinction of complete semilattices and complete lattices is inessential from the perspective of order theory, it becomes relevant when one looks at the algebraic structure, including homomorphisms. These maps distinguish between an operation and the dual one. Also note that all join-complete semilattices are also almost complete. The following proposition clarifies some simple properties of the link between the tropical structure of $\Lambda$ and of its power set. 
 
\begin{proposition}
\label{prop: almost completeness} The tropical monoid $(\Lambda,\oplus,\infty)$
is an almost complete lattice if and only if $\iota(y)$ is an almost
complete lattice for all $y\in\Lambda$. Moreover, if $\psi:\,(\Lambda,\oplus,\infty)\longrightarrow(\tilde{\Lambda},\tilde{\oplus},\tilde{\infty})$ is a monoid homomorphism and $\phi(\Lambda)$ is grounded, then $\Lambda$ is grounded too. 
\end{proposition}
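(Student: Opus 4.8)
The plan is to first translate ``almost complete lattice'' into a concrete order-theoretic condition and then handle the equivalence and the addendum separately. Since every tropical monoid is a join-semilattice with least element $\infty$ (by the computation leading to (\ref{eq: sum sup})), adjoining one formal top $\top$ to a poset $P$ yields a join-complete semilattice precisely when every subset of $P$ that has an upper bound in $P$ already has a least upper bound in $P$. Thus $(\Lambda,\oplus,\infty)$ is almost complete iff every bounded-above $S\subseteq\Lambda$ has $\sup_{\Lambda}S$, and $\iota(y)$ is almost complete iff every $S\subseteq\iota(y)$ bounded above within $\iota(y)$ has $\sup_{\iota(y)}S$. I would also record one preliminary remark: since $\oplus$ supplies binary joins, an element maximal in $\Lambda$ is in fact the maximum of $\Lambda$ (if $z$ is maximal then $z\oplus w=z$, hence $w\preceq z$, for every $w$).

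For ``$\Lambda$ almost complete $\Rightarrow$ each $\iota(y)$ almost complete'', fix $y$ and let $S\subseteq\iota(y)$ be bounded above in $\iota(y)$ by some $z\prec y$. Then $S$ is bounded above in $\Lambda$, so $\sup_{\Lambda}S$ exists; as $z$ bounds $S$ we get $\sup_{\Lambda}S\preceq z\prec y$, whence $\sup_{\Lambda}S\in\iota(y)$ and it is the supremum of $S$ inside $\iota(y)$. This direction is routine.

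The converse is where the work lies. Assume each $\iota(y)$ is almost complete and let $S\subseteq\Lambda$ be nonempty with set of upper bounds $U\neq\emptyset$. Call $y$ \emph{admissible} if $S$ has an upper bound strictly below $y$ (equivalently, $S\subseteq\iota(y)$ and $S$ is bounded above inside $\iota(y)$); for admissible $y$ the element $s_y:=\sup_{\iota(y)}S$ exists. The key step is to show $s_y$ is independent of the admissible $y$: given admissible $y_1,y_2$ one checks $y_1\oplus y_2$ is admissible and $\iota(y_i)\subseteq\iota(y_1\oplus y_2)$, and that a supremum of $S$ computed in the larger segment lands in the smaller one (being $\preceq$ the supremum computed there), so $s_{y_1}=s_{y_1\oplus y_2}=s_{y_2}$; write $s^{*}$ for the common value, an upper bound of $S$. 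Now split cases. If some $z_0\in U$ is not maximal in $\Lambda$, pick $v\succ z_0$; then $v$ is admissible and $s^{*}=s_v$, and for any $u\in U$ a short case analysis according to how $u$ compares with $z_0$ produces an admissible $y$ (namely $v$, or $u$, or $z_0\oplus u$) inside whose segment $u$ either lies or bounds $S$, giving $s^{*}\preceq u$; hence $s^{*}=\min U=\sup_{\Lambda}S$. If instead every element of $U$ is maximal in $\Lambda$, then by the preliminary remark $\Lambda$ has a maximum $M$ and $U=\{M\}$, so $\sup_{\Lambda}S=M$. Either way $\Lambda$ is almost complete. I expect the $y$-independence of $s^{*}$, together with the case analysis on maximal versus non-maximal upper bounds, to be the only genuinely delicate points.

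For the final assertion I would argue by contraposition: if $\Lambda$ is not grounded it has an erasing element $\top$, i.e.\ $a\oplus\top=\top$ for all $a\in\Lambda$; applying the monoid homomorphism $\psi$ gives $\psi(a)\,\tilde{\oplus}\,\psi(\top)=\psi(a\oplus\top)=\psi(\top)$ for all $a$, so $\psi(\top)$ is an erasing element of the image $\psi(\Lambda)$, and therefore $\psi(\Lambda)$ is not grounded either.
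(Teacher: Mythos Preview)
Your proof is correct. The paper takes a slightly more economical route for the equivalence by first noting that $\iota(y)$ is almost complete if and only if the principal ideal $\phi_c(y)=\iota(y)\cup\{y\}$ is a join-complete semilattice (adjoining a top to $\iota(y)$ simply reproduces $\phi_c(y)$). Working with $\phi_c(y)$ rather than $\iota(y)$ lets the converse go through without your admissibility apparatus and without the maximal/non-maximal case split: if $\mathcal S\subseteq\Lambda$ has an upper bound $y\in\Lambda$ then $\mathcal S\subseteq\phi_c(y)$ and one takes the supremum there; if no such $y$ exists, $\sup\mathcal S=\top$ in $\hat\Lambda$. On the other hand, the paper is terse about why the supremum computed inside one $\phi_c(y)$ really is the supremum in $\hat\Lambda$ (an arbitrary upper bound $u$ need not lie in $\phi_c(y)$); your ``$s_y$ is independent of the admissible $y$'' step, comparing through $y_1\oplus y_2$, is precisely what justifies this, so in effect you are making explicit a point the paper glosses over. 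For the grounded-image addendum, both arguments are the same contrapositive, just stated in opposite directions.
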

\begin{proof} Clearly $\iota(y)$ is an almost complete lattice if and only if $\phi_c(y)$
is a join-complete semilattice. Let $\Lambda$ be an almost complete lattice
and take its extension $\hat{\Lambda}$ as (\ref{eq: almost complete, extension}).
If $\mathcal{S}\subseteq\phi_c(y)$ then all elements $z\in\mathcal{S}$
satisfy $z\preceq y$, so $\sup\mathcal{S}$ exists in $\hat{\Lambda}$
and $\sup\mathcal{S}\preceq y$. In particular, $\sup\mathcal{S}\in\phi_c(y)$
and $\phi_c(y)$ is join-complete. Vice versa, let $\phi_c(x)$
be a join-complete semilattice for all $x\in\Lambda$ and take any $\mathcal{S}\subseteq\Lambda$.
If there exists $y\in\Lambda$ such that $y_{1}\preceq y$ for all
$y_{1}\in\mathcal{S}$, then $\mathcal{S}\subseteq\phi_c(y)$,
which is join-complete. Hence $\sup\mathcal{S}$ exists and it is
an element of $\phi_c(y)$. Otherwise, for all $y\in\Lambda$ there
exists $y_{1}\in\mathcal{S}$ such that $y_{1}\not\preceq y$. So
$\top\in\hat{\Lambda}$ is the only upper bound for $\mathcal{S}$,
hence it is $\sup\mathcal{S}=\top$. This is independent of the choice
of $\mathcal{S}$ among all subsets of $\Lambda$ such that $\sup\mathcal{S}\notin\Lambda$.
Finally, if $\top\in\mathcal{T}\subseteq\hat{\Lambda}$ then $\sup\mathcal{T}=\top$.
Thus $\hat{\Lambda}$ is a join-complete semilattice. 

Now, let us assume that $\psi:\,(\Lambda,\oplus,\infty)\longrightarrow(\tilde{\Lambda},\tilde{\oplus},\tilde{\infty})$ is a monoid homomorphism. If $\psi(\Lambda)$ is grounded, then for all $a\in\Lambda$ there exists $b\in\Lambda$ such that $\psi(a)\neq\psi(a)\tilde{\oplus}\psi(b)=\psi(a\oplus b)$, so $a\neq a\oplus b$. Hence $\Lambda$ is grounded.  
\end{proof}

Finally, next result extends the link in Proposition \ref{prop: almost completeness} to the case of monoid homomorphisms. 

\begin{proposition}
\label{prop: homomorphism totally ordered sets} Let $(\Delta,\max,-\infty)$
be a tropical monoid whose induced order is total. So, 
\begin{enumerate}
\item if $\psi:\,(\Delta,\max,-\infty)\longrightarrow(\Lambda,\oplus,\bot)$
is a monoid homomorphism, then $\psi(\Delta)$ is totally ordered. 
\item If $\phi_{\vartheta}\,(\Delta,\max,-\infty)\longrightarrow\left(\{A\subseteq\Lambda:\,\bot\in A\},\cup,\{\bot\}\right)$
in (\ref{eq: ideals for filter presentation}) is a monoid homomorphism,
then $\vartheta:\,\Delta\longrightarrow\Lambda$ is a monoid
homomorphism too. 
\item If $\Lambda$ is an almost complete lattice, then for every monoid
homomorphism $\psi:\,\Delta\longrightarrow\mathcal{P}(\Lambda)$
there exist a sublattice $\Delta_{0}\subseteq\Delta$ and monoid homomorphisms
$\vartheta:\,\Delta_{0}\longrightarrow\Lambda$ and $\bar{\iota}:\,\Delta_{0}\longrightarrow\mathcal{P}(\Lambda)$
such that $\bar{\iota}=\iota_{\vartheta}=\iota\circ\vartheta$ and $\phi_{\vartheta}$
covers $\psi$, namely $\psi(a)\subseteq\bar{\iota}(a)\cup\{\vartheta(a)\}=\phi_{\vartheta}(a)$ for all $\psi(a)$ having an upper bound. 
\end{enumerate}
\end{proposition}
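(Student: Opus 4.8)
The plan is to treat the three items in order; items 1 and 2 are short, and item 3 carries the real content.

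For item 1 I would just unfold the homomorphism property along the total order of $\Delta$. Given $u,v\in\psi(\Delta)$, choose $a,b\in\Delta$ with $\psi(a)=u$, $\psi(b)=v$; by totality we may assume $a\preceq b$, i.e. $\max\{a,b\}=b$, so $u\oplus v=\psi(a)\oplus\psi(b)=\psi(\max\{a,b\})=\psi(b)=v$ and hence $u\preceq v$. Thus any two elements of $\psi(\Delta)$ are comparable. For item 2, recall that in (\ref{eq: ideals for filter presentation}) the map $\vartheta$ is monotone increasing; since $\Delta$ is a chain, $a\preceq b$ gives $\vartheta(a)\preceq\vartheta(b)$, hence $\phi_c(\vartheta(a))\subseteq\phi_c(\vartheta(b))$ and $\vartheta(a)\oplus\vartheta(b)=\sup\{\vartheta(a),\vartheta(b)\}=\vartheta(b)=\vartheta(\max\{a,b\})$; so $\vartheta$ already preserves $\oplus$ regardless of any hypothesis. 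The assumption that $\phi_\vartheta$ is a monoid homomorphism is then used only through $\phi_\vartheta(-\infty)=\{\bot\}$, i.e. $\phi_c(\vartheta(-\infty))=\{\bot\}$, which forces $\vartheta(-\infty)=\bot$; combined with the previous identity this makes $\vartheta$ a monoid homomorphism.

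For item 3 the key move is to push $\psi$ through a supremum. Since $\Lambda$ is an almost complete lattice, in $\hat{\Lambda}=\Lambda\cup\{\top\}$ every subset of $\Lambda$ has a supremum, so I would set $\vartheta(a):=\sup_{\hat{\Lambda}}\psi(a)$ and $\Delta_{0}:=\{a\in\Delta:\vartheta(a)\in\Lambda\}$, which is exactly the set of $a$ for which $\psi(a)$ has an upper bound in $\Lambda$. First I would record the elementary facts: $\psi(-\infty)=\emptyset$ (neutral element), so $\vartheta(-\infty)=\sup\emptyset=\bot\in\Lambda$ and $-\infty\in\Delta_{0}$; applying $\psi$ to $b=\max\{a,b\}$ shows $a\preceq b\Rightarrow\psi(a)\subseteq\psi(b)$, so $\vartheta$ is monotone (this is just item 1 for the codomain $(\mathcal{P}(\Lambda),\cup,\emptyset)$, giving that $\psi(\Delta)$ is a chain of subsets). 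Because $\Delta$ is linearly ordered, any subset of it is closed under $\max$ and $\min$, so $\Delta_{0}$ is automatically a sublattice, and it is a submonoid since $-\infty\in\Delta_{0}$. Then, using that the supremum of a union equals the supremum of the suprema, $\vartheta(\max\{a,b\})=\sup(\psi(a)\cup\psi(b))=\sup\{\vartheta(a),\vartheta(b)\}=\vartheta(a)\oplus\vartheta(b)$ for $a,b\in\Delta_{0}$, both sides lying in $\Lambda$ because $\max\{a,b\}\in\Delta_{0}$; hence $\vartheta|_{\Delta_{0}}:\Delta_{0}\to\Lambda$ is a monoid homomorphism. Next I would put $\bar{\iota}:=\iota_{\vartheta}|_{\Delta_{0}}=\iota\circ\vartheta|_{\Delta_{0}}$: one cannot quote Proposition \ref{prop: homomorphism and linear order} directly since $\Lambda$ need not be totally ordered, but $\vartheta$ is monotone on the chain $\Delta_{0}$, so for $a\preceq b$ the elements $\vartheta(a),\vartheta(b)$ are comparable and $\iota(\vartheta(\max\{a,b\}))=\iota(\vartheta(b))=\iota(\vartheta(a))\cup\iota(\vartheta(b))$, while $\iota(\vartheta(-\infty))=\iota(\bot)=\emptyset$, so $\bar{\iota}$ is a monoid homomorphism into $(\mathcal{P}(\Lambda),\cup,\emptyset)$. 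Finally the covering is immediate: $\bar{\iota}(a)\cup\{\vartheta(a)\}=\{x:x\prec\vartheta(a)\}\cup\{\vartheta(a)\}=\{x:x\preceq\vartheta(a)\}=\phi_{c}(\vartheta(a))=\phi_{\vartheta}(a)$, and every $x\in\psi(a)$ satisfies $x\preceq\sup\psi(a)=\vartheta(a)$, so $\psi(a)\subseteq\phi_{\vartheta}(a)$ precisely for $a\in\Delta_{0}$, as claimed.

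The main obstacle is item 3, and within it the delicate point is that $\Lambda$ is not assumed totally ordered, so $\iota$ and $\phi_{c}$ are not homomorphisms on $\Lambda$ in general; the remedy is that $\vartheta=\sup\circ\psi$ is monotone on the chain $\Delta$, so its image is a chain and the arguments of Proposition \ref{prop: homomorphism and linear order} apply along $\vartheta$. A secondary point to get right is the bookkeeping of the domain: one must verify $-\infty\in\Delta_{0}$, that $\Delta_{0}$ is a submonoid and sublattice (trivial since $\Delta$ is linearly ordered), and that all the suprema invoked stay inside $\Lambda$ rather than escaping to $\top$ — which is exactly what restricting to $\Delta_{0}$ guarantees.
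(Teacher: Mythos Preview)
Your proof is correct and follows essentially the same route as the paper: item 1 is identical, and item 3 uses the same construction $\vartheta(a)=\sup_{\hat\Lambda}\psi(a)$, $\Delta_0=\vartheta^{-1}(\Lambda)$, together with the identity $\sup(X\cup Y)=\sup\{\sup X,\sup Y\}$. The only minor presentational differences are that for item 2 the paper re-derives monotonicity of $\vartheta$ from the homomorphism hypothesis on $\phi_\vartheta$ rather than taking it as a standing assumption from (\ref{eq: ideals for filter presentation}) as you do, and for item 3 the paper appeals to part 1 plus Proposition \ref{prop: homomorphism and linear order} on the chain $\vartheta(\Delta_0)$ where you verify directly that $\bar\iota$ is a homomorphism.
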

\begin{proof}
For the sake of clarity, the proof is presented in Appendix \ref{sec: Appendix B}. 
\end{proof}

The last proposition identifies a simple connection between homomorphisms to a set and to  the associated power set. Statement 2. in Proposition \ref{prop: homomorphism totally ordered sets} describes the extension of a monoid homomorphism to subsets. Vice versa, statement 3. allows one to extract data from a monoid homomorphism to a power set and reduce to the underlying set. 

It should be remarked that, if $\Lambda$ is not an almost
complete lattice, statement 3. does not necessarily hold. Let us take $\Lambda\equiv\mathbb{R}_{\star}^{\vee}$, which is not almost complete, and a mapping $\psi:\,\mathbb{R}^{\vee}\longrightarrow\mathcal{P}(\mathbb{R}_{\star}^{\vee})$ defined by $\psi(a)=\{b\in\mathbb{R}_{\star}^{\vee}:\,b\leq a\}$.
If such a function $\vartheta$ would exist, then $\vartheta(0)\in\iota_{\vartheta}(\varepsilon)$
for all $\varepsilon>0$. In particular $\vartheta(0)\leq\sup\psi(\varepsilon)=\varepsilon$ for all $\varepsilon>0$. In the same way one can see that $\varepsilon=\sup\{\psi(\varepsilon)\}\leq\vartheta(0)$
for all $\varepsilon<0$. Thus the only possibility for $\vartheta(0)$
is $0$, which is not in $\mathbb{R}_{\star}^{\vee}$. The role of almost complete lattices when one looks at chains of a poset $\Lambda$, in light of Propositions \ref{prop: almost completeness} and \ref{prop: homomorphism totally ordered sets}, supports the choice of macrosystems (\ref{eq: principal filters and dual}) for the tropical correspondence.

\section{\label{sec: Perturbative tropical limit} Nested tropical limit}

Filters on finite sets, whose explicit form is remarked in Proposition \ref{prop: (ultra-)filters in finite case}, also appear in relation to a sort of ``perturbative" construction for the tropical limit in statistical physics. Indeed, let us take a partition function 
\begin{equation}
{\displaystyle \mathcal{Z}(\boldsymbol{x})=\sum_{\alpha=1}^{N}e^{f_{\alpha}(\boldsymbol{x})}}\label{eq: partition function}
\end{equation}
 relative to (free) energies $f_{\alpha}(\boldsymbol{x})$, $\alpha\in[N]$.
Let us introduce the \textit{nesting form of type}
$A$ of the partition function $\mathcal{Z}(\boldsymbol{x})$ is the
set of data generated by the base case 
\begin{eqnarray}
\mathfrak{M}_{0}(\boldsymbol{x}) & := & \left\{ \alpha\in[N]:\,f_{\alpha}(\boldsymbol{x})=\max\left\{ f_{\beta}(\boldsymbol{x}):\,\beta\in\beta\in[N]\right\} \right\} ,\nonumber \\
\mathfrak{S}_{0}(\boldsymbol{x}) & := & [N]\backslash\mathfrak{M}_{0},\nonumber \\
\mu_{0}(\boldsymbol{x}) & := & f_{\alpha_{0}}(\boldsymbol{x}),\quad\alpha_{0}\in\mathfrak{M}_{0},\nonumber \\
\nu_{0}(\boldsymbol{x}) & := & \#\mathfrak{M}_{0}(\boldsymbol{x})
\label{eq: nesting data, base}
\end{eqnarray}
and recursively extended as follows 
\begin{eqnarray}
\mathfrak{M}_{\ell}(\boldsymbol{x}) & := & \left\{ \alpha\in\mathfrak{S}_{\ell-1}(\boldsymbol{x}):\,f_{\alpha}(\boldsymbol{x})=\max\left\{ f_{\beta}:\,\beta\in\mathfrak{S}_{\ell-1}(\boldsymbol{x})\right\} \right\} ,\nonumber \\
\mathfrak{S}_{\ell}(\boldsymbol{x}) & := & [N]\backslash\left({\displaystyle \bigcup_{h=0}^{\ell}}\mathfrak{M}_{h}\right),\nonumber \\
\mu_{\ell}(\boldsymbol{x}) & := & f_{\alpha_{\ell}}(\boldsymbol{x}),\quad\alpha_{\ell}\in\mathfrak{M}_{\ell},\nonumber \\
\nu_{\ell}(\boldsymbol{x}) & := & \#\mathfrak{M}_{\ell}(\boldsymbol{x})
\label{eq: nesting data, iteration A}
\end{eqnarray}
up to $\ell=L$, which is the minimum integer such that $\mathfrak{S}_{L}(\boldsymbol{x})=\emptyset$.
The \textit{nesting form of type} $B$ of the partition function $\mathcal{Z}(\boldsymbol{x})$
is the nesting form of the partition function relative to energies
$-f_{\alpha}(\boldsymbol{x})$, $\alpha\in[N]$. 

The nesting form of type $B$ can be also obtained by substitution
of $\max$ with $\min$ in (\ref{eq: nesting data, base}), (\ref{eq: nesting data, iteration A}).
In the finite $N$ case that we are dealing with, it is defined as
\begin{eqnarray}
\mathfrak{m}_{\ell}(\boldsymbol{x}):=\mathfrak{M}_{L-\ell}(\boldsymbol{x}),\, & \mathfrak{s}_{\ell}(\boldsymbol{x}):=[N]\backslash\left({\displaystyle \bigcup_{h=0}^{\ell}}\mathfrak{m}_{h}\right),\nonumber \\
\kappa_{\ell}(\boldsymbol{x}):=\mu_{L-\ell}(\boldsymbol{x}),\, & \lambda_{\ell}(\boldsymbol{x}):=\nu_{L-\ell}(\boldsymbol{x})\label{eq: nesting data, iteration, finite, B} 
\end{eqnarray}
for $0\leq\ell\leq L$. The filter $\mathcal{F}:=\{B\subseteq[N]:\,\mathfrak{m}_{0}(\boldsymbol{x})\subseteq B\}$
is the set of sure events with respect to the usual probability ${\displaystyle {\displaystyle W}_{\alpha}=\frac{\#\left(\{\alpha\}\cap\mathfrak{m}_{0}(\boldsymbol{x})\right)}{\lambda_{0}}}$
also introduced in \cite{AK2015}. 

The data $\{\mu_{\ell}(\boldsymbol{x}),\nu_{\ell}(\boldsymbol{x}):\,\ell\in\{0,\dots,L\}\}$
can be visualized from the partition function in the following equivalent
form: 
\begin{eqnarray}
\mathcal{Z}(\boldsymbol{x}) & = & e^{\mu_{0}(\boldsymbol{x})}\cdot\left(\nu_{0}+e^{-\mu_{0}(\boldsymbol{x})}\cdot\sum_{\alpha\in\mathfrak{S}_{0}(\boldsymbol{x})}e^{f_{\alpha}(\boldsymbol{x})}\right)\nonumber \\
& = & e^{\mu_{0}(\boldsymbol{x})}\cdot\left(\nu_{0}+e^{\mu_{1}(\boldsymbol{x})-\mu_{0}(\boldsymbol{x})}\cdot\left(\nu_{1}+e^{-\mu_{1}(\boldsymbol{x})}\cdot\sum_{\alpha\in\mathfrak{S}_{1}(\boldsymbol{x})}e^{f_{\alpha}(\boldsymbol{x})}\right)\right)\nonumber \\
& = & e^{\mu_{0}(\boldsymbol{x})}\cdot\left(\nu_{0}+e^{\mu_{1}(\boldsymbol{x})-\mu_{0}(\boldsymbol{x})}\cdot\left(\nu_{1}+e^{\mu_{2}(\boldsymbol{x})-\mu_{1}(\boldsymbol{x})}\cdot\left(\nu_{2}+\dots\right.\right.\right. \nonumber \\
 & & \left.\left.\left.\left.\dots\cdot\left(\nu_{L-2}+e^{\mu_{L-1}(\boldsymbol{x})-\mu_{L-2}(\boldsymbol{x})}\cdot\right.\left(v_{L-1}+e^{\mu_{L}(\boldsymbol{x})-\mu_{L-1}(\boldsymbol{x})}\cdot\nu_{L}\right)\dots\right)\right)\right)\right).\label{eq: nested partition function} 
\end{eqnarray}

This method can be considered as an alternative perturbation expansion
of the tropical limit. Indeed, the function ${\displaystyle e^{-\frac{1}{k}}}$
is not analytic around $k=0$, thus standard perturbative tools (e.g.,
series expansion) fail in this case. 
\begin{proposition}
\label{prop: trivial naive perturbation at m>1} Zeroth and first
order corrections of $F(k_{B})$ correspond to data $\kappa_{0}$
and $\lambda_{0}$ respectively. The $m$-th order coefficient in
the Taylor expansion of $F$ near $k=0$ vanishes, for all $m\geq2$. 
\end{proposition}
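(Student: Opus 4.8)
The plan is to read the first two terms of $F(k_B)$ off the nested form (\ref{eq: nested partition function}) --- in the type $B$ data, which is the one governing the $k_B\to0$ limit of the free energy --- and then to check that the remainder is a combination of terms $e^{-c/k}$ with $c>0$, hence smooth and flat at $k=0$, so that all of its Taylor coefficients vanish.

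\textbf{Step 1 (factorisation through the extremal shell).} In the weights $e^{-f_\alpha(\boldsymbol{x})/k}$ entering the partition function, the nested form (\ref{eq: nested partition function}) written in the type $B$ data (\ref{eq: nesting data, iteration, finite, B}) reads, for fixed $\boldsymbol{x}$,
\[
\mathcal{Z}(\boldsymbol{x};k)=e^{-\kappa_0/k}\,\lambda_0\,\bigl(1+\sigma(k)\bigr),\qquad
\sigma(k):=\frac{1}{\lambda_0}\sum_{\ell=1}^{L}\lambda_\ell\,e^{-(\kappa_\ell-\kappa_0)/k},
\]
where the levels are strictly ordered, $\kappa_0<\kappa_1<\dots<\kappa_L$, because by construction each shell $\mathfrak{M}_\ell$ (equivalently $\mathfrak{m}_\ell$) strictly separates the consecutive values $\mu_0>\mu_1>\dots>\mu_L$. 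Taking the logarithm, the free energy $F(k)=-k\ln\mathcal{Z}(\boldsymbol{x};k)$ (up to the overall sign/normalisation one fixes when defining $F$, which is immaterial for the statement) becomes
\[
F(k)=\kappa_0-k\ln\lambda_0-k\ln\bigl(1+\sigma(k)\bigr),
\]
so that its constant term is the ground energy $\kappa_0$ and the coefficient of $k$ is $-\ln\lambda_0$, that is a function of the datum $\lambda_0$ alone. If $L=0$ then $\sigma\equiv0$ and this identity is already exact (with $\lambda_0=N$), so assume $L\ge1$.

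\textbf{Step 2 (the remainder is flat at the origin).} Set $c:=\kappa_1-\kappa_0>0$. Each summand of $\sigma$ equals $\lambda_\ell\,e^{-c_\ell/k}$ with $c_\ell=\kappa_\ell-\kappa_0\ge c$, whence $0\le\sigma(k)\le (N/\lambda_0)\,e^{-c/k}\to0^+$ as $k\to0^+$. I would then invoke the classical fact that, for every $c>0$, the function $k\mapsto e^{-c/k}$ extended by $0$ at $k=0$ is $C^\infty$ on $[0,\infty)$ with all derivatives at $k=0$ equal to $0$; consequently $\sigma$ is $C^\infty$ on some $[0,\varepsilon)$ with $\sigma^{(m)}(0)=0$ for every $m\ge0$. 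Since $1+\sigma$ stays, near $k=0$, in a neighbourhood of $1$ on which $u\mapsto\ln(1+u)$ is analytic with $\ln1=0$, repeated use of the chain rule gives that $k\mapsto\ln(1+\sigma(k))$ is $C^\infty$ near $0$ with every derivative at $0$ vanishing; multiplying by the polynomial $-k$ preserves this, so $R(k):=-k\ln(1+\sigma(k))$ is $C^\infty$ near $k=0$ with $R^{(m)}(0)=0$ for all $m\ge0$.

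\textbf{Step 3 (conclusion).} Hence $F(k)=\kappa_0-(\ln\lambda_0)\,k+R(k)$ with $R$ smooth and flat at the origin, so the Taylor expansion of $F$ at $k=0$ equals $\kappa_0-(\ln\lambda_0)\,k+\sum_{m\ge2}0\cdot k^m$: the zeroth order term is $\kappa_0$, the first order coefficient is determined by $\lambda_0$, and the coefficient of order $m$ vanishes for every $m\ge2$, as claimed. I expect the only delicate point to be Step 2 --- arguing that $R$ is not merely $o(k^m)$ for all $m$ but honestly $C^\infty$ with vanishing derivatives at $0$, so that ``$m$th Taylor coefficient'' is well posed; Steps 1 and 3 are just a matching of terms in (\ref{eq: nested partition function}). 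A last, purely expository, point is to fix the meaning of ``correspond to data $\kappa_0$ and $\lambda_0$'', which I take to be: the constant term equals $\kappa_0$ and the coefficient of $k$ is a fixed function of $\lambda_0$.
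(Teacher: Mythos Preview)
Your proof is correct. Both your argument and the paper's rest on the same key fact---that $e^{-c/k}$ with $c>0$ is smooth and flat at $k=0$---but you organise the computation more efficiently. The paper (Appendix~\ref{sec: Appendix C}) normalises to $\kappa_0=0$ and then proceeds through a chain of Leibniz-rule computations: first $\partial_k^m\mathcal{Z}\big|_{k=0}=0$ for $m\ge1$, then by an induction $\partial_k^m(1/\mathcal{Z})\big|_{k=0}=0$, then $\partial_k^m\ln\mathcal{Z}\big|_{k=0}=0$, and finally $\partial_k^m(F/T)\big|_{k=0}=0$ for $m\ge2$. You bypass the intermediate induction on $1/\mathcal{Z}$ by writing $F=\kappa_0-k\ln\lambda_0-k\ln(1+\sigma)$ from the outset and observing that flatness of $\sigma$ propagates directly through the analytic map $u\mapsto\ln(1+u)$ via the chain rule (Fa\`a di Bruno, if one wants to be explicit). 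This buys you a shorter and conceptually cleaner argument; the paper's route, in exchange, is more explicit about each derivative and avoids appealing to a composition lemma.
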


\begin{proof} This is a simple calculation that is presented in Appendix \ref{sec: Appendix C}. 
\end{proof}

So zeroth and first order corrections to tropical free energy give
data $(\kappa_{0},\lambda_{0})$ relative to the first level in the
nesting form of $B$-type. They correspond to tropical free energy
and statistical prefactors $-\ln\lambda_{0}$ for entropy in \cite{AK2015}.
The standard perturbative approach allows one to get contributions
up to first order, i.e., relatively to first level of the nesting form. Nevertheless,
this process breaks out at higher levels. Contributions beyond the
first order are included in subsequent levels. One can apply series
expansion along with this nested structure to recover data (\ref{eq: nesting data, iteration, finite, B}). One
gets $\kappa_{\ell}$ (from the $0$-th order of the $(\ell+1)$-th
level) and $\lambda_{\ell}$ (from the first order of the $(\ell+1)$-th
level). 

If this construction is done with finitely many levels, it returns
a finite set of data. In such a case, the recursion as a reversal
symmetry, that is, the nesting forms of type $A$ and $B$ relative
to $\mathcal{Z}(\boldsymbol{x})$ coincide. The situation for a countable
infinite number of levels, $\alpha\in\mathbb{N}$, is more subtle.
An explanatory example involves the $p$-adic numbers. Let us consider
``positional'' weight $\mu_{\ell}(\boldsymbol{x}):=-\ell\cdot\ln p$
that depend on $\ell\in\mathbb{N}_{0}$ only, and cardinalities $\nu_{\ell}(\boldsymbol{x})\in\{0,1,\dots,p-1\}$.
The corresponding $A$-type partition function 
\begin{eqnarray}
\mathcal{Z}(\boldsymbol{x}) & = & 1\cdot\left(\nu_{0}(\boldsymbol{x})+p^{-1}\cdot\left(\nu_{1}(\boldsymbol{x})+p^{-1}\cdot\left(\nu_{2}(\boldsymbol{x})+p^{-1}\cdot\left(\dots\right.\right.\right.\right. \nonumber \\
 & = & {\displaystyle \sum_{\ell=0}^{+\infty}\nu_{\ell}(\boldsymbol{x})\cdot p^{-\ell}}\label{eq: nested partition function, p-adic A} 
\end{eqnarray}
is the base $p$ expansion of a real number. On the other hand,
$B$-type nesting results in $\mu_{\ell}(\boldsymbol{x})\equiv\ell\cdot\ln p$,
$\ell\in\mathbb{N}_{0}$, and 
\begin{eqnarray}
\mathcal{Z}(\boldsymbol{x}) & = & 1\cdot\left(\nu_{0}(\boldsymbol{x})+p\cdot\left(\nu_{1}(\boldsymbol{x})+p\cdot\left(\nu_{2}(\boldsymbol{x})+p\cdot\left(\dots\right.\right.\right.\right. \nonumber \\
 & = & {\displaystyle \sum_{\ell=0}^{+\infty}\nu_{\ell}(\boldsymbol{x})\cdot p^{\ell}}\label{eq: nested partition function, p-adic B} 
\end{eqnarray}
which is a $p$-adic number. 

Cases with an infinite number of levels can be put into bijection with
other continued nested expansions. Furthermore, the nested structure (\ref{eq: nested partition function}) relies on the linear order $\leq$ on $\mathbb{R}$, so it could be generalized using more general partial orders. These kinds of extensions, and the reduction to the cases of linear order (Proposition \ref{prop: homomorphism totally ordered sets}), deserve more attention in order to better  understand tropicalization methods, their expansions and potential applications. This would go beyond the scope of this work and will be studied in a separate paper.

\section{\label{sec: Tropicalization(s) and the role of ground energy } Tropicalization(s)
and the role of ground energy }

We can now use the filter language to discuss simple models in non-equilibrium
physics. Two elementary remarks on $\mathbb{R}_{\max}=(\mathbb{R}^{\vee},\max,-\infty)$ lead to major implications. Firstly, $\mathbb{R}^{\vee}$ is a grounded poset, and this is linked to the asymmetry between $\max$ and $\min$. Secondly, a presentation of the tropical algebra has a particular symmetry, i.e. idempotence, which will be
relevant in the study of tropicalization processes.

\subsection{\label{subsec: Duality and non-equilibrium} A simple example: duality and non-equilibrium } 

Here, we look at the connections between $\max$\textbackslash{}$\min$
duality, non-equilibrium systems and ground energy in more detail.
To this end, we start from a simple thermodynamic example. Let us
take $N$ microsystems, where each system is defined by a triple $(E_{\alpha},S_{\alpha},T_{\alpha})$,
$\alpha\in[N]$. Here $E_{\alpha}$ is the energy of the microsystem,
$S_{\alpha}=k_{B}\cdot\ln g_{\alpha}(k_{B})$ is its entropy and $T_{\alpha}$
is a ``temperature''. Concrete instances of such a model could describe microscopic ensembles at the moment when they are put in contact, so they are not in equilibrium. However, we do not force the temperature to be Boltzmann
or Gibbs temperature and we do not put constraints on $T_{\alpha}$ in order to work in full generality. 
These quantities define a micro-free energy $F_{\alpha}=F(E_{\alpha},S_{\alpha},T_{\alpha}):=E_{\alpha}-T_{\alpha}\cdot S_{\alpha}$.
One can give the same tropical framework to all the dependent
variables, i.e. micro-free energies, via 
\begin{equation}
\beta:\,\left\{ \frac{F_{\alpha}}{T_{\alpha}},\,\alpha\in[N]\right\} \subseteq\mathbb{R}\hookrightarrow\mathbb{R}^{\wedge}\label{eq: tropicalization micro-free energies}
\end{equation}
where the immersion $\mathbb{R}\hookrightarrow\mathbb{R}^{\wedge}$
acts as the identity on $\mathbb{R}$. So we can think at dependent
variables as real or tropical ones. The function $\beta$ will be
called \emph{tropical temperature} (even if it plays the role of an inverse temperature)  and the $N$ microsystems are
tropically thermalized. Generalizing the results in \cite{AK2015},
the tropical free energy is defined as 
\begin{equation}
F_{\mathrm{trop}}:=\bigoplus_{\alpha\in[N]}\beta\left(\frac{E_{\alpha}-S_{\alpha}\cdot T_{\alpha}}{T_{\alpha}}\right).\label{eq: tropical free energy, B}
\end{equation}
The tropical addition in (\ref{eq: tropical free energy, B}) is identified
with $\min$, so it corresponds to a $B$-type model. The duality
between $A$-type and $B$-type nesting forms can be now restated
in more geometric terms. Indeed, the corresponding $A$-type model
is obtained by means of a projective-like transformation, that are
\begin{equation}
\tilde{E}_{\alpha}:=S_{\alpha},\quad\tilde{S}_{\alpha}:=E_{\alpha},\quad\tilde{T}_{\alpha}:=\frac{1}{T_{\alpha}}.\label{eq: real projective A/B transformation}
\end{equation}
Under previous transformation, (\ref{eq: tropical free energy, B})
becomes 
\begin{eqnarray}
\min_{\alpha\in[N]}\beta\left(\frac{-T_{\alpha}\cdot S_{\alpha}+E_{\alpha}}{T_{\alpha}}\right)&=&\min_{\alpha\in[N]}\beta\left(-\tilde{E}_{\alpha}+\tilde{S}_{\alpha}\cdot\tilde{T}_{\alpha}\right)\nonumber \\
&=&-\max_{\alpha\in[N]}\beta\left(\tilde{E}_{\alpha}-\tilde{S}_{\alpha}\cdot\tilde{T}_{\alpha}\right).\label{eq: tropical free energy, A}
\end{eqnarray}
So one can rewrite the tropical free energy (\ref{eq: tropical free energy, B})
of the $B$-type as an $A$-type expression (\ref{eq: tropical free energy, A}).
Previous transformation corresponds to the conjugation $c\circ\min\circ c=\max$,
where $c:\,\mathbb{R}^{\vee}\longrightarrow\mathbb{R}^{\wedge}$ is
the involution $c(x)=-x$. One can restore the presentation with addition
$\oplus=\min$ via the simultaneous real transformation (\ref{eq: real projective A/B transformation})
and the tropical inversion 
\begin{equation}
\tilde{\beta}=``\frac{1}{\beta}":=-\beta:\,\mathbb{R}\longrightarrow\mathbb{R}^{\vee}.\label{eq: tropical projective A/B transformation}
\end{equation}
That gives 
\begin{equation}
\min_{\alpha\in[N]}\beta\left(\frac{-T_{\alpha}\cdot S_{\alpha}+E_{\alpha}}{T_{\alpha}}\right)=\min_{\alpha\in[N]}\tilde{\beta}\left(\tilde{E}_{\alpha}-\tilde{S}_{\alpha}\cdot\tilde{T}_{\alpha}\right).\label{eq: tropical free energy, A, tropical inversion}
\end{equation}
The inversion ${\displaystyle \tilde{T}_{\alpha}:=\frac{1}{T_{\alpha}}}$
in (\ref{eq: real projective A/B transformation}) is the usual (real)
one for real temperatures $T_{\alpha}$, while ``$\displaystyle \tilde{\beta}:=\frac{1}{\beta}$''
in (\ref{eq: tropical projective A/B transformation}) is the tropical
inversion for tropical temperature $\beta$. It should be
stressed that (\ref{eq: tropical projective A/B transformation})
is not an involution on a set, but rather a homomorphism between two
different grounded monoids, that are $\mathbb{R_{\max}}=(\mathbb{R}^{\vee},\max,-\infty)$
and $\mathbb{R_{\min}}=(\mathbb{R}^{\wedge},\min,+\infty)$. 

So, the different behaviour for ``positive'' and ``negative''
$\beta$ is resolved by a reparametrization. It is worth
remarking that a different approach to the change of behaviour between
positive and negative temperatures was proposed in \cite{Ramsey1956},
where the role of the real parameter ${\displaystyle -\frac{1}{k_{B}T}}$ instead of $T$
was emphasized. On the other hand, the tropical approach and the $\max\backslash\min$
duality in (\ref{eq: tropical projective A/B transformation}) let
one have a common setting where these different physical regimes
are preserved. In the particular case of equilibrium $T_{\alpha}=T$ for a constant
$T$ one recovers the results in \cite{AK2015}. 

The map of a $B$-type non-equilibrium system to an $A$-model can be studied in terms of the choice of the ground energy. In fact, both models are invariant under a shift of free energies $F_{\alpha},\,\alpha \in [N]$. However, the $B$-model has not translation invariance for the energy spectrum $\{E_{\alpha}:\,\alpha \in [N]\}$ in general. This means that, if temperatures $T_{\alpha}$ do not coincide (non-equilibrium case) and 
one reparametrizes $\overline{E}_{\alpha}:=\mathcal{E}+E_{\alpha}$
for any constant $\mathcal{E}\in\mathbb{R}$,
then ${\displaystyle \frac{E_{\alpha}-S_{\alpha}\cdot T_{\alpha}}{T_{\alpha}}<\frac{E_{\gamma}-S_{\gamma}\cdot T_{\gamma}}{T_{\gamma}}}$ is in general not equivalent to ${\displaystyle \frac{\overline{E}_{\alpha}-S_{\alpha}\cdot T_{\alpha}}{T_{\alpha}}<\frac{\overline{E}_{\gamma}-S_{\gamma}\cdot T_{\gamma}}{T_{\gamma}}}$. On the other hand, the $A$-type model
has such a type of symmetry, since $\tilde{E}_{\alpha}-\tilde{S}_{\alpha}\cdot\tilde{T}_{\alpha}<\tilde{E}_{\gamma}-\tilde{S}_{\gamma}\cdot\tilde{T}_{\gamma}$
implies $(\tilde{E}_{\alpha}+\tilde{\mathcal{E}})-\tilde{S}_{\alpha}\cdot\tilde{T}_{\alpha}<(\tilde{E}_{\gamma}+\tilde{\mathcal{E}})-\tilde{S}_{\gamma}\cdot\tilde{T}_{\gamma}$ for all $\tilde{\mathcal{E}}\in\mathbb{R}$. 

A more detailed discussion on this restricted translational invariance of the spectrum and the meaning of the duality $E_{\alpha}\leftrightarrows S_{\alpha}$ would go beyond the scope of this paper and will be discussed elsewhere. 

\subsection{\label{subsec: Tropical
probability and tropical symmetry} Tropical action, normalization and the choice of  ground energy}

The previous example suggests a more general way to
move from real to tropical entities. We will call a \textit{tropicalization}
of a set $\mathcal{R}$ a map 
\begin{equation}
\tau:\,\mathcal{R}\longrightarrow\Lambda\label{eq: tropicalization reals}
\end{equation}
where $\Lambda$ is a tropical semiring. In particular, we are interested in tropicalizations of real variables $\mathcal{R}\subseteq\mathbb{R}^{n}$.
The results in Section \ref{sec: Monoid homomorphisms and a set/element correspondence} give the opportunity to consider simultaneously two objects associated to a monoid. The first is $(\phi(y),\oplus,y)$ in (\ref{eq: principal filters and dual}), $y\in\Lambda$, that is a filter with respect to $\preceq$. The second is $\mathcal{F}_{\Lambda}$ in (\ref{eq: principal filters and dual}).
Proposition \ref{prop: homomorphism and linear order} suggests to
concentrate, firstly, on a totally ordered set in order to include
the presentation $\mathcal{I}_{\Lambda}$. In fact, this is the case
of $\beta\circ F$ in (\ref{eq: tropicalization micro-free energies}),
where $\Lambda=\mathbb{R}^{\wedge}$. So one has at least two additional
tropicalizations on the same poset $\Lambda$, that are 
\begin{eqnarray} 
\iota_{x}:\,\mathcal{R}\longrightarrow\phi(x) & = & \{y\in\Lambda:\,x\preceq y\}, 
\label{eq: tropicalizations reals, global} \\  
\phi:\,\mathcal{R}\longrightarrow\mathcal{F}_{\Lambda} & = & \phi(\Lambda). 
\label{eq: tropicalizations reals, local}
\end{eqnarray}
\label{eq: tropicalizations reals}
The map (\ref{eq: tropicalizations reals, global}) has a physical
relevance in terms of stability. Indeed, a physical system is considered stable if its energy spectrum is bounded from below. This condition is expressed by grounded posets $\phi(x)$, where $x$ plays the role of ground energy and bounds the  elements of $\iota_{x}(\mathcal{R})$ from below. From the statistical point of view, the existence of such $x$ means that the associated tropical probability distribution $W_{n,\mathrm{tr}}=\displaystyle \frac{F_{\mathrm{tr}}-F_n}{T}$ in \cite{AK2015} is normalizable. 

The domains $\Lambda$ and $\mathcal{F}_{\Lambda}$ are homomorphic
as tropical monoids by Proposition \ref{prop: homomorphism and linear order}.
Nevertheless, quite different conclusions can be drawn from these
processes when one looks at the semiring action induced by $\odot=+$.
Indeed, the translation $\varepsilon+\mathcal{R}:=\{\varepsilon+x:\,x\in\mathcal{R}\}$,
$\varepsilon\in\mathbb{R}$, corresponds to actions on $\phi(x)$
and $\mathcal{F}_{\Lambda}$, namely 
\begin{eqnarray} 
\varepsilon\odot y & := & \varepsilon\odot\iota_{x}(y), 
\label{eq: tropical action, global} \\  
\varepsilon\odot\phi(x) & := & \{\varepsilon\odot\tau(y):\,\tau(y)\in\phi(x)\}. 
\label{eq: tropical action, local}
\end{eqnarray}
\label{eq: tropical actions}
The action (\ref{eq: tropical action, global}) is invertible with inverse $``\varepsilon^{-1}"\odot y=(-\varepsilon)\odot\iota_{x}(y)$,
for all $\varepsilon\in\Lambda\backslash\{\infty\}$. So it maps any
filter $\phi(x)$ to another one $\phi(\varepsilon\odot x)$. If one
tropicalizes $\mathcal{R}$ via (\ref{eq: tropicalizations reals, global}),
then the associated tropical action will be called \emph{global}. This means that
(\ref{eq: tropical action, global}) acts simultaneously on the whole
set $\mathcal{R}$ mapping it to another filter $\varepsilon\odot\iota_{x}(\mathcal{R})$.
In particular, the tropicalization $\beta\circ F$ given by the map
(\ref{eq: tropicalization micro-free energies}) is global, with $\Lambda=\mathbb{R}_{\min}$
and $\odot=+$. If the systems are in equilibrium, i.e. $T_{\alpha}=T$
is constant for all $\alpha$, then the global tropical action describes
a different choice of ground energy and a consequent shift of (free)
energies by the same value. So one recovers the invariance of a physical
description under different choices of the ground energy. For systems
out of equilibrium (different $T_{\alpha}$), this tropicalization
does not coincide with the shift of energy levels. 

If instead one uses (\ref{eq: tropicalizations reals, local}) to
tropicalize $\mathcal{R}$, then each element $x\in\mathcal{R}$ is
presented a filter $\phi(x)$ on a certain poset. In the latter approach, the tropical action  will be called \emph{local}, that means that each real variable $x\in\mathcal{R}$ actually represents the choice for the ground value of its image $\phi(x)$. This freedom is important
when one is interested in tropicalizations of probability distributions.
Before moving on to this issue, it should be stressed that some effects
of the invariance of micro-free energies under constant shifts have been
studied in a geometric framework in \cite{AK2016}. More specifically,
Gauss-Kronecker curvature for an ideal statistical mapping vanishes
if and only if there exists a non-vanishing Killing vector field ${\displaystyle \sum_{i=1}^{n}c_{i}\frac{\partial}{\partial x^{i}}}$
for the statistical hypersurface, whose coefficients $c_{i}$ are
constant. This corresponds to a translational symmetry, that is ``global''
(i.e., all $c_{i}$ are equal) in the super-ideal case.

\section{\label{sec: Global and local tropical symmetry}
Global and local tropical symmetry}

The tropicalization (\ref{eq: tropicalization micro-free energies})
of dependent variables $\left\{ F_{\alpha}:\,\alpha\in[N]\right\} $
produces free energies with a tropical symmetry. This means that the
value of the tropical macroscopic free energy (\ref{eq: tropical free energy, B})
does not change if one creates a copy of a certain microsystem $F_{\alpha}$.
The creation of copies affects the counting, and this stresses the role of tropical limit in probability and statistics. 

Both global (\ref{eq: tropicalizations reals, global}) and local
(\ref{eq: tropicalizations reals, local}) tropicalizations induce
tropical symmetry on former real variables. These procedures are connected (Proposition \ref{prop: homomorphism and linear order}) as long as only one variable is involved. However, distinctive features can be extracted from each of these two processes in cases of more variables, depending on tropicalizating them as a whole (the set $\mathcal{R}$) or individually (each element of $\mathcal{R}$ one at a time).  

\subsection{\label{subsec: Global tropical symmetry and statistical amoebas}
Global tropical symmetry and statistical amoebas}

In \cite{AK2015} usual probabilities for events $X\subseteq[N]$
\begin{equation}
W(X)=\frac{\#\left(X\cap\mathfrak{m}_{0}(T)\right)}{\#\mathfrak{m}_{0}(T)}\label{eq: usual probability in =00005B1=00005D}
\end{equation}
were identified, while ``tropical'' probabilities at $k_{B}\ll1$
for states and energy levels are respectively 
\begin{equation}
w_{\alpha,\mathrm{tr}}=-S_{\alpha}+\frac{F_{\mathrm{tr}}(T)-F_{\alpha}(T)}{T}-k_{B}\cdot\ln\left(\#\mathfrak{m}_{0}(T)\right),\quad W_{\alpha,\mathrm{tr}}=w_{\alpha,\mathrm{tr}}+S_{\alpha}.\label{eq: tropical probabilities in =00005B1=00005D}
\end{equation}
At $k_{B}=0$, the weights in (\ref{eq: tropical probabilities in =00005B1=00005D})
are tropically additive and normalized as ${\displaystyle \bigoplus_{\alpha=1}^{N}W_{\alpha,\mathrm{tr}}=0}$.
In particular, one has $\#\mathfrak{m}_{0}(T)=1$ at regular domains
where only one phase $\alpha_{0}\in[N]$ satisfies the minimum for the 
free energy. Here, one gets an ultrafilter probability 
\begin{equation}
W(X)=\left\{ \begin{array}{c}
1,\quad\mbox{if }\alpha_{0}\in X\\
0,\quad\mbox{if }\alpha_{0}\notin X
\end{array}\right..\label{eq: regular domain probability}
\end{equation}

This is a particular instance of a more general application of filters
to probability. In fact, it is well known that a proper filter can
be seen as a $\{0,1\}$-finitely real additive measure on the set
$\Omega$, i.e. a function $\tau:\,\Omega\longrightarrow\{0,1\}$
such that $\tau(\emptyset)=0$ and 
\begin{equation}
{\displaystyle \tau\left(\bigcup_{h=1}^{\ell}\Omega_{h}\right)=\sum_{h=1}^{\ell}\tau(\Omega_{h})}\label{eq: finite additiveness}
\end{equation}
where $\ell\in\mathbb{N}$ and $\Omega_{h}$ are disjoint measurable
subsets of $\Omega$. It is easily shown that the function 
\begin{equation}
\tau(X):=\left\{ \begin{array}{c}
1,\quad\mbox{if }X\in\mathcal{U}\\
0,\quad\mbox{if }X^{c}\in\mathcal{U}\\
\mbox{undefined},\quad\mbox{otherwise}
\end{array}\right.,\quad X\subseteq\Omega\label{eq: filter measure}
\end{equation}
is a $\{0,1\}$-valued measure where countably additivity is relaxed
to finite additivity. Indeed, if $\tau(\Omega_{1})=\tau(\Omega_{2})=1$
then both $\Omega_{1}$ and $\Omega_{2}$ belong to the filter, hence
their intersection is in the filter too. The fact that the empty set
$\emptyset$ does not belong to any proper filter implies that $\Omega_{1}\cap\Omega_{2}\neq\emptyset$.
Thus, there is at most one nonvanishing term in the sum (\ref{eq: finite additiveness}),
where $\Omega_{h}$ are elements of the filter. If $\Omega$ is a
finite set, then $\tau$ in (\ref{eq: filter measure}) is a real
probability measure, since in this case finite additivity is equivalent
to the usual $\sigma$-additivity. 

This interpretation of filters fits well with the tropicalization
(\ref{eq: tropicalizations reals, global}). If $\Omega\longrightarrow\phi(y)$
and $\phi(y)\neq\mathbb{R}^{\vee}$, i.e. $y\neq-\infty$, then any
element $\alpha$ of $\Omega$ can be interpreted as real, since $\Omega\subseteq\mathbb{R}$,
or tropical, as $\iota_{y}(\alpha)$. At the same time, one finds
both real (\ref{eq: usual probability in =00005B1=00005D}) and tropical
weights (\ref{eq: tropical probabilities in =00005B1=00005D}) and
the latter are a result of the global tropicalization (\ref{eq: tropicalization micro-free energies}). 

In this case, the limit process involves $k_{B}$: idempotence for the 
probability $W_{\alpha}=W(\{\alpha\})$ defined in \cite{AK2015}
holds only at the lowest (zeroth) order in $k_{B}$ on the singular locus. This is evident in (\ref{eq: tropical probabilities in =00005B1=00005D}), where the statistical corrections depend on the cardinality $\lambda_{0}(T)=\#\mathfrak{m}_{0}(T)$
defined in (\ref{eq: nesting data, iteration, finite, B}). They correspond to first order corrections in $k_{B}\ll1$ and are the only non-trivial purely perturbative corrections (Proposition \ref{prop: trivial naive perturbation at m>1}).
By the same token, the statistical weights (\ref{eq: usual probability in =00005B1=00005D})
are equal to ${\displaystyle \frac{1}{\lambda_{0}(T)}}$. Thus, idempotence
is lost from the point of view of usual probability weight (\ref{eq: usual probability in =00005B1=00005D}). 

Loss of idempotence is a remarkable phenomenon at the singular locus,
where the free energy is non-differentiable and this can be seen as a
phase transition. In the line of thought that associates phase transitions to a broken symmetry, at a critical temperature $T^{\star}$ the broken tropical symmetry appears as loss of idempotence in statistical prefactors.
The breaking of tropical symmetry on the singular locus is accidental,
i.e. it occurs for certain values of parameters (e.g., temperature).
Furthermore, it is physical in terms of observability by means of
averages of observables with weights (\ref{eq: usual probability in =00005B1=00005D}). In the context of tropical geometry \cite{MS2015}, where one associates a simplicial complex to algebraic tropical functions, the accidental coincidence of phases is described by simplices with non-maximal dimension. 

In general, the addition of a copy of a subsystem makes this broken
symmetry systematic. In fact, one can consider the extension of $\{F_{\alpha}:\,\alpha\in[N]\}$ by a function $F_{N+1}\equiv F_{\alpha_{0}}$ where $F_{\alpha_{0}}(T)\leq F_{\beta}(T)$ for all $\beta\in[N]$ and $T$ in a certain domain. The statistical factor for $W_{\alpha_{0}}(T)$ now involves both $\alpha_{0}\in\mathfrak{m}_{0}(T)$ and $N+1$, i.e.  
\begin{equation}
W_{\alpha_{0}}(T)=\frac{1}{\lambda_{0}(T)}\mapsto\frac{2}{\lambda_{0}(T)+1}\label{eq: usual probability effect copying}
\end{equation}
for any $T$ in the domain. Contrary to the case of phase transitions, averages of observables are unchanged by the addition of a copy in regular domains, so they are not observable in this sense. However, tropical copies can still be identified on the singular locus.  

It is worthy of note that statistical amoebas \cite{AK2016b} provide
one with a geometric formulation for this limit procedure. Indeed,
the instability domain $\mathcal{D}_{k-}$ for a statistical amoeba
(\ref{eq: weight set statistical amoeba}) is induced by the ultrafilter $\mathcal{U}(\alpha(\boldsymbol{x}))$ through $\mathcal{N}_{k-}(\boldsymbol{x})=\mathcal{U}(\alpha(\boldsymbol{x}))\cap\{\mathcal{A}\subseteq[N]:\,\#\mathcal{A}=k\}$, where $\alpha(\boldsymbol{x})$ is the only index in $[N]$ such that $f_{\alpha(\boldsymbol{x})}(\boldsymbol{x})>f_{\beta}(\boldsymbol{x})$
for all $\beta\neq\alpha(\boldsymbol{x})$. So the statistical amoeba
can be used to study singularities of free energy (zeros of (\ref{eq: weight set statistical amoeba})), non-equilibrium domains (where (\ref{eq: weight set statistical amoeba}) is negative) and emergence of a ``macroscopic'' behaviour in domains of maximal instability $\mathcal{D}_{k-}$ (where a filter measure (\ref{eq: filter measure}) is defined). In this context, tropical limits are obtained via the scaling of independent variables, ${\displaystyle x_{i}\mapsto\frac{x_{i}}{k_{B}}}$, or dependent ones ${\displaystyle f_{\alpha}(\boldsymbol{x})\mapsto\frac{f_{\alpha}(\boldsymbol{x})}{k_{B}}}$. 

\subsection{\label{subsec: Local tropical symmetry and the dequantification procedure }
Local tropical symmetry and the dequantification procedure }

In Section \ref{sec: Tropicalization(s) and the role of ground energy } we have pointed out that the local tropicalization (\ref{eq: tropicalizations reals, local}) describes the labeling of a set of systems $\phi(\mathcal{R})$ by their ground energy $x$, with $x\in\mathcal{R}$. We can now explore the statistical effects of individual implementation of idempotence on real variables. Let us take a finite number, say $N$, of distinct microsystems $\Phi:=\left\{ f_{\alpha}(\boldsymbol{x}):\,\alpha\in[N]\right\} $
that defines the statistical model through the partition function
(\ref{eq: partition function}). 

The index space $[N]$ can be immersed in $\tilde{N}:=[N]\times\mathbb{N}$
identifying $\alpha\in[N]$ with $(\alpha,1)\in\tilde{N}$. The limit
procedure can be implemented through the map 
\begin{equation}
\begin{array}{cc}
\begin{array}{c}
\mathrm{T}:\,\tilde{N}\longrightarrow\tilde{N}\\
\mathrm{T}(\alpha,n)=\mathrm{T}_{\alpha}(n):=(\alpha,n+1)
\end{array}, & \alpha\in[N],\,n\in\mathbb{N}\end{array}.\label{eq: tropical copies}
\end{equation}
So $\mathrm{T}_{\alpha}$ describes the addition of a tropical copy
of the microsystem $\alpha$ in the macrosystem, i.e. the disjoint
union 
\begin{equation}
\mathrm{T}_{\alpha}(\Phi):=\left\{ f_{\beta}(\boldsymbol{x}):\,\beta\in[N]\right\} \sqcup\{f_{\alpha}(\boldsymbol{x})\}.\label{eq: tropical copies set}
\end{equation}
One can consider the sets $[N]$, $\Phi$ and $[N]\times\{1\}\subseteq\tilde{N}$
as minimal presentations of the macrosystem since microsystems are
pairwise distinct. If $X\subseteq\tilde{N}$, we will write $\mathrm{T}(X):=X\cup\{\mathrm{T}(x):\,x\in X\}$. 

Now let us identify a ``tropical distribution''
on $[N]$ starting from a standard (real, additive) one $w_{k_{B}}:\,[N]\longrightarrow[0;1]$,
where $k_{B}$ is a parameter that controls the tropicalization process.
From the point of view of standard probability, the creation of a
copy of a dominant microsystem affects statistical weights as in (\ref{eq: usual probability effect copying}).
On the other hand, a tropical ``probability'' should not discern
the addition of copies, since they define the same tropical system. Thus, we assume that the creation of copies does not affect the tropical system. 

This request implies that we can consistently assign tropical weights to a set $X\subseteq\tilde{N}$, starting from a real distribution $w_{k_{B}}$, if $X$ is closed under addition of copies. So, we will say that a set $Y\subseteq\tilde{N}$ is \textit{$\mathrm{T}$-closed} if $\mathrm{T}(Y)=Y$. The \textit{$\mathrm{T}$-closure}
$\overline{X}$ of $X\subseteq\tilde{N}$ is the smallest among all
$\mathrm{T}$-closed sets $Y\subseteq\tilde{N}$ such that $X\subseteq Y$. So a set $Y$ is $\mathrm{T}$-closed if and only if $Y=\overline{Y}$.  Since intersections of $\mathrm{T}$-closed sets are $\mathrm{T}$-closed,
the $\mathrm{T}$-closure of sets is well-defined and its explicit
form is 
\begin{equation}
\overline{X}:=\overline{\mathrm{T}}(X)=\bigcap_{X\subseteq Y=\overline{Y}}Y.\label{eq: replica closure}
\end{equation}

In this setting, $\mathrm{T}$-closed sets represent tropically measurable set. One can get such a tropical measure from weights $w_{k_{B};\alpha}$ assigned to individual copies of the microsystem $\alpha$. In the line of thoughts of \cite{AK2015}, we consider $\mathcal{N}\in\mathbb{N}$ copies of the microsystem $\alpha$ and the tropical limit of $w_{k_{B};\alpha}$ as the simultaneous
limit $k_{B}\rightarrow0$ and $\mathcal{N}\rightarrow\infty$. This
clearly depends on the explicit dependence of weights $w_{k_{B};\alpha}$
from $k_{B}$ and from the relation between $k_{B}$ and $\mathcal{N}$.
For the sake of concreteness, we look at Gibbs weights 
\begin{equation}
w_{k_{B};\alpha}=w_{k_{B};(\alpha,1)}:=\frac{\exp\left(-\frac{f_{\alpha}(T)}{k_{B}}\right)}{{\displaystyle \sum_{\beta\in[N]}}\exp\left(-\frac{f_{\beta}(T)}{k_{B}}\right)}\label{eq: Gibbs' weights}
\end{equation}
and we adopt the prescription ${\displaystyle k_{B}:=\frac{1}{\mathcal{N}}}$. Thus, the role of $k_B$ in this process is to control the creation of tropical copies. 

The addition of $\mathcal{N}-1$ copies
$\alpha\cong(\alpha,1)\mapsto(\alpha,\mathcal{N})$ affects the weight $w_{k_{B};\alpha}$ for the microsystem $\alpha$ as 
\begin{equation}
w_{k_{B};(\alpha,\mathcal{N})}=\frac{\mathcal{N}\cdot\exp\left(-\mathcal{N}\cdot f_{\alpha}(T)\right)}{(\mathcal{N}-1)\cdot\exp\left(-\mathcal{N}\cdot f_{\alpha}(T)\right)+{\displaystyle \sum_{\beta\in[N]}}\exp\left(-\mathcal{N}\cdot f_{\beta}(T)\right)}.\label{eq: Gibbs' weights with copies}
\end{equation}
Now one can consider the limit $k_{B}\rightarrow0^{+}$. If $\alpha\in\mathfrak{m}_{0}(T)$,
then (\ref{eq: Gibbs' weights with copies}) becomes 
\begin{eqnarray}
\displaystyle
w_{0;\alpha} & := & \lim_{\mathcal{N}\rightarrow\infty}\frac{\mathcal{N}\cdot e^{-\mathcal{N}\cdot f_{\alpha}(T)}}{(\mathcal{N}-1)\cdot e^{-\mathcal{N}\cdot f_{\alpha}(T)}+{\displaystyle \sum_{\beta\in\mathfrak{m}_{0}(T)}}e^{-\mathcal{N}\cdot f_{\beta}(T)}+{\displaystyle \sum_{\gamma\notin\mathfrak{m}_{0}(T)}}e^{-\mathcal{N}\cdot f_{\gamma}(T)}} \nonumber \\
& = & \lim_{\mathcal{N}\rightarrow\infty}\frac{\mathcal{N}}{\lambda_{0}(T)-1+\mathcal{N}}=1.\label{eq: tropical, dominant} 
\end{eqnarray}
If instead $\alpha\notin\mathfrak{m}_{0}(T)$, then 
\begin{eqnarray}
\displaystyle
0\leq w_{0;\alpha} & = & \lim_{\mathcal{N}\rightarrow\infty}\frac{\mathcal{N}\cdot e^{-\mathcal{N}\cdot f_{\alpha}(T)}}{(\mathcal{N}-1)\cdot e^{-\mathcal{N}\cdot f_{\alpha}(T)}+{\displaystyle \sum_{\beta\in\mathfrak{m}_{0}(T)}}e^{-\mathcal{N}\cdot f_{\beta}(T)}+{\displaystyle \sum_{\gamma\notin\mathfrak{m}_{0}(T)}}e^{-\mathcal{N}\cdot f_{\gamma}(T)}} \nonumber \\
& \leq & \lim_{\mathcal{N}\rightarrow\infty}\frac{\mathcal{N}}{(\mathcal{N}-1)+{\displaystyle \sum_{\beta\in\mathfrak{m}_{0}(T)}}e^{\mathcal{N}\cdot\left(f_{\alpha}(T)-f_{\beta}(T\right)}}=0 \label{eq: tropical, non-dominant} 
\end{eqnarray}
so $w_{0;\alpha}=0$. These limits rely on both the countable additivity of real probability and the exponential form (\ref{eq: Gibbs' weights}) of Gibbs weights.  

Similarly, one can consider $w_{\mathcal{N}^{-1}}(\mathrm{T}^{\mathcal{N}}(X))$ for $X\subseteq[N]$. Generally, $w_{0}$ is not a real additive distribution. Indeed, for any partition of $\mathfrak{m}_{0}(T)$ in two disjoint sets, say $X_{1}$ and $X_{2}$, $w_{0}(X_{1})+w_{0}(X_{2})\geq 1\geq w_{0}(X_{1}\cup X_{2})=w_{0}(\mathfrak{m}_{0}(T))$. So $w_{0}$ is real additive if and only if, for each possible partition, exactly one set $X_{1}$ or $X_{2}$ is empty. This means that $\#\mathfrak{m}_{0}(T)=1$, so one recovers the ultrafilter probability (\ref{eq: regular domain probability}).

By contrast, $w_{0}$ is tropically additive ($\oplus=\max$) even at $\#\mathfrak{m}_{0}(T)\geq1$.
If $\{X_{n}\}$ is any family of pairwise disjoint subsets of $[N]$,
then $w_{0}{\displaystyle \left(\bigcup_{n}X_{n}\right)=1}$ if and
only if $X_{n}\cap\mathfrak{m}_{0}(T)\neq\emptyset$ for at least
one $n$, that is $\max\{w_{0}(X_{n})\}=1$. So $w_{0}$ is a possibility
distribution, in the sense that they concern the possibility for a certain set of events to happen. 

It is worth remarking that this process involves weights $w_{k_{B};\alpha}$ one at a time, thus one asks for idempotence for each $\alpha$ individually. This corresponds to making copies of the microsystems subsequent to the prior measurement $\alpha\mapsto w_{\alpha}$. If the process was ``global'', then the same number of copies should be created for all the microsystems and usual probabilities (\ref{eq: usual probability in =00005B1=00005D}) would be recovered at $k_{B}\rightarrow 0$. 

Both the real weights (\ref{eq: filter measure}) and the tropical $w_{0}$ take values in $\{0,1\}$. In particular, $\tau$ is a weaker
version of usual probability (\ref{eq: usual probability in =00005B1=00005D}), since it only distinguishes between sure and not sure events. $\{w_{0,\alpha}\}$ is a weaker version of $W_{0,\mathrm{tr}}$ in (\ref{eq: tropical probabilities in =00005B1=00005D}),
since it only provides information on the existence of an element
of $\mathfrak{m}_{0}$ in $X$. In this regard, the procedure used
to derive $w_{0}$ can be called \textit{dequantification}. It should be stressed that the occurrence of a tropical possibility distribution is consistent with the choice of $\mathrm{T}$-closed sets (\ref{eq: replica closure}) as measurable sets.

Also the way in which the dequantification limit is approached is
easily linked to a local tropicalization (\ref{eq: tropicalizations reals, local}). In fact, one can first choose an enumeration for $\mathbb{Q}$,
that is a bijection from $\mathbb{N}$ to $\mathbb{Q}$. Then the copying process (\ref{eq: tropical copies}) moves towards the choice $\Lambda=\mathbb{Q}$
in (\ref{eq: principal filters and dual}). Indeed, once the tropical limit
is reached one has $F_{\alpha}(T)<F_{\beta}(T)$ if and only if $\phi\left(F_{\beta}(T)\right)\subset\phi\left(F_{\alpha}(T)\right)$
and $\iota\left(F_{\alpha}(T)\right)\subset\iota\left(F_{\beta}(T)\right)$,
since the rationals are dense in $\mathbb{R}$.

\section{\label{sec: Conclusions} Conclusions and future perspectives} 

This work was aimed at investigating the links between tropical limit, algebra and statistical physics. 
The above discussion suggests that some physical phenomena can take advantage from a tropical description. A simple algebraic assumption provides a framework where the concepts of dominance, hierarchical distance and composition can be discussed simultaneously. Connections with physical issues can be recognized when one deals with systems that exhibit  ultrametricity, exponential degenerations of energy levels and metastability. 

This opens the way to other questions and proposals. First, it is worth extending the correspondence between elements and subsets looking at other set-theoretic notions. In particular, given a family of sets $\left(\Omega_{n}\right)_{n\in\mathbb{I}}$
indexed by $\mathbb{I}$, one could consider, for any element ${\displaystyle \alpha\in\bigcup_{n}\Omega_{n}}$, a ``dual'' cardinality $\#\alpha$ related to the number of sets $\Omega_{n}$ containing
$\alpha$. If one assumes that each total cardinality ${\displaystyle \sum_{\Omega_{n}:\,\alpha\in\Omega_{n}}\#\alpha=1}$ is 
independent on the number of sets, e.g. ${\displaystyle \#\alpha=\frac{1}{\#\left\{ \Omega_{n}:\,\alpha\in\Omega_{n}\right\} }}$, and the family $(\Phi,\underset{\mathcal{N}\mbox{ times}}{\underbrace{\{\alpha\},\{\alpha\},\dots,\{\alpha\}}})$ for $\mathcal{N}$ copies of the $\alpha$-th microsystem (\ref{eq: tropical copies set}) 
is considered, then $\#\beta=1$ at
$\beta\neq\alpha$ and ${\displaystyle \#\alpha=\frac{1}{\mathcal{N}+1}}$.
The limit ${\displaystyle \frac{1}{\mathcal{N}}\rightarrow0}$ for such a procedure could be formalized in order to understand better the physical meaning behind the $n\rightarrow0$ limit for the dimension of the overlap matrix in replica trick and spin glasses \cite{Parisi1980}, so it deserves a more detailed investigation.

On a broader level, these tools can be useful in the comprehensive
study of different features of complexity. The main advantage pertains to the relation between structural complexity and algebraic rules. The former is the ``hardware'' of a system, e.g. the geometry of a complex networks, and ultrametricity often has a key part in this context. The latter define associative processes, that is the ``software'', and give a basis for extended logics \cite{GuidoToto2008}, including fuzzy logic. So, a tropical micro-macro correspondence and associated tools (e.g., perturbative tropical limit in Section \ref{sec: Perturbative tropical limit})
can help explain connections between the physical structure of complex systems and their underlying logic. This also comes with the dimensionality issue induced by the limit $k_{B}\rightarrow0$ for Boltzmann constant, as already noticed in Sections \ref{sec: Tropical limit and the role of Boltzmann constant} and \ref{subsec: Local tropical symmetry and the dequantification procedure }.
All of this could give new hints on the theoretical framework for
the effectiveness of many methods of statistical physical in current
learning models. 

\section*{Acknowledgements}

I am grateful to Prof. Boris Konopelchenko and Prof. Giulio Landolfi
for their kind comments and continuous support. Part
of this work was written during a research visit at the University
of Loughborough. I would like to thank Prof. Eugene Ferapontov and
Matteo Casati for kind hospitality.

\appendix

\section{\label{sec: Appendix A} Proof of Proposition \ref{prop: filters and ultrametrics}}

Let us first assume that $d(\Omega^{2})$ is grounded. Then, one can check that  
\begin{enumerate}
\item $\mathcal{B}$ is non-empty. Indeed $\Omega\neq\emptyset$, $0\in d(x_{0},\cdot)$ and for all $x_{0}\in\Omega$ the singletons $\{x_{0}\}=\left\{ x\in\Omega:\,d(x,x_{0})\leq0\right\} =S(x_{0},0)$ belong to $\mathcal{B}$.  
\item $\Omega\notin\mathcal{B}$. Indeed, let us assume the contrary and suppose that there exists $x_{0}\in\Omega$ and $r\in d(x_{0},\cdot)$ such that $S(x_{0},r)=\Omega$. Since there is no maximum for $d(\Omega^{2})$,
there exists a pair $(x_{1},x_{2})$ such that $d(x_{1},x_{2})>r$.
But $x_{1},x_{2}\in\Omega=S(x_{0},r)$, then $r<d(x_{1},x_{2})\leq\max\{d(x_{1},x_{0}),d(x_{2},x_{0})\}\leq r$,
contradiction. 
\item The union of any two elements of $\mathcal{B}$ is contained in an element of $\mathcal{B}$. In fact, let us take $S(x_{0},r)$ and $S(y_{0},s)$ with $x_{0},y_{0}\in\Omega$, $r\in d(x_{0},\cdot)$ and $s\in d(y_{0},\cdot)$. So define $M:=\max\{r,s,d(x_{0},y_{0})\}$. Clearly $M\in\left\{ r,s,d(x_{0},y_{0})\right\} \subseteq d(x_{0},\cdot)\cup d(y_{0},\cdot)$ and $S(x_{0},r)\subseteq S(x_{0},M)$, as follows from the definition. Moreover, if $y\in S(y_{0},s)$,  then $d(y,x_{0})\leq\max\{d(y,y_{0}),d(y_{0},x_{0})\}\leq\max\{s,d(x_{0},y_{0})\}\leq M$. Thus, $S(x_{0},r)\cup S(y_{0},s)\subseteq S(x_{0},M)=S(y_{0},M)$, that is $S(x_{0},r)\cup S(y_{0},s)\subseteq S(\bar{x},M)$ with $\bar{x}\in\{x_{0},y_{0}\}$ and $M\in d(\bar{x},\cdot)$. 
\end{enumerate}
So, let us consider the closure of $\mathcal{B}$ under subsets, i.e. $\displaystyle \mathcal{I}:=\left\{ T:\,T\subseteq A,\,A\in\mathcal{B}\right\}$. It satisfies downward closedness by construction and $\Omega\notin\mathcal{I}$ since $\Omega\not\subseteq A$ for all $A\in\mathcal{B}$. Moreover, if $A,B\in\mathcal{I}$, then there exist $x_{A},x_{B}\in\Omega$, $r_{A}\in d(x_{A},\cdot)$ and $r_{B}\in d(x_{B},\cdot)$ such that $A\subseteq S(x_{A},r_{A})$ and $B\subseteq S(x_{B},r_{B})$. By previous observations, there exist $x_{0}\in\Omega$ and $r\in d(x_{0},\cdot)$ such that $A\cup B\subseteq S(x_{A},r_{A})\cup S(x_{B},r_{B})\subseteq S(x_{0},r)$, then $A\cup B\in\mathcal{I}$. This means that $\mathcal{I}$ is an ideal and $\mathcal{F}:=\left\{ \Omega\backslash A:\,A\in\mathcal{I}\right\}$  is a filter by Lemma \ref{lem: ideals/filters power set case}. On
the other hand, if $d(\Omega^{2})$ is not grounded, then there exists
$(\bar{x},\bar{y})\in\Omega^{2}$ such that $d(x,y)\leq d(\bar{x},\bar{y})$
for all $x,y\in\Omega$. This means that $S(\bar{x},d(\bar{x},\bar{y}))=\left\{ x\in\Omega:\,d(x,\bar{x})\leq d(\bar{x},\bar{y})\right\} =\Omega$. So $\Omega\in\mathcal{B}$ and the closure of $\mathcal{B}$ under subsets is the trivial ideal $\mathcal{P}(\Omega)$. 

Now, let $\mathcal{I}$ be an ideal and consider any decreasing positive
function $\mathfrak{d}:\,(\mathcal{I},\subseteq)\longrightarrow(\mathbb{R}_{+}^{\wedge},\leq)$
with $\inf\mathfrak{d}(\mathcal{I})>0$. We will denote with $\mathfrak{D}:\,\Omega^{2}\backslash\{(x,x):\,x\in\Omega\}\longrightarrow\mathcal{P}(\mathbb{R})$
the map 
\begin{equation}
{\displaystyle \mathfrak{D}(x,y):=\left\{ \mathfrak{d}(A):\,A\in\mathcal{I},\,\{x,y\}\subseteq A\right\} .}\label{eq: set of (x,y)-coverings}
\end{equation}
If $x_{1}\neq x_{2}$ then there exist $A_{i}\in\mathcal{I}$ such
that $x_{i}\in A_{i}$, $i\in\{1,2\}$, since ${\displaystyle \Omega=\bigcup_{A\in\mathcal{I}}A}$.
Thus $A_{1}\cup A_{2}\in\mathcal{I}$ from upward directedness of
ideals and $\{x_{1},x_{2}\}\subseteq A_{1}\cup A_{2}$. This means
that $\mathfrak{D}(x_{1},x_{2})\neq\emptyset$, so $d(x_{1},x_{2})\geq\inf\mathfrak{d}(\mathcal{I})>0$.
Moreover, $d$ is symmetric in its entries. Then, for all $x\neq y\neq z\neq x$
in $\Omega$ one finds 
\begin{eqnarray}
 & & \max\{d(x,y),d(y,z)\}\nonumber \\
 & = & \max\left\{ \inf\mathfrak{D}(x,y),\inf\mathfrak{D}(y,z)\right\} \nonumber \\
 & = & \inf\left\{ \max\{\mathfrak{d}(A),\mathfrak{d}(B)\}:\,A,B\in\mathcal{I},\,\{x,y\}\subseteq A,\,\{y,z\}\subseteq B\right\} \label{eq: derivation ultrametric triangle inequality, 1} 
\end{eqnarray}
as follows from the complete distributivity of $\max$ over arbitrary
$\inf$ in the lattice  $\left(\left\{ x\in\mathbb{R}_{+}^{\wedge}:\,x\geq\right.\right.$ $\left.\left.\inf\mathfrak{d}(\mathcal{I})\right\} ,\leq\right)$.
If $A,B\in\mathcal{I}$, then $A\cup B\in\mathcal{I}$, hence downward
closedness of ideals implies $A\cup\{z\}\in\mathcal{I}$ for all $z\in B$.
Thus, decreasing monotony of $\mathfrak{d}$ gives 
\begin{eqnarray}
 & &\inf\left\{ \max\{\mathfrak{d}(A),\mathfrak{d}(B)\}:\,A,B\in\mathcal{I},\,\{x,y\}\subseteq A,\,\{y,z\}\subseteq B\right\} \nonumber \\
& \geq & \inf\left\{ \max\{\mathfrak{d}(A\cup\{z\}),\mathfrak{d}(B\cup\{x\})\}:\,A,B\in\mathcal{I},\,\{x,y\}\subseteq A,\,\{y,z\}\subseteq B\right\} \nonumber \\
& = & \inf\left\{ \mathfrak{d}(A):\,A\in\mathcal{I},\,\{x,y,z\}\subseteq A\right\} \nonumber \\
& \geq & \inf\left\{ \mathfrak{d}(A):\,A\in\mathcal{I},\,\{x,z\}\subseteq A\right\} =d(x,z).\label{eq: derivation ultrametric triangle inequality, 2} 
\end{eqnarray}
where last inequality comes from $\{A\in\mathcal{I},\,\{x,y,z\}\subseteq A\}\subseteq\{A\in\mathcal{I},\,\{x,z\}\subseteq A\}.$
Indeed, it is an equality because of decreasing monotony of $\mathfrak{d}$.
Finally, with a slight abuse of notation, one can denote by the same
symbol the extension of $d$ to $\Omega^{2}$ such that $d(x,x)=0$,
$x\in\Omega$. Thus $d$ is symmetric in its arguments, vanishes if
$x=y$, is positive if $x\neq y$, and verifies the ultrametric triangle
inequality. So, $d$ is an ultrametric.

\section{\label{sec: Appendix B} Proof of Proposition \ref{prop: homomorphism totally ordered sets}}

First, it is worth pointing out the following observation. Let $(\Lambda,\preceq)$ be a join-complete semilattice. If  $z_{1}:=\sup\{\sup X,\sup Y\}$, then $x\preceq\sup X\preceq z_{1}$
for all $x\in X$ and $y\preceq\sup Y\preceq z_{1}$ for all $y\in Y$.
Thus $u\preceq z_{1}$ for all $u\in X\cup Y$ and $z_{2}:=\sup(X\cup Y)\preceq z_{1}$.
On the other hand, from $\sup X\preceq\sup(X\cup Y)=z_{2}$ and similarly
$\sup Y\preceq z_{2}$ one has $z_{1}=\sup\{\sup X,\sup Y\}\preceq z_{2}$.
This means that 
\begin{equation}
\sup\{\sup X,\sup Y\}=\sup\{X\cup Y\}.\label{eq: sup sups =00003D sup join}
\end{equation}  

So, let us move to the proof of the proposition. Let $\Delta$ be a totally ordered set. 
\begin{enumerate}
\item Let us assume that $\psi:\,(\Delta,\max,-\infty)\longrightarrow(\Lambda,\oplus,\bot)$
is a monoid homomorphism and take any $\psi(a),\psi(b)\in\psi(\Delta)$.
From $\max\{a,b\}\in\{a,b\}$ one gets $\psi(a)\oplus\psi(b)=\psi(\max\{a,b\})\in\{\psi(a),\psi(b)\}$,
then $\psi(\Delta)$ is totally ordered. 
\item Now let $\phi_{\vartheta}$ in (\ref{eq: ideals for filter presentation})
be a monoid homomorphism, so $\vartheta(\max\{a,b\})\in\phi_{\vartheta}(\max\{a,b\})=\phi_{\vartheta}(a)\cup\phi_{\vartheta}(b)$.
Hence $\vartheta(\max\{a,b\})\preceq\sup\{\vartheta(a),\vartheta(b)\}$.
Moreover, if $u,v\in\Delta$ and $\max\{u,v\}=v$, then $\phi_{\vartheta}(u)\subseteq\phi_{\vartheta}(v)$.
So $\vartheta(u)\in\phi_{\vartheta}(u)\subseteq\phi_{\vartheta}(v)$
means that $\vartheta(u)\preceq\vartheta(v)$ and $\vartheta$ is
increasing. Thus, $\vartheta(a)\preceq\vartheta(\max\{a,b\})$ and
$\vartheta(b)\preceq\vartheta(\max\{a,b\})$, i.e. $\sup\{\vartheta(a),\vartheta(b)\}\preceq\vartheta(\max\{a,b\})$.
One finally gets $\sup\{\vartheta(a),\vartheta(b)\}=\vartheta(\max\{a,b\})$,
that means $\vartheta(a)\oplus\vartheta(b)=\vartheta(\max\{a,b\})$
by (\ref{eq: sum sup}). Furthermore, from $\vartheta(-\infty)\in\phi_{\vartheta}(-\infty)$
and the homomorphism condition $\phi_{\vartheta}(-\infty)=\{\bot\}$
one deduces $\vartheta(-\infty)=\bot$. So the mapping $\vartheta$
is a monoid homomorphism and the poset $\vartheta(\mathbb{R}^{\vee})$
is a totally ordered set in $\Lambda$. 
\item Let us introduce $\hat{\psi}:=\mathrm{id}\circ\psi:\,\Delta\longrightarrow\mathcal{P}(\hat{\Lambda})$,
where $\hat{\Lambda}$ is defined in (\ref{eq: almost complete, extension})
and $\mathrm{id}:\,\mathcal{P}(\Lambda)\hookrightarrow\mathcal{P}(\hat{\Lambda})$
is the immersion $S\subseteq\Lambda\mapsto S\subseteq\hat{\Lambda}$.
The mapping $\hat{\vartheta}(a):=\sup\hat{\psi}(a)$, $a\in\Delta$,
satisfies $\hat{\vartheta}(-\infty)=\sup\hat{\psi}(-\infty)=\sup\{\bot\}=\bot$
and 
\begin{eqnarray}
\hat{\vartheta}(a)\oplus\hat{\vartheta}(b)&=&\sup\{\hat{\vartheta}(a),\hat{\vartheta}(b)\}=\sup(\hat{\psi}(a)\cup\hat{\psi}(b))\nonumber \\
&=&\sup\hat{\psi}(\max\{a,b\}).\label{eq: compatibility reduced monoid homomorphism}
\end{eqnarray}
Previous equalities come from (\ref{eq: sum sup}), (\ref{eq: sup sups =00003D sup join}),
the definition of $\hat{\vartheta}$ and the assumption that $\psi$
is a monoid homomorphism, which implies that $\hat{\psi}$ is a
monoid homomorphism too. Hence $\hat{\vartheta}:\,\Delta\longrightarrow\hat{\Lambda}$
is a monoid homomorphism. Let $\Delta_{0}:=\hat{\vartheta}^{(-1)}(\Lambda)$,
that is a totally ordered subset of $\Delta$ with $-\infty \in \hat{\vartheta}^{(-1)}(\bot)\subseteq\Delta_{0}$. One can now define
the restriction $\vartheta:=\hat{\vartheta}|_{\Delta_{0}}$ and the
map $\bar{\iota}:=\iota\circ\vartheta:\,\Delta_{0}\overset{\vartheta}{\longrightarrow}\Lambda\overset{\iota}{\longrightarrow}\mathcal{P}(\Lambda)$.
From part (1.), the set $\vartheta(\Delta_{0})$ is totally ordered,
so the restriction of $\iota$ to $\vartheta(\Delta_{0})$ is a monoid
homomorphism as follows from Proposition \ref{prop: homomorphism and linear order}.
Thus $\bar{\iota}$ is a monoid homomorphism too and $\psi(a)\subseteq\bar{\iota}(a)\cup\{\vartheta(a)\}=\phi_{\vartheta}(a)$. 
\end{enumerate}

\section{\label{sec: Appendix C} Proof of Proposition \ref{prop: trivial naive perturbation at m>1}}

Let us write $k:=k_{B}$ and $\displaystyle \partial_{k}:=\frac{d}{dk}$ for notational convenience and make explicit reference to the temperature $T$ introducing $\displaystyle F_{\alpha}:=T\cdot f_{\alpha}$.
One finds that tropical free energy corresponds to the $0$-th order
term in the expansion ${\displaystyle \frac{F}{T}\Big|_{k=0}=\min\left\{ \frac{F_{\alpha}}{T}:\,\alpha\in[N]\right\} =\kappa_{0}}$.
First order correction corresponds to statistical prefactors in \cite{AK2015}
and they coincide with results from standard perturbation theory.
In fact, one has 
\begin{eqnarray}
\left(\partial_{k}\frac{F}{T}\right)\Big|_{k=0} & = & \left[\partial_{k}(-k\ln\mathcal{Z})\right]|_{k=0}\nonumber \\
& = & -\ln\left(\lambda_{0}+{\displaystyle \sum_{\alpha\in\mathfrak{s}_{0}}\exp}{\displaystyle \frac{T\kappa_{0}-F_{\alpha}}{kT}}\right)\Big|_{k=0}-k\cdot\partial_{k}\ln\mathcal{Z}|_{k=0} \nonumber \\
& = & -\ln\lambda_{0}-\left.\frac{\sum_{\alpha\in\mathfrak{s}_{0}}\frac{F_{\alpha}-T\kappa_{0}}{kT}\cdot\exp\frac{T\kappa_{0}-F_{\alpha}}{kT}}{\lambda_{0}+\sum_{\alpha\in\mathfrak{s}_{0}}\exp\frac{T\kappa_{0}-F_{\alpha}}{kT}}\right|_{k=0}=-\ln\lambda_{0}.\label{eq: first order correction tropical} 
\end{eqnarray}
Now let us consider higher order contributes. One has ${\displaystyle \partial_{k}^{m}\frac{F(k)}{T}=\partial_{k}^{m}\left(\frac{F(k)}{T}-\kappa_{0}\right)}$,
so we can take ${\displaystyle \kappa_{0}=\frac{F_{\mathrm{trop}}}{T}\equiv0}$
without loss of generality. In particular, this means that ${\displaystyle \lim_{k\rightarrow0^{+}}\mathcal{Z}(k)=\lambda_{0}\neq0}$.
Thus, one has 
\begin{equation}
\left.\partial_{k}^{m}\mathcal{Z}\right|_{k=0}=\left.\sum_{\alpha\in\mathfrak{s}_{0}}\partial_{k}^{m}\exp\frac{-F_{\alpha}}{kT}\right|_{k=0}=\left.\sum_{\alpha\in\mathfrak{s}_{0}}Q_{m}\left(k^{-1};-\frac{F_{\alpha}}{T}\right)\cdot\exp{\displaystyle \frac{-F_{\alpha}}{kT}}\right|_{k=0}=0\label{eq: perturbation Z m-order}
\end{equation}
where ${\displaystyle Q_{m}\left(k^{-1};-\frac{F_{\alpha}}{T}\right)}$
is a polynomial in $k^{-1}$ with coefficients that depend on $-{\displaystyle \frac{F_{\alpha}}{T}}$.
Moreover ${\displaystyle \partial_{k}\left(\frac{1}{\mathcal{Z}}\right)=-\frac{\partial_{k}\mathcal{Z}}{\mathcal{Z}^{2}}\underset{{\scriptstyle k\rightarrow0}}{\longrightarrow}0}$.
Now assume that $\partial_{k}^{l}{\displaystyle \frac{1}{\mathcal{Z}}=0}$
for all $1\leq l\leq m-1$. Then 
\begin{equation}
0=\partial_{k}^{m} \left(\frac{1}{\mathcal{Z}}\cdot\mathcal{Z}\right)=\sum_{l=0}^{m}{{m}\choose{l}}\partial_{k}^{l}\frac{1}{\mathcal{Z}}\cdot\partial_{k}^{m-l}\mathcal{Z}.
\label{eq: eq: perturbation Z m-order, 2} 
\end{equation}
From the inductive hypothesis one gets ${\displaystyle 0=\frac{1}{\mathcal{Z}}\cdot\partial_{k}^{m}\mathcal{Z}+\mathcal{Z}\cdot\partial_{k}^{m}\frac{1}{\mathcal{Z}}}$.
But $\partial_{k}^{m}\mathcal{Z}=0$ at $m\geq1$ from (\ref{eq: perturbation Z m-order}) and ${\displaystyle \lim_{k\rightarrow0^{+}}\mathcal{Z}(k)\neq0}$.
Hence ${\displaystyle \partial_{k}^{m}\frac{1}{\mathcal{Z}}=0}$ for
all $m\geq1$ by induction. So $\partial_{k}^{m-l}\mathcal{Z}|_{k=0}=0$ if $m>l$, ${\displaystyle \partial_{k}^{l}\frac{1}{\mathcal{Z}}}$
is vanishing if $l>0$ and finite at $l=0$. Given that 
\begin{equation}
\partial_{k}^{m}\ln\mathcal{Z}=\partial_{k}^{m-1}\frac{\partial_{k}\mathcal{Z}}{\mathcal{Z}}=\sum_{l=0}^{m-1}{{m-1}\choose{l}}\partial_{k}^{l}\frac{1}{\mathcal{Z}}\cdot\partial_{k}^{m-l}\mathcal{Z},
\label{eq: eq: perturbation Z m-order, 3}
\end{equation}
this means that $\partial_{k}^{m}\ln\mathcal{Z}=0$
for all $m\geq1$. Thus
\begin{eqnarray}
\partial_{k}^{m}\frac{F}{T} & = & \partial_{k}^{m}(-k\ln\mathcal{Z})=-\sum_{l=0}^{m}{{m}\choose{l}}\cdot\partial_{k}^{l}k\cdot\partial_{k}^{m-l}\ln\mathcal{Z}\nonumber \\ 
& = & -k\cdot\partial_{k}^{m}\ln\mathcal{Z}-m\cdot\partial_{k}^{m-1}\ln\mathcal{Z}\label{eq: m-th k-derivative F/T}
\end{eqnarray}
which vanishes at $k_{B}\rightarrow0^{+}$ and $m>1$. 




\begin{thebibliography}{99}

\bibitem{Bartlett1980} Bartlett M S 1980 \textit{Probability, Statistics and Time}
(\href{http://www.doi.org/10.1007/978-94-009-5889-0}{Springer Nature})

\bibitem{Posch1997} Posch H A, Dellago Ch, Hoover W G and Kum O 1997 Microscopic Time-Reversibility and Macroscopic Irreversibility - Still a Paradox? 
in \textit{Pioneering Ideas for the Physical and Chemical Sciences}
(New York: \href{http://doi.org/10.1007/978-1-4899-0268-9_24}{Springer US}) pp 233-248 

\bibitem{Crooks1999} Crooks G 1999 Entropy production fluctuation
theorem and the non-equilibrium work relation for free energy differences 
\textit{Phys. Rev. E} \href{http://doi.org/10.1103/physreve.60.2721}{\textbf{60} 3} pp 2721-2726

\bibitem{Gray2011} Gray R M 2011 \textit{Entropy and Information Theory}
(New York: \href{http://doi.org/10.1007/978-1-4419-7970-4}{Springer US})

\bibitem{Holzinger2014} Holzinger A, H\"{o}rtenhuber M, Mayer C, Bachler M, Wassertheurer S, Pinho A J and Koslicki D 2014 On Entropy-Based Data Mining,
in \textit{Interactive Knowledge Discovery and Data Mining in Biomedical Informatics}
(\href{http://doi.org/10.1007/978-3-662-43968-5_12}{Springer Nature}) pp 209-226 

\bibitem{Kahraman2016} Kahraman C, Kaymak U and Yazici A 2016 \textit{Fuzzy Logic in Its 50th Year},
Studies in Fuzziness and Soft Computing \href{http://doi.org/10.1007/978-3-319-31093-0}{\textbf{341}} (Switzerland: Springer International Publishing)

\bibitem{Nicolis2012} Nicolis G and Nicolis C 2012 \textit{Foundations of complex systems: emergence, information and predicition} 
(Singapore: \href{http://doi.org/10.1142/9789814366618}{World Scientific})

\bibitem{MS2015} Maclagan D and Sturmfels B 2015 \textit{Introduction
to Tropical Geometry}, Graduate Studies in Mathematics \textbf{161} (Providence, RI: American Mathematical Society). 

\bibitem{AK2015} Angelelli M and Konopelchenko B 2015 Tropical
Limit in Statistical Physics \textit{Phys. Lett. A}
\href{http://doi.org/10.1016/j.physleta.2015.04.003}{\textbf{379} 24} pp 1497-1502

\bibitem{Sturmfels2004} Pachter L and Sturmfels B 2004 Tropical
geometry of statistical models \textit{PNAS} 
\href{http://doi.org/10.1073/pnas.0406010101}{\textbf{101} 46} pp 16132-16137

\bibitem{Kapranov2011} Kapranov M 2011 Thermodynamics and the
moment map \textit{arXiv preprint} {\texttt{http://www.arxiv.org/abs/1108.3472}} 

\bibitem{Marcolli2014} Marcolli M and Thorngren R 2014 Thermodynamic
semirings \textit{J. Non-commutative Geometry}
\href{http://doi.org/10.4171/jncg/159}{\textbf{8} 2} (2014) pp 337-392

\bibitem{Pauling1935} Pauling L 1935 The structure and entropy of ice and of other crystals with some randomness of atomic arrangement  
\textit{J. Am. Chem. Soc.} \href{http://doi.org/10.1021/ja01315a102}{\textbf{57} 12} pp 2680-2684

\bibitem{Diep2005}  Diep,H T (Ed.) 2005 \textit{Frustrated spin sytems}
(Singapore: \href{http://doi.org/10.1142/9789812567819}{World Scientific})

\bibitem{GutzowSchemler2009} Gutzow I and Schmelzer J W P 2009 The third principle of thermodynamics and the zero-point entropy of glasses 
\textit{J. Non-Crystalline Solids} \href{http://doi.org/10.1016/j.jnoncrysol.2009.01.026}{\textbf{355} 10-12} pp 581-594

\bibitem{Ramsey1956} Ramsey N F 1956 Thermodynamics and statistical Mechanics at negative absolute temperatures 
\textit{Phys. Rev.} \href{http://doi.org/10.1103/physrev.103.20}{\textbf{103} 1} pp 20-28

\bibitem{Rumer1960} Rumer Yu B 1960 Negative and limiting temperatures 
\textit{JEPT} \textbf{38} pp 1899-1902 

\bibitem{GuidoToto2008} Guido C and Toto P 2008 Extended-order algebras 
\textit{J. Appl. Logic} \href{http://doi.org/10.1016/j.jal.2008.01.001}{\textbf{6} 4} (2008) pp 609-626  

\bibitem{LS2014} Litvinov G  L and Sergeev S N 2014 \textit{Tropical and Idempotent Mathematics and Applications}
Contemporary Mathematics \href{http://doi.org/10.1090/conm/616}{\textbf{616}} (American Mathematical Society) 

\bibitem{Mendelson2015} Mendelson E 2015 \textit{Introduction to Mathematical Logic, Sixth Edition} Discrete Mathematics and Its Applications 
(Chapman and Hall/CRC) 

\bibitem{Jaynes2003} Jaynes E T 2003 \textit{Probability Theory - The Logic of Science} 
 Bretthorst G L (Ed) (Cambridge, UK: \href{http://doi.org/10.1017/CBO9780511790423}{Cambridge University Press}) 

\bibitem{Langer1969} Langer J S 1969 Statistical theory of
the decay of metastable states \textit{Ann. Phys.} 
\href{http://doi.org/10.1016/0003-4916(69)90153-5}{\textbf{54} 2} pp 258-275 

\bibitem{Newman1980} Newman C M and Schulman L S 1980 Complex
free energies and metastable lifetimes \textit{J. Stat. Phys.}
\href{http://doi.org/10.1007/bf01012588}{\textbf{23} 2} pp 131-148 

\bibitem{Newman1997} Newman C and Stein D L 1997 Metastate
approach to thermodynamic chaos \textit{Phys. Rev. E}
\href{https://doi.org/10.1103/physreve.55.5194}{\textbf{55} 5 A} pp 5194-5211 

\bibitem{Derrida1983} Derrida B, De Seze L and Itzykson C 1983 Fractal
structure of zeros in hierarchical models \textit{J. Stat. Phys.}
\href{http://doi.org/10.1007/bf01018834}{\textbf{33} 3} pp 559-569 

\bibitem{MPV1987} M\'{e}zard M, Parisi G and Virasoro M A 1987 \textit{Spin glass theory and beyond}
(Singapore: World Scientific) 

\bibitem{BinderYoung1986} Binder K and Young A P 1986 Spin glasses: Experimental facts, theoretical concepts, and open questions 
\textit{Rev. Mod. Phys.}
\href{http://doi.org/10.1103/revmodphys.58.801}{\textbf{58} 4} pp 801-976 

\bibitem{RyuTakayanagi2006} Ryu S and Takayanagi T 2006 Aspects of holographic entanglement entropy 
\textit{JHEP} \href{http://doi.org/10.1088/1126-6708/2006/08/045}{\textbf{2006} 
08} 045 

\bibitem{Mezard2014} Del Ferraro G, Wang C, Martí D and M\'{e}zard M 2014 
Cavity Method: Message Passing from a Physics Perspective 
\textit{arXiv preprint} {\texttt{http://arxiv.org/abs/1409.3048}} 

\bibitem{Mezard2002} M\'{e}zard M 2002 Analytic and Algorithmic Solution of Random Satisfiability Problems 
\textit{Science} \href{http://doi.org/10.1126/science.1073287}{\textbf{297}} 5582 pp 812-815

\bibitem{Parisi1980} Parisi G 1980 A sequence of approximated
solutions to the SK model for spin glasses \textit{J. Phys. A: Math. Gen.}
\href{http://doi.org/10.1088/0305-4470/13/4/009}{\textbf{13} 4} L115-L121

\bibitem{Mezard2005} Braunstein A 2005 M\'{e}zard M and Zecchina R 2005 Survey Propagation: An Algorithm for Satisfiability \textit{Random Structures and Algorithms}
\href{http://doi.org/10.1002/rsa.20057}{\textbf{27} 2} pp 201-226 

\bibitem{Sobotta1986} Sobotta G 1986 Pattern recognition applied to the $+\slash-$ $J$-model: Ground states and residual entropy 
\textit{Physica A: Stat. Mech. App.} 
\href{http://doi.org/10.1016/0378-4371(86)90265-7}{\textbf{136} 2} pp 525-536 

\bibitem{Grossman1989} Grossman B 1989 The origin of the ultrametric topology of spin glasses 
\textit{J. Phys. A: Math. Gen.} \href{http://doi.org/10.1088/0305-4470/22/1/006}{\textbf{22} 1} L33

\bibitem{Ruelle1987} Ruelle D 1987 A mathematical reformulation of Derrida's REM and GREM 
\textit{Commun. Math. Phys.} \href{http://doi.org/10.1007/BF01210613}{\textbf{108} 
2} pp 225-239

\bibitem{Cohen1999} Cohen G, Gaubert S and Quadrat J-P 1999 Max-plus algebra and system theory: Where we are and where to go now 
\textit{Annu. Rev. Control}
\href{http://doi.org/10.1016/s1367-5788(99)90091-3}{\textbf{23} 1} pp 207-219

\bibitem{Grammaticos1997} Grammaticos B, Ohta Y, Ramani A, Takahashi D and Tamizhmani K M 1997 Cellular automata and ultra-discrete
Painlev\'{e} equations \textit{Phys. Lett. A} 
\href{http://doi.org/10.1016/s0375-9601(96)00934-6}{\textbf{226} 1} pp 53-58 

\bibitem{Inoue2012} Inoue R, Kuniba A and Takagi T 2012 Integrable
structure of box-ball systems: crystals, Bethe ansatz, ultradiscretization
and tropical geometry \textit{J. Phys. A: Math. Theor.}
\href{http://doi.org/10.1088/1751-8113/45/7/073001}{\textbf{45} 7} (2012) 073001  

\bibitem{Lavenda1982} Lavenda B H and Santamato E 1982 Thermodynamic criteria governing irreversible processes under the influence of small thermal fluctuations \textit{J. Stat. Phys.} \href{http://doi.org/10.1007/bf01020791}{\textbf{29} 2} pp 345-361

\bibitem{Schlogl1988} F. Schl\"{o}gl F 1988 Thermodynamic uncertainty relation \textit{J. Phys. Chem. Sol} \href{http://doi.org/10.1016/0022-3697(88)90200-4}{\textbf{49} 6} pp 679-683

\bibitem{VelazquezAbad2012} Velazquez Abad L 2012 Principles of classical statistical mechanics: A perspective from the notion of complementarity \textit{Ann. Phys.} \href{http://doi.org/10.1016/j.aop.2012.03.002}{\textbf{327} 6} pp 1682-1693

\bibitem{Frenkel2015} Frenkel D and Warren P B 2015  
Gibbs, Boltzmann, and negative temperatures 
\textit{Am. J. Phys.} \href{http://doi.org/10.1119/1.4895828}{\textbf{83}
2} pp 163-170

\bibitem{DunkelHilbert2014} Dunkel J and Hilbert S 2014 Consistent thermostatistics forbids negative absolute temperatures
\textit{Nature Physics} \href{http://doi.org/10.1038/nphys2815}{\textbf{10} 1} pp 67-72

\bibitem{Schneideretal2014} Schneider U, Mandt S, Rapp A, Braun S,
 Weimer H, Bloch I and Rosch A 2014 Comment on ``Consistent thermostatistics forbids negative absolute temperatures"
\textit{arXiv preprint} \href{http://arxiv.org/abs/1407.4127}{1407.4127} 

\bibitem{Buonsante2016} Buonsante P, Franzosi R and Smerzi A 2016 On the dispute between Boltzmann and Gibbs entropy, \textit{Ann. Phys.} \href{https://doi.org/10.1016/j.aop.2016.10.017}{\textbf{375}} pp 414-434

\bibitem{Zadeh1965} Zadeh L A 1965 Fuzzy sets \textit{Information
and Control} \textbf{8} 3 pp 338-353

\bibitem{Klir1988} Klir G J and Folger T A 1988 \textit{Fuzzy Sets,
Uncertainty, and Information} (Prentice Hall)

\bibitem{Yasuda2012} Yasuda M 2012 Fuzzy c-Means Clustering, Entropy Maximization, and Deterministic and Simulated Annealing
\textit{InTech Simulated Annealing - Advances, Applications and Hybridizations}
(\href{http://doi.org/10.5772/48659}{Open Access Publisher})

\bibitem{Miyamoto1997} Miyamoto S and Mukaidono M 1997 Fuzzy c-means as a regularization and maximum entropy approach (\textit{IFSA'97 Prague: Proceedings of the 7th IFSA World Congress}) 

\bibitem{Landauer1961} Landauer R 1961 Irreversibility and Heat Generation in the Computing Process
\textit{IBM J. Res. Dev.} \href{http://doi.org/10.1147/rd.53.0183}{\textbf{5} 
3} pp 183-191

\bibitem{RTV1986} Rammal R, Toulouse G and Virasoro M A 1986 Ultrametricity for physicists 
\textit{Rev. Mod. Phys.}
\href{http://doi.org/10.1103/revmodphys.58.765}{\textbf{58} 3} pp 765-788

\bibitem{Koblitz1977} Koblitz N 1977 \textit{$p$-adic Numbers, $p$-adic Analysis, and Zeta-Functions} 
Graduate Texts in Mathematics \textbf{58} (Berlin: Springer). 

\bibitem{Khrennikov1994} Khrennikov A Y 1994 \textit{$p$-Adic Valued Distributions in Mathematical Physics} Mathematics and Its Applications  \href{http://doi.org/10.1007/978-94-015-8356-5}{\textbf{309}} (Netherlands: Springer) 

\bibitem{HrbacekJech1999} Hrbacek K and Jech T 1999 \textit{Introduction to Set Theory, Third Edition, Revised and Expanded} Pure and Applied Mathematics 
(Chapman \& Hall/CRC, Taylor \& Francis) 

\bibitem{AK2016b} Angelelli M and Konopelchenko B 2016 Zeros
and amoebas of partition functions \textit{arXiv preprint} \href{http://arxiv.org/abs/1607.05937}{1607.05937}  

\bibitem{AK2016} Angelelli M and Konopelchenko B 2016 Geometry of basic statistical physics mapping 
\textit{J. Phys. A: Math. Theor.}
\href{http://doi.org/10.1088/1751-8113/49/38/385202}{\textbf{49} 38}  385202


\end{thebibliography}
\end{document}